\newtheorem{thm}{Theorem}
\newtheorem{cor}[thm]{Corollary}
\newtheorem{lemma}[thm]{Lemma}
\newtheorem{prop}[thm]{Proposition}
\newtheorem{defn}[thm]{Definition}
\numberwithin{equation}{section}
\newenvironment{reftheorem}[1]{\medskip\parindent 0pt{\bf Theorem \ref{#1}}\em }{\vspace{1em}}
\DeclareMathAlphabet{\mathsfsl}{OT1}{cmss}{m}{sl}
\newcommand{\term}{\emph}
\renewcommand{\phi}{\varphi}
\newcommand{\Rspace}[1]{\mathbb{R}^{#1}}
\newcommand{\vct}[1]{{#1}}
\newcommand{\mtx}[1]{\bm{#1}}
\newcommand{\psinv}{\dagger}
\newcommand{\supp}[1]{\operatorname{supp}(#1)}
\newcommand{\norm}[1]{\left\Vert {#1} \right\Vert}
\newcommand{\pnorm}[2]{\norm{#2}_{#1}}
\newcommand{\Fee}{\mtx{\Phi}}
\newcommand{\poly}{\operatorname{poly}}
\newcommand{\polylog}{\operatorname{polylog}}
\newtheorem{claim}[thm]{Claim}
\newcommand\rtp{\otimes_{\rm r}}                 %row tensor product
\newcommand\rip[3]{{\rm RIP}_{{#1},{#2},{#3}}}   % RIP with decoration
\newcommand{\bigO}{{\rm O}}
\newcommand{\halmos}{\hspace*{\fill}\rule{1ex}{1.4ex}}
\def\newproof#1{\@nprf{#1}}
\def\@nprf#1#2{\expandafter\@ifdefinable\csname #1\endcsname
\global\@namedef{#1}{\@prf{#1}{#2}}\global\@namedef{end#1}{\@endproof}}
\def\@prf#1#2{\@beginproof{#2}{\csname the#1\endcsname}\ignorespaces}
\def\@beginproof#1{\rm \trivlist \item[\hskip \labelsep{\bf #1: }]}
\def\@endproof{\halmos \endtrivlist}
\newcommand{\Ph}{\mtx{\Phi}}
\newcommand{\R}{\mathbb{R}}
\begin{document}

\title[Unified approach to signal recovery] {Combining geometry and combinatorics: \\
a unified  approach to sparse signal recovery }

\author[Berinde, Gilbert, Indyk, Karloff, Strauss]{R.\ Berinde, A.\ C.\ Gilbert, P.\ Indyk, H.~Karloff, and M.\ J.\ Strauss}

\thanks{Berinde is with the Department of Electrical Engineering and
  Computer Science, MIT.  E-mail: \url{texel@mit.edu}.  Gilbert is
  with the Department of Mathematics, The University of Michigan at
  Ann Arbor.  E-mail: \url{annacg}\url{@umich.edu}.  Indyk is with the
  Computer Science and Artificial Intelligence Laboratory, MIT.
  E-mail: \url{indyk@theory.lcs.mit.edu}.  Karloff is with AT\&T
  Labs-Research.  E-mail: \url{howard@research.att.com}.  Strauss is
  with the Department of Mathematics and the Department of Electrical
  Engineering and Computer Science, The University of Michigan at Ann
  Arbor.  E-mail: \url{martinjs@umich.edu}.  ACG is an Alfred P. Sloan
  Research Fellow and has been supported in part by NSF DMS 0354600.
  MJS has been supported in part by NSF DMS 0354600 and NSF DMS
  0510203.  ACG and MJS have been partially supported by DARPA ONR
  N66001-06-1-2011.
  RB and PI are supported in part by David and Lucille Packard Fellowship.
  PI is partially supported by MADALGO (Center for Massive Data Algorithmics, funded by the Danish National Research Association).
}

\date{\today}

\begin{abstract}
  There are two main algorithmic approaches to sparse signal recovery:
  geometric and combinatorial.  The geometric approach starts with a
  geometric constraint on the measurement matrix $\Ph$ and then uses
  linear programming to decode information about $x$ from $\Ph x$.
  The combinatorial approach constructs $\Ph$ and a combinatorial
  decoding algorithm to match.  We present a unified approach to these
  two classes of sparse signal recovery algorithms.

  The unifying elements are the adjacency matrices of high-quality
  {\em unbalanced expanders}.  We generalize the notion of {\em
    Restricted Isometry Property} (RIP), crucial to compressed sensing
  results for signal recovery, from the Euclidean norm to the $\ell_p$
  norm for $p \approx 1$, and then show that unbalanced expanders are
  essentially equivalent to RIP-$p$ matrices.

  From known deterministic constructions for such matrices, we obtain
  new {\em deterministic} measurement matrix constructions and
  algorithms for signal recovery which, compared to previous
  deterministic algorithms, are superior in either the number of
  measurements or in noise tolerance.
\end{abstract}
\keywords{unbalanced expanders, sparse signal recovery, linear programming}
\maketitle
%\thispagestyle{empty}
%\newpage

\section{Introduction}
% what are we doing: looking at sketches of data
With the rise in high-speed data transmission and the exponential
increase in data storage, it is imperative that we develop effective
data compression techniques, techniques which accomodate both the
volume and speed of data streams.  A new approach to compressing
$n$-dimensional vectors (or signals) begins with linear observations
or measurements.  For a signal $x$, its compressed representation is
equal to $\Ph x$, where $\Ph$ is a carefully chosen $m \times n$
matrix, $m \ll n$, often chosen at random from some distribution.  We
call the vector $\Ph x$ the {\em measurement vector} or a {\em sketch}
of $x$.  Although the dimension of $\Ph x$ is much smaller than that
of $x$, it retains many of the essential properties of $x$.

% why is this a good thing: linearity for data streams, physics for DSP
There are several reasons why linear compression or sketching is of
interest.  First, we can easily maintain a linear sketch $\Ph x$ under
linear updates to the signal $x$.  For example, after incrementing the
$i$-th coordinate $x_i$, we simply update the sketch as $\Ph (x+e_i) =
\Ph x + \Ph e_i$.  Similarly, we also easily obtain a sketch of a sum
of two signals given the sketches for individual signals $x$ and $y$,
since $\Ph(x+y)=\Ph x+\Ph y$.  Both properties are very useful in
several computational areas, notably computing over data streams~\cite{AMS99:Space-Frequency,Muthu:survey,SSS}, network measurement~\cite{EV},  query optimization and answering in databases~\cite{AMS99:Space-Frequency}.

Another scenario where linear compression is of key importance is {\em
  compressed
  sensing}~\cite{CRT06:Stable-Signal,Don06:Compressed-Sensing}, a
rapidly developing area in digital signal processing.  In this
setting, $x$ is a physical signal one wishes to sense (e.g., an image
obtained from a digital camera) and the linearity of the observations
stems from a physical observation process.  Rather than first
observing a signal in its entirety and then compressing it, it may be
less costly to sense the compressed version directly via a physical
process.  A camera ``senses'' the vector by computing a dot product
with a number of pre-specified measurement vectors.
See~\cite{TLWDBSKB06:CSCameraArch,DDTLTKB} for a prototype camera
built using this framework.
Other applications of linear sketching include database privacy~\cite{DMK07}.

%The sketch itself is not particularly useful; 
Although the sketch is considerably smaller than the original vector,
we can still recover a large amount of information about $x$.  
See the
surveys~\cite{Muthu:survey,SSS} on streaming and sublinear algorithms
for a broad overview of the area.  In this paper, we focus on
retrieving a {\em sparse approximation} $x_*$ of $x$.  A vector is
called $k$-{\em sparse} if it has at most $k$ non-zero elements in the
canonical basis (or, more generally, $k$ non-zero coefficients in some
basis $B$).  The goal of the sparse approximation is to find a vector
$x_*$ such that the $\ell_p$ {\em approximation error} $\|x-x_*\|_p$
is at most $C>0$ times the smallest possible $\ell_q$ approximation
error $\|x-x'\|_q$, where $x'$ ranges over all $k$-sparse vectors.
Note that the error $\|x-x'\|_q$ is minimized when $x'$ consists of
the $k$ largest (in magnitude) coefficients of $x$.

%For the algorithms given in this paper we have $p=q$. 

There are many algorithms for recovering sparse approximations (or
their variants) of signals from their sketches. The early work on this
topic includes the {\em algebraic} approach
of~\cite{Man}(cf.~\cite{GGIMS}).  Most of the known algorithms,
however, can be roughly classified as either {\em combinatorial} or
{\em geometric}.

{\bf Combinatorial approach.} In the combinatorial approach, the
measurement matrix $\Ph$ is sparse and often binary.  Typically, it
is obtained from an adjacency matrix of a sparse bipartite random
graph.  The recovery algorithm proceeds by iteratively, identifying and
eliminating ``large'' coefficients\footnote{In the non-sketching
  world, such methods algorithms are often called ``weak greedy
  algorithms'', and have been studied thoroughly by
  Temlyakov~\cite{Tem02:Nonlinear-Methods}} of the vector $x$.  The
identification uses non-adaptive binary search techniques.  Examples
of combinatorial sketching and recovery algorithms
include~\cite{GGIKMS02:Fast-Small-Space, CCF,CM03b,GKMS03:One-Pass,
  DWB05:Fast-Reconstruction,SBB06, SBB06b, CM06:Combinatorial-Algorithms,GSTV06:Chaining,GSTV07:HHS,I:extractor,XH:expander}
and others.
%\notate{Some of them only find large coefficients - should we point that out ?}
% PIOTR: only GKMS03 cannot be directly extended to find sparse approximation, 
% so the statement is morally true. Commented out.

The typical advantages of the combinatorial approach include fast
recovery, often sub-linear in the signal length $n$ if $k \ll n$ and
fast and incremental (under coordinate updates) computation of the
sketch vector $\Ph x$.  In addition, it is possible to construct
efficient (albeit suboptimal) measurement matrices {\em explicitly},
at least for simple type of signals.  For example, it is
known~\cite{I:extractor,XH:expander} how to explicitly construct
matrices with $k 2^{(\log \log n)^{O(1)}}$ measurements, for signals
$x$ that are exactly $k$-sparse.  The main disadvantage of the
approach is the suboptimal sketch length.

{\bf Geometric approach.} This approach was first proposed in the
papers~\cite{CRT06:Stable-Signal,Don06:Compressed-Sensing} and has
been extensively investigated since then (see~\cite{CS:Web} for a
bibliography).  In this setting, the matrix $\Ph$ is dense, with at
least a constant fraction of non-zero entries.  Typically, each row of
the matrix is independently selected from a sub-exponential
$n$-dimensional distribution, such as Gaussian or Bernoulli.  The key
property of the matrix $\Ph$ which yields efficient recovery
algorithms is the {\em Restricted Isometry
  Property}~\cite{CRT06:Stable-Signal}, which requires that for any
$k$-sparse vector $x$ we have $\|\Ph x\|_2 =(1 \pm \delta) \|x\|_2$.
If a matrix $\Ph$ satisfies this property, then the recovery process
can be accomplished by finding a vector $x_*$ using the following
linear program:
\[ \min\ \|x_*\|_1 \textrm{ subject to }\Ph x_* = \Ph x. \tag{P1}
\]

% of minimum $l_1$ norm, subject to $\Ph x = \Ph x_*$. This task can be accomplished using linear or convex programming.

The advantages of the geometric approach include a small number of
measurements ($O(k \log(n/2k))$ for Gaussian matrices and $O(k
\log^{O(1)}n)$ for Fourier matrices) and resiliency to measurement
errors\footnote{Historically, the geometric approach resulted also in
  the first {\em deterministic} or {\em uniform} recovery algorithms,
  where a fixed matrix $\Ph$ was guaranteed to work for {\em all}
  signals $x$.  In contrast, the early combinatorial sketching
  algorithms only guaranteed $1-1/n$ probability of correctness for
  {\em each} signal $x$.  However, the
  papers~\cite{GSTV06:Chaining,GSTV07:HHS} showed that combinatorial
  algorithms can achieve deterministic or uniform guarantees as
  well.}.  The main disadvantage is the running time of the recovery
procedure, which involves solving a linear program with $n$ variables
and $n+m$ constraints. The computation of the sketch $\Ph x$ can be
done efficiently for some matrices (e.g., Fourier); however, an
efficient sketch update is not possible. In addition, the problem of
finding an explicit construction of efficient matrices satisfying the
RIP property is open~\cite{T:blog}; the best known explicit
construction~\cite{DeV} yields $\Omega(k^2)$ measurements.

{\bf Connections.} There has been some recent progress in obtaining
the advantages of both approaches by decoupling the algorithmic and
combinatorial aspects of the problem.  Specifically, the
papers~\cite{NV07:rOMP, DM08, NT08} show that one can use {\em greedy}
methods for data compressed using {\em dense} matrices satisfying the
RIP property.  Similarly~\cite{GLR08}, using the results
of~\cite{KT07}, show that sketches from (somewhat) sparse matrices can
be recovered using linear programming.

The best results (up to $O(\cdot)$ constants) obtained prior to this
work are shown in Figure~\ref{f:table}\footnote{Some of the papers,
  notably~\cite{CM03b}, are focused on a somewhat different
  formulation of the problem.  However, it is known that the
  guarantees presented in the table hold for those algorithms as well.
  See Lecture 4 in~\cite{SSS} for a more detailed discussion.}. We
ignore some aspects of the algorithms, such as explicitness or
universality of the measurement matrices.  Furthermore, we present
only the algorithms that work for arbitrary vectors $x$, while many
other results are known for the case where the vector $x$ itself is
exactly $k$-sparse; e.g., see~\cite{TG05:Signal-Recovery,
  DWB05:Fast-Reconstruction,SBB06, Don06:Compressed-Sensing,
  XH:expander}.  The columns describe: citation; whether the recovery
algorithms hold with high probability for {\underline A}ll signals or
for {\underline E}ach signal;
% where the latter schemes work for each signal $x$ with probability $1-1/n$
sketch length; time to compute $\Ph x$ given $x$; time to update $\Ph
x$ after incrementing one of the coordinates of $x$; time\footnote{In
  the decoding time column LP=LP$(n,m,T)$ denotes the time needed to
  solve a linear program defined by an $m \times n$ matrix $\Ph$ which
  supports matrix-vector multiplication in time $T$.  Heuristic
  arguments indicate that LP$(n,m,T) \approx \sqrt{n} T$ if the
  interior-point method is employed.  In addition, the
  paper~\cite{NV07:rOMP} does not discuss the running time of the
  algorithm. Our bound is obtained by multiplying the number of
  algorithm iterations (i.e., $k$) by the number of entries in the
  matrix $\Ph$ (i.e., $nk \log^c n$).  See~\cite{NT08} for an in-depth
  discussion of the running times of OMP-based procedures.} to recover
an approximation of $x$ given $\Ph x$; approximation guarantee; and
whether the algorithm is robust to noisy measurements.
%
% PIOTR: here be the running time discussion
%
%
%
% ``OMP'' denotes the time needed to perform the Orthogonal Matching Pursuit procedure {\bf PIOTR: Need to be more specific}.  
%
%
%
%
In the approximation error column, $\ell_p \le A \ell_q$ means that
the algorithm returns a vector $x_*$ such that $\|x-x_*\|_p \le A
\min_{x'} \|x-x'\|_q$, where $x'$ ranges over all $k$-sparse vectors.
In~\cite{CDD}, the authors show that an approximation guarantee of the
form ``$\ell_2 \le \frac{C}{k^{1/2}} \ell_1$'' implies a ``$\ell_1 \le
(1+O(C)) \ell_1$'' guarantee, and that it is impossible to achieve
``$\ell_2 \le C \ell_2$'' deterministically (or for all signals
simultaneously) unless the number of measurements is $\Omega(n)$.  The
parameters $C>1$, $c \ge 2$ and $a>0$ denote absolute constants,
possibly different in each row.  We assume that $k < n/2$.

In addition, in Figure~\ref{f:table-new} we present very recent
results discovered during the course of our research.  Some of the
running times of the algorithms depend on the ``precision parameter''
$D$, which is always bounded from the above by $\|x\|_2$ if the
coordinates of $x$ are integers.

% see Section~\ref{ss:rel} for further discussion.
%
%\subsection{Related work}
%
\begin{figure}
{\footnotesize
\begin{tabular}{|c|c|c|c|c|c|c|c|}
\hline
Paper		& A/E	& Sketch length &	Encode time &	Column sparsity/ &	Decode time & 	Approx. error &		Noise \\
 & & & & Update time & & & \\
\hline
\cite{CCF,CM06:Combinatorial-Algorithms} & 	E &	$k \log^c n$ &	$n \log^c n$ &	$\log^c n$ &	$k \log^c n$ & 	$\ell_2 \le C \ell_2$ & \\
	& E	& $k \log n$ & 	$n \log n$ & 	$\log n$ & 	$n \log n$ & $\ell_2 \le  C \ell_2$ & \\
& & & & & & &  \\
\cite{CM03b} & 	E &	$k \log^c n$ &	$n \log^c n$ &	$\log^c n$ &	$k \log^c n$ & 	$\ell_1 \le C \ell_1$ & \\
	& E	& $k \log n$ & 	$n \log n$ & 	$\log n$ & 	$n \log n$ & $\ell_1 \le  C \ell_1$ & \\
& & & & & & &  \\
\cite{CRT06:Stable-Signal} & 	A & 	$k \log(n/k)$ & 	$nk \log(n/k)$ & 	$k \log (n/k)$ & LP & 
$\ell_2 \le \frac{C}{k^{1/2}}\ell_1$ & Y \\
%& & & & & & &  \\
        & A   &	$k \log^c n$ & 	$n \log n$ & $k \log^c n$ & LP & $\ell_2 \le  \frac{C}{k^{1/2}} \ell_1$ & Y \\
& & & & & &  & \\
\cite{GSTV06:Chaining} &	A & 	$k \log^c n$ & $n \log^c n$ &	$\log^c n$ & $k \log^c n$ &	
$\ell_1 \le C \log n \ell_1$ & Y \\
& & & & & &  & \\   
\cite{GSTV07:HHS} & A & 	$k \log^c n$ & $n \log^c n$ & $\log^c n$ & $k^2 \log^c n$	& $\ell_2 \le \frac{C}{k^{1/2}} \ell_1$ & \\
& & & & & & &  \\   
\cite{NV07:rOMP}
  &		A & $k  \log (n/k)$ & 	$n k \log (n/k)$ & $k \log (n/k)$	&		$n k^2 \log^c n$ &	$\ell_2 \le \frac{C (\log n)^{1/2}}{k^{1/2}} \ell_1$  & Y\\
 &		A & $k  \log^c n$ & 	$n \log n$ & $k \log^c n$	&		$n k^2 \log^c n$ &	$\ell_2 \le \frac{C (\log n)^{1/2}}{k^{1/2}} \ell_1$  & Y\\	
& & & & & &  & \\   
\cite{GLR08} & A & $k (\log n)^{c \log \log \log n}$ & $kn^{1-a}$ & $n^{1-a}$ & LP & $\ell_2 \le \frac{C}{k^{1/2}} \ell_1$ &  \\
(k ``large'') & & & & & &  & \\
& & & & & &  & \\
\hline 
  & & & & & &  & \\
This paper	&	A &	$k \log(n/k)$ & 	$n \log(n/k)$ & 	$\log(n/k)$ & LP & $\ell_1 \le C \ell_1$ &	Y\\
& & & & & &  & \\
\hline
\end{tabular}
\caption{\footnotesize Summary of the best prior results. }
\label{f:table}
}
%\vspace{-0.2cm}   FUTZ WITH AT THE END TO FIX PAGE LENGTH!
\end{figure}

\begin{figure}
{\footnotesize
\begin{tabular}{|c|c|c|c|c|c|c|c|}
\hline
Paper		& A/E	& Sketch length &	Encode time &	Update time &	Decode time & 	Approx. error &		Noise \\
\hline
\cite{DM08}	&	A &	$k \log(n/k)$ & 	$nk \log (n/k) $ & 	$k \log(n/k)$ & $nk \log(n/k)\log D $ &  $\ell_2 \le \frac{C}{k^{1/2}} \ell_1$ &	Y\\
& & & & & &  & \\
\cite{NT08}     & A &	$k \log(n/k)$ & 	$nk \log (n/k)$ & 	$k \log(n/k)$ & $nk \log(n/k)\log D$  &  $\ell_2 \le \frac{C}{k^{1/2}} \ell_1$ &	Y\\	
                & A &	$k \log^c n$ & 	$n \log n$ & 	$k \log^c n$ &  $n \log  n \log D$ &  $\ell_2 \le \frac{C}{k^{1/2}} \ell_1$ &	Y\\
& & & & & &  & \\
\cite{RI08}       &	A & $k  \log(n/k)$ & $n \log(n/k) $ & 	$\log(n/k)$  &  $n \log (n/k) $   & $\ell_1 \le C \ell_1$ & Y \\
& & & & & &  & \\
\hline
\end{tabular}
\caption{{\footnotesize Recent work.}}
\label{f:table-new}
}
%\vspace{-0.5cm}  FUTZ WITH AT THE END TO FIX PAGE LENGTH!
\end{figure}

% This is the task that we focus on in this paper.  

\subsection{Our results}
%\notate{Not stating any formal theorems. Perhaps we should state them in a row at the end of this section}.
%
%PIOTR: I put pointers to theorems in the intro
%
In this paper we give a sequence of results which indicate that the
combinatorial and geometric approaches are, in a rigorous sense,
different manifestations of a common underlying phenomenon.  This
enables us to achieve a unifying perspective on both approaches, as
well as obtaining several new concrete algorithmic results.

We consider matrices which are \emph{binary} and {\em sparse}; i.e.,
they have only a small number $d$ of ones in each column, and all the
other entries are equal to zero.  It has been shown
recently~\cite{Chan08} that such matrices cannot satisfy the RIP
property with parameters $k$ and $\delta$, unless the number of rows
is $\Omega(k^2)$.  Our first result is that, nevertheless, such
matrices satisfy a different form of the RIP property, that we call
the {\em RIP-p property}, where the $\ell_2$ norm is replaced by the
$\ell_p$ norm.  Formally, the matrix $\Ph$ satisfies
$\rip{p}{k}{\delta}$ property if for any $k$-sparse vector $x$ we have
$\|\Ph x\|_p =(1 \pm \delta) \|x\|_p$.  In particular, we show that
this property holds for $1 \le p \le 1+O(1)/\log n$ if the matrix
$\Ph$ is an adjacency matrix of a high-quality {\em unbalanced
  expander graph}.  Thus we have a large class of natural such
measurement matrices.  We also exhibit an RIP-2 matrix that is not an
RIP-1 matrix, so that, with~\cite{Chan08}, we conclude that these two
conditions are incomparable---neither one is stronger than the other.

% THEOREM: EXPANSION IMPLIES RIP1
\begin{thm}
\label{thm:ubrip}
Consider any $m \times n$ matrix $\Ph$ that is the adjacency matrix of an $(k,\epsilon)$-unbalanced expander $G=(A,B,E)$, $|A|=n$, $|B|=m$, with left degree $d$, such that $1/\epsilon,d$ are smaller than $n$.
Then the scaled matrix $\Ph / d^{1/p}$ satisfies the $\rip{p}{k}{\delta}$  property, for $1 \le p \le 1+1/\log n$ and $\delta=C \epsilon$ for some absolute constant $C>1$.
\end{thm}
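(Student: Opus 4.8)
The plan is to prove the unnormalized estimate $\|\Ph x\|_p^p = (1\pm O(\epsilon))\, d\,\|x\|_p^p$ for every $k$-sparse $x$; dividing by $d$ and taking $p$-th roots then gives $\|(\Ph/d^{1/p})x\|_p = (1\pm C\epsilon)\|x\|_p$, since $(1\pm t)^{1/p}=1\pm\Theta(t)$ for $p\in[1,2]$. Fix $x$ supported on $S$ with $|S|\le k$, and order the coordinates of $S$ so that $|x_1|\ge|x_2|\ge\cdots$. For a right vertex $b$ adjacent to $S$ let $y_b=\sum_{a\sim b}x_a$, let $f(b)$ be its neighbor of largest magnitude (smallest index), and write $y_b = x_{f(b)} + w_b$, where $w_b$ collects the smaller-magnitude neighbors. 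The expansion hypothesis is captured by the prefix bound $\sum_{j\le i} r_j = d\,i - |N(\{a_1,\dots,a_i\})| \le \epsilon\, d\, i$, where $r_i$ counts the neighbors of $a_i$ already hit by $a_1,\dots,a_{i-1}$; this is immediate from the expander inequality $|N(S')|\ge(1-\epsilon)d|S'|$ applied to each prefix $S'=\{a_1,\dots,a_i\}$.

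The leading contribution is $\sum_b |x_{f(b)}|^p = \sum_a z_a |x_a|^p$, where $z_a = d - r_a$ is the number of vertices first hit by $a$. An Abel summation against the decreasing sequence $|x_a|^p$, using the prefix bound, gives $\sum_a r_a|x_a|^p \le \epsilon\, d\,\|x\|_p^p$, so this leading term lies in $[(1-\epsilon)d,\, d]\,\|x\|_p^p$. For the lower bound I then invoke the first-order convexity inequality $|y_b|^p \ge |x_{f(b)}|^p - p\,|x_{f(b)}|^{p-1}|w_b|$, and for the upper bound the matching estimate obtained by expanding $\bigl(\sum_{a\sim b}|x_a|\bigr)^p$ about its leading term. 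In both cases the error is driven by the total collision mass $\sum_b |w_b| \le \sum_i r_i|x_i| \le \epsilon\, d\,\|x\|_1$, again by Abel summation, now in $\ell_1$.

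The crux — and the step I expect to be the main obstacle — is bounding these error terms by $O(\epsilon)\,d\,\|x\|_p^p$ rather than merely by a constant times $d\|x\|_p^p$. Naive strategies fail: dropping all multiply-hit vertices, or applying Young's inequality to the cross terms, loses an $\Omega(1)$ fraction of the mass whenever a large coordinate shares many of its $d$ edges with small coordinates — a configuration the expander does permit once $|S|\gtrsim 1/\epsilon$. The fix is to factor the leader's magnitude out of the error, $\sum_b|x_{f(b)}|^{p-1}|w_b| \le |x_1|^{p-1}\sum_i r_i|x_i| \le |x_1|^{p-1}\,\epsilon\, d\,\|x\|_1$, and then convert back to the $\ell_p$ scale via $|x_1|^{p-1}\le\|x\|_p^{p-1}$ together with the Hölder comparison $\|x\|_1 \le n^{1-1/p}\|x\|_p$. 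This is exactly where the hypothesis $p\le 1+1/\log n$ enters: since $x$, and likewise $\Ph x$ (its support is at most $m<n$), has fewer than $n$ nonzero entries, $n^{1-1/p}=n^{(p-1)/p}\le n^{1/\log n}=\mathrm{e}$, and similarly $d^{p-1}\le\mathrm{e}$ in the upper estimate, so every conversion factor is an absolute constant. Combining, the error is at most $p\,\mathrm{e}\,\epsilon\, d\,\|x\|_p^p$, which with the leading term yields $\|\Ph x\|_p^p=(1\pm C\epsilon)d\|x\|_p^p$. The case $p=1$ is the clean skeleton of this argument — there the $p-1$ factor annihilates the problematic term and the convexity bound reduces to the triangle inequality — and the substance of the general statement is precisely checking that the $\ell_1\!\to\!\ell_p$ conversions cost only constants throughout the stated range of $p$.
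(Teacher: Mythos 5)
Your proof is correct and follows essentially the same route as the paper's: the leader-plus-collision decomposition of each output coordinate (the paper's $E'$ versus $E''$), the prefix expansion bound controlling the total collision mass, the first-order expansion of $|a+b|^p$, and the observation that every $\ell_1\to\ell_p$ conversion costs only $n^{p-1}\le \mathrm{e}$ in the range $p\le 1+1/\log n$. The one (harmless) deviation is in the cross term, where you factor out $|x_1|^{p-1}\le\|x\|_p^{p-1}$ and bound the collision mass in $\ell_1$, whereas the paper splits the right vertices according to whether $|u_j''|\ge\epsilon|u_j'|$; both yield the required $O(\epsilon)\,d\,\|x\|_p^p$.
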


The fact that the unbalanced expanders yield matrices with RIP-$p$
property is not an accident.  In particular, we show in
Section~\ref{s:ue} that any binary matrix $\Ph$ in which each column
has $d$ ones\footnote{In fact, the latter assumption can be removed
  without loss of generality.  The reason is that, from the RIP-$1$
  property alone, it follows that each column must have roughly the
  same number of ones.  The slight unbalance in the number of ones
  does not affect our results much; however, it does complicate the
  notation somewhat.  As a result, we decided to keep this assumption
  throughout the paper.}  and which satisfies RIP-$1$ property with
proper parameters, must be an adjacency matrix of a good unbalanced
expander.  That is, an RIP-$p$ matrix and the adjacency matrix of an
unbalanced expander are essentially equivalent.  Therefore, RIP-$1$
provides an interesting ``analytic'' formulation of expansion for
unbalanced graphs.  Also, without significantly improved explicit
constructions of unbalanced expanders with parameters that match the
probabilistic bounds (a longstanding open problem), we do not expect
significant improvements in the explicit constructions of RIP-$1$
matrices.
%
% THEOREM: RIP1 IMPLIES EXPANSION
%

\begin{thm}
\label{thm:rip1ub}
  Consider any $m \times n$ binary matrix $\mtx{\Ph}$ such that each
  column has exactly $d$ ones.  If for some scaling factor $S>0$ the
  matrix $S \mtx{\Ph}$ satisfies the $\rip{1}{s}{\delta}$ property,
  then the matrix $ \mtx{\Ph}$ is an adjacency matrix of an
  $(s,\epsilon)$-unbalanced expander, for
\[ \epsilon= \Bigl(1-\frac{1}{1+\delta}\Bigr)/(2-\sqrt{2}).
\]
\end{thm}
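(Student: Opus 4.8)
The plan is to read the expansion of $G$ directly off the \emph{lower} bound in the RIP-$1$ property, using signed test vectors to turn edge \emph{collisions} into $\ell_1$ \emph{cancellation}. Fix any $\Lambda \subseteq A$ with $t = |\Lambda| \le s$. Since each column of $\Ph$ has exactly $d$ ones, the number of edges leaving $\Lambda$ is $dt$; for $b \in B$ write $r_b = |N(b)\cap\Lambda|$ for the number of these edges at $b$, so $\sum_b r_b = dt$ and $|N(\Lambda)| = |\{b : r_b \ge 1\}|$. Establishing the expander bound $|N(\Lambda)| \ge (1-\epsilon)\,dt$ is then equivalent to bounding the collision surplus $\sum_b (r_b-1)_+ = dt - |N(\Lambda)|$ from above by $\epsilon\, dt$.

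The core step combines the RIP constants with a random-sign argument. First I would pin the scaling: testing RIP-$1$ on a single column $e_a$ gives $\norm{S\Ph e_a}_1 = Sd \in [1-\delta,\,1+\delta]$, and testing it on the all-ones vector $\mathbf 1_\Lambda$ (whose image $\Ph\mathbf 1_\Lambda$ is nonnegative, so $\norm{\Ph\mathbf 1_\Lambda}_1 = \sum_b r_b = dt$ exactly) gives $S\,dt \le (1+\delta)t$, i.e.\ $Sd \le 1+\delta$. Next, for $\sigma \in \{\pm1\}^{\Lambda}$ set $x_\sigma = \sum_{a\in\Lambda}\sigma_a e_a$, so $\norm{x_\sigma}_1 = t$ and $\norm{\Ph x_\sigma}_1 = \sum_b \bigl|\sum_{a\in N(b)\cap\Lambda}\sigma_a\bigr|$. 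The lower RIP bound holds for \emph{every} $\sigma$, hence in expectation $\Expect_\sigma \norm{\Ph x_\sigma}_1 \ge (1-\delta)\,t/S$; meanwhile each inner term is a Rademacher sum of $r_b$ signs, so by Jensen (Cauchy--Schwarz) $\Expect_\sigma\bigl|\sum_a\sigma_a\bigr| \le \bigl(\Expect_\sigma\bigl|\sum_a\sigma_a\bigr|^2\bigr)^{1/2} = \sqrt{r_b}$, giving $\Expect_\sigma\norm{\Ph x_\sigma}_1 \le \sum_b \sqrt{r_b}$. Threading these two estimates through the scaling $Sd \le 1+\delta$ yields the key inequality $\sum_b \sqrt{r_b} \ge \theta\, dt$ for a constant $\theta = \theta(\delta)$ with $1-\theta = 1 - 1/(1+\delta) = \delta/(1+\delta)$.

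It remains to convert $\sum_b \sqrt{r_b} \ge \theta\, dt$ into the collision bound. With $\sum_b r_b = dt$ fixed, minimizing $|N(\Lambda)| = \sum_b \mathbf 1[r_b \ge 1]$ subject to $\sum_b \sqrt{r_b} \ge \theta\, dt$ is a small linear program in the degree profile $n_j = |\{b : r_b = j\}|$. Its optimum uses only degrees $j \in \{1,2\}$: the surplus gained per unit of the constraint $\sum_b\sqrt{r_b}$ is governed by the ratio $(j-1)/(j-\sqrt j) = 1 + 1/\sqrt j$, which over $j \ge 2$ is maximized at $j=2$, where it equals $1/(2-\sqrt 2)$. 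Solving the resulting two-variable problem gives surplus $\le \tfrac{1-\theta}{2-\sqrt 2}\,dt$, i.e.\ $|N(\Lambda)| \ge (1-\epsilon)\,dt$ with $\epsilon = (1-\theta)/(2-\sqrt 2)$; substituting $1-\theta = 1 - 1/(1+\delta)$ reproduces the claimed $\epsilon$. Since $\Lambda$ was an arbitrary set of size at most $s$, this is exactly the $(s,\epsilon)$-expansion of $G$.

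The hard part will be the extremal step: one must show that concentrating all collisions at degree $2$ is genuinely the worst case, so that the degree-$2$ exchange rate $1+1/\sqrt 2 = 1/(2-\sqrt 2)$ dominates and no higher-degree or fractional profile does better. The Rademacher/Cauchy--Schwarz bound $\Expect_\sigma|\sum_a\sigma_a| \le \sqrt{r_b}$ is what manufactures the $\sqrt 2$; obtaining the clean constant $2-\sqrt 2$ then rests on this convexity argument, together with carefully tracking the two one-sided RIP constants through the normalization so that only the factor $1/(1+\delta)$ survives.
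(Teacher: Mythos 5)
Your proposal is correct and is essentially the paper's own argument: both test the RIP-$1$ lower bound on random $\pm 1$ vectors supported on the set in question, bound the expected cancellation at each collision vertex by a second-moment (Cauchy--Schwarz) estimate, compare against the all-ones vector to fix the scaling $Sd\le 1+\delta$, and extract the constant $2-\sqrt{2}$ from the worst case of multiplicity $2$. The only differences are cosmetic --- you argue directly via the pointwise inequality $(r_b-1)_+\le (r_b-\sqrt{r_b})/(2-\sqrt{2})$ for integer $r_b$, where the paper argues by contradiction and applies Cauchy--Schwarz over the whole collision set using $\sum_j u_j\ge 2l$ --- and the ``extremal step'' you flag as the hard part is in fact immediate from that pointwise inequality, so the only thing to tidy is the stray $(1-\delta)$ factor in your lower RIP bound (the paper's definition has lower bound $\|x\|_1\le\|S\Ph x\|_1$, which is what your final $\theta=1/(1+\delta)$ actually uses).
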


% THEOREM: RIP1 implies LP decoding

In the next step in Section~\ref{s:lp}, we show that the RIP-1
property of a binary matrix (or, equivalently, the expansion property)
alone suffices to guarantee that the linear program $P_1$ recovers a
good sparse approximation.  In particular, we show the following

\begin{thm}
\label{t:sparse}
Let $\Ph$ be an $m \times n$ matrix of an unbalanced
$(2k,\epsilon)$-expander.  Let $\alpha(\epsilon)=(2
\epsilon)/(1-2\epsilon)$.  Consider any two vectors $x, x_*$, such
that $\Ph x=\Ph x_*$, and $\|x_*\|_1 \le \|x\|_1$. Then
% If $S$ is the set of $k$ largest (in magnitude) coefficients of $x$, then
\[ \|x-x_*\|_1 \le 2/(1-2\alpha(\epsilon)) \cdot \|x-x_k \|_1.
\]
 where $x_k$ is the optimal $k$-term representation for $x$.
\end{thm}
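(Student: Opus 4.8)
The plan is to reduce the statement to a null-space property of $\Ph$ and then extract that property from the expansion of the graph. Write $y = x - x_*$, so that $\Ph y = 0$ by hypothesis. Everything rests on the following claim, which I will call the \emph{null-space lemma}: if $\Ph$ is the adjacency matrix of a $(2k,\epsilon)$-expander and $\Ph y = 0$, then for every index set $S$ with $|S| \le k$ one has $\|y_S\|_1 \le \alpha(\epsilon)\,\|y\|_1$. Since the top-$k$ coordinates maximize $\|y_S\|_1$ over all $k$-subsets, it suffices to prove the lemma when $S$ consists of the $k$ largest-magnitude entries of $y$, and then the general case holds for free.

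Granting the lemma, the theorem is a short accounting. Take $S = \supp{x_k}$, so that $x_{\bar S} = x - x_k$ and $\|x_{\bar S}\|_1 = \|x - x_k\|_1$. Starting from $\|x_*\|_1 \le \|x\|_1$, I substitute $x_* = x - y$ and split each side over $S$ and $\bar S$; the reverse triangle inequality on each block gives $\|x_*\|_1 \ge \|x_S\|_1 - \|y_S\|_1 + \|y_{\bar S}\|_1 - \|x_{\bar S}\|_1$, while $\|x\|_1 = \|x_S\|_1 + \|x_{\bar S}\|_1$, so that $\|y_{\bar S}\|_1 \le 2\|x - x_k\|_1 + \|y_S\|_1$. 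Adding $\|y_S\|_1$ yields $\|y\|_1 \le 2\|y_S\|_1 + 2\|x-x_k\|_1$, and the null-space lemma (legitimately applied to this particular $S$, since it holds for every $k$-subset) bounds $\|y_S\|_1 \le \alpha(\epsilon)\|y\|_1$. Rearranging produces $(1 - 2\alpha(\epsilon))\|y\|_1 \le 2\|x-x_k\|_1$, which is the desired inequality once one recalls $y = x - x_*$; smallness of $\epsilon$ guarantees $1 - 2\alpha(\epsilon) > 0$.

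It remains to prove the null-space lemma, and this is where expansion is genuinely used. For the \emph{lower} bound I would establish (directly, via the uniquely-covered-neighbor count of Theorem~\ref{thm:ubrip} with $p=1$) the sharp estimate $\|\Ph y_S\|_1 \ge (1-2\epsilon)\,d\,\|y_S\|_1$, valid because $y_S$ is $k$-sparse. For the \emph{upper} bound, $\Ph y = 0$ gives $(\Ph y_S)_v = -(\Ph y_{\bar S})_v$ at every measurement $v$, and since $\Ph y_S$ is supported on $N(S)$,
\[ \|\Ph y_S\|_1 = \sum_{v \in N(S)} \Bigl| \sum_{i \in N(v) \cap \bar S} y_i \Bigr| \le \sum_{i \in \bar S} |y_i| \cdot \bigl| N(i) \cap N(S) \bigr|. \]
Sorting $\bar S$ by decreasing magnitude and cutting it into consecutive blocks $S_1, S_2, \dots$ of size $k$, the key observation is that $S \cup S_j$ has size $2k$, so expansion caps the number of edges from $S_j$ into the already-covered set $N(S)$ by the collision deficit $d\,|S\cup S_j| - |N(S \cup S_j)| \le 2\epsilon d k$, while every entry of $S_j$ is at most the average entry of the preceding block $S_{j-1}$. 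Multiplying the per-block edge budget by this per-entry magnitude bound and summing the resulting telescoping series gives $\|\Ph y_S\|_1 \le 2\epsilon d\,\|y\|_1$. Combining the two bounds yields $(1-2\epsilon)\|y_S\|_1 \le 2\epsilon\|y\|_1$, i.e. $\|y_S\|_1 \le \alpha(\epsilon)\|y\|_1$.

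The main obstacle is precisely that upper bound: the naive estimate $\|\Ph y_{\bar S}\|_1 \le d\|y_{\bar S}\|_1$ loses the entire game, so the argument must (i) count only the edges leaving each size-$k$ block into $N(S)$ and charge them to the size-$2k$ expansion deficit, and (ii) exploit the decreasing rearrangement so the heavy collision budget is paired with small coordinate values. The lower bound is, by contrast, a routine consequence of the expansion bookkeeping already present in the RIP-$1$ analysis, so the real content lies in making this weighted edge-count tight enough to beat the factor $(1-2\epsilon)$.
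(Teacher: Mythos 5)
Your proposal is correct and follows essentially the same route as the paper: the paper's Lemma~\ref{l:spread} is exactly your null-space lemma, proved by the same device (decreasing-magnitude blocks of size $k$, bounding the edges from each block into $N(S)$ by the $2\epsilon dk$ expansion deficit of the $2k$-set $S\cup S_l$, and charging each entry of a block to the average of the preceding one), and the final $\ell_1$ accounting from $\|x_*\|_1\le\|x\|_1$ is the same algebra in a slightly rearranged order. No gaps.
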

We also provide a noise-resilient version of the theorem; see
Section~\ref{s:lp} for details.

By combining Theorem~\ref{t:sparse} with the best known probabilistic
constructions of expanders (Section~\ref{s:ue}) we obtain a scheme for
sparse approximation recovery with parameters as in
Figure~\ref{f:table}.  The scheme achieves the best known bounds for
several parameters: the scheme is deterministic (one matrix works for
all vectors $x$), the number of measurements is $O(k \log (n/k))$, the
update time is $O(\log(n/k))$ and the encoding time is $O(n \log
(n/k))$.  In particular, this provides the first known scheme which
achieves the best known measurement and encoding time bounds {\em
  simultaneously}.  In contrast, the Gaussian and Fourier matrices are
known to achieve only one optimal bound at a time.  The fast encoding
time also speeds-up the LP decoding, given that the linear program is
typically solved using the interior-point method, which repeatedly
performs matrix-vector multiplications.  In addition to theoretical
guarantees, random sparse matrices offer an attractive empirical
performance.  We show in Section~\ref{sec:experiments} that the
empirical behavior of binary sparse matrices with LP decoding is
consistent with the analytic performance of Gaussian random matrices.

In the final part of the paper, we show that adjacency matrices of
unbalanced expanders can be augmented to facilitate sub-linear time
combinatorial recovery. This fact has been implicit in the earlier
work~\cite{GSTV07:HHS,I:extractor}; here we verify that indeed the
expansion property is the sufficient condition guaranteeing
correctness of those algorithms.  As a result, we obtain an explicit
construction of matrices with $O(k2^{(\log\log n)^{O(1)}})$ rows that
are amenable to a sublinear decoding algorithm for all vectors
(similar to that in~\cite{GSTV07:HHS}).  Previous explicit
constructions for sublinear time algorithms either had $\Omega(k^2)$
rows~\cite{CM06:Combinatorial-Algorithms} or had $O(k2^{(\log\log
  n)^{O(1)}})$ rows~\cite{I:extractor,XH:expander} but were restricted
to $k$-sparse signals or their slight generalizations.  An additional (and somewhat
unexpected) consequence is that the algorithm of~\cite{I:extractor} is
simple, effectively mimicking the well-known ``parallel bit-flip''
algorithm for decoding low-density parity-check codes.
%\footnote{In the
%  following theorem and in other places in this document, $x_k$ refers
%  to the optimal $k$-term representation for $x$ rather than the
%  $k$'th coefficient of $x$. Let $|{\rm supp}(x)|$ denote the size of
%  the support of $x$, the number of non-zero coefficients.}
\begin{thm}
\label{thm:hhsp}
  Let $\epsilon>0$ be a fixed constant, and $p=1+1/\log n$. 
Consider $x \in \Rspace{n}$ and a sparsity parameter $k$.
% and a number  $\epsilon \in (0,1)$. 
 There is a measurement matrix $\mtx{\Psi}$,
  which we can construct explicitly or randomly, and an algorithm HHS$(p)$ that, given
  measurements $v =
  \mtx{\Psi}x$ of $x$, returns an approximation $\widehat x$ of $x$
  with $O(k/\epsilon)$ nonzero entries.  The approximation satisfies
\[  \|x - \widehat x\|_p \leq \epsilon k^{1/p - 1} \|x - x_k\|_1.
\]
 where $x_k$ is the optimal $k$-term representation for $x$.
\end{thm}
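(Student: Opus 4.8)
The plan is to build $\mtx{\Psi}$ by stacking the adjacency matrix of a high-quality unbalanced expander together with a small ``identification'' overhead, and to recover $\widehat{x}$ by an iterative identify--estimate--subtract loop in the style of the HHS algorithm of~\cite{GSTV07:HHS} and the parallel bit-flip decoder of~\cite{I:extractor}. Fix a $(O(k/\epsilon),\epsilon_0)$-unbalanced expander $G=(A,B,E)$ with left degree $d$; by Theorem~\ref{thm:ubrip} its scaled adjacency matrix satisfies $\rip{p}{O(k/\epsilon)}{C\epsilon_0}$. I would use the raw bucket values $(\mtx{\Phi}x)_j$, $j \in B$, for \emph{estimation}, and append $O(\log n)$ additional rows of bit-tests (weighted copies of the incidence structure) that, once a bucket is nearly isolated, reveal the binary name of the dominating coordinate and thus permit \emph{identification}. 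Linearity of the sketch means that after forming a partial estimate $\widehat{x}^{(t)}$ the residual measurements $\mtx{\Psi}(x-\widehat{x}^{(t)})$ are available at no extra cost, so the loop iterates on the residual.

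The analytic core is to show, using only expansion, that the \emph{median estimator} is accurate. For a candidate coordinate $i$, I estimate its value by the median of $\{(\mtx{\Phi}x)_j : j \in N(i)\}$. Let $S$ be the set of the $O(k/\epsilon)$ largest coordinates of the current residual. Expansion gives $|N(S)| \ge (1-\epsilon_0)\, d\, |S|$, so the edges leaving $S$ collide in few buckets; consequently, for each fixed $i \in S$ only an $O(\epsilon_0)$ fraction of its $d$ neighbors are shared with another element of $S$, while the remaining neighbors satisfy $(\mtx{\Phi}x)_j = x_i + (\text{tail contamination})$. Because RIP-$1$/expansion spreads the tail mass $\|x-x_k\|_1$ thinly across $B$, a strict majority of the neighbors of $i$ report a value within $O(\|x-x_k\|_1/k)$ of $x_i$, and the median discards the $O(\epsilon_0)$-fraction of corrupted neighbors. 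This yields a per-coordinate estimate with additive error controlled by the tail, uniformly over all $i \in S$ simultaneously, which is what makes a single fixed $\mtx{\Psi}$ work for all signals.

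Granting the estimation lemma, I would run the standard progress argument: in each pass the algorithm identifies and accurately estimates every residual coordinate above a threshold tied to the current tail mass, subtracts those estimates, and thereby drives the head mass of the residual down geometrically while introducing only $O(\epsilon_0)$-controlled spurious mass. After $O(\log n)$ passes the head is exhausted and $\widehat{x}$ has $O(k/\epsilon)$ nonzero entries, and the surviving residual $r = x-\widehat{x}$ satisfies two bounds: $\infnorm{r} = O(\epsilon\,\|x-x_k\|_1/k)$, because every coordinate exceeding this threshold has been captured, and $\pnorm{1}{r} = O(\|x-x_k\|_1)$. The final guarantee then comes from the elementary interpolation
\[
  \pnorm{p}{r} \;\le\; \infnorm{r}^{\,1-1/p}\,\pnorm{1}{r}^{\,1/p},
\]
which, after substituting the two bounds, produces the normalizing factor $k^{1/p-1}$ and an $\ell_p$ estimate of the claimed form; this is exactly the $\ell_p$ analogue of the $\ell_2 \le \frac{C}{k^{1/2}}\ell_1$ bound, recovered at $p = 2$.

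I expect the main obstacle to be the estimation/identification analysis under expansion \emph{alone}, together with the control of error accumulation across the $O(\log n)$ passes. Two points are delicate. First, proving robustness of the median rather than the mean requires a clean counting bound on how many neighbors of a heavy coordinate are corrupted---by other heavy coordinates (bounded through $|N(S)| \ge (1-\epsilon_0)d|S|$) or by concentrated tail mass---and this must hold simultaneously for all heavy coordinates so that the fixed matrix is universal. Second, one must ensure that misidentifications and residual estimation errors from early passes do not compound; here the threshold, the expander quality $\epsilon_0$, and the sparsity blow-up all have to be tuned as functions of $\epsilon$ so that the per-pass error terms telescope into the stated $O(k/\epsilon)$-sparse approximation with the claimed constant in the bound. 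The single expansion inequality $|N(S)| \ge (1-\epsilon_0)d|S|$ is the engine behind every one of these steps, and turning it into a global, universal per-pass guarantee is the technical crux.
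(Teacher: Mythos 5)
Your architecture (expander buckets plus bit tests, identify--estimate--subtract, median estimation over $N(i)$) is recognizably in the HHS family, but it is not the paper's construction, and the step on which everything rests --- that for every signal and every heavy coordinate $i$ a strict majority of the $d$ buckets in $N(i)$ report $x_i$ up to additive error $O(\|x-x_k\|_1/k)$ --- does not follow from expansion alone and is false for a single level of hashing into $O(k/\epsilon)$ buckets. Expansion applied to the set $S\cup T$ of heavy and tail-support coordinates only bounds the \emph{total} number of colliding edges by $\epsilon_0 d\,|S\cup T|$; for an adversarial tail supported on $\Theta(k)$ coordinates of magnitude $\Theta(\|x-x_k\|_1/k)$, this permits every bucket of $N(i)$ to absorb $\Theta(\epsilon_0 k)$ tail edges, so the median error is $\Theta(\epsilon_0\|x-x_k\|_1)$ per coordinate --- a factor $\Theta(\epsilon_0 k)$ worse than you claim. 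The same per-bucket noise floor breaks the bit-test identification for spikes whose magnitude lies below it. Closing exactly this gap is what the paper's machinery is for: the residual is decomposed into significant $(s,r)$ bands, each band gets its own sifting matrix (an $(s,\epsilon')$-expander) whose Isolation Lemma only guarantees per-output noise $\le \tfrac{2}{\ell}\|g\|_1$, and a separate Nisan--Wigderson noise-reduction matrix is then tensored in to push the noise from $1/s$ down to $1/r$ before the bit tests are applied; estimation is done not by medians but by applying the pseudo-inverse $\Fee_{L}^{\psinv}$ of a separate $\rip{p}{K}{\delta}$ estimation matrix via Jacobi iteration, with accuracy coming from the RIP-$p$ operator bound. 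None of these components is optional under a ``for all $x$'' guarantee, and your proposal has no substitute for them.

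A second, quantitative problem is the final interpolation. Even granting $\|r\|_\infty \le \epsilon\|x-x_k\|_1/k$ and $\|r\|_1 = O(\|x-x_k\|_1)$, the inequality $\|r\|_p \le \|r\|_\infty^{1-1/p}\|r\|_1^{1/p}$ yields a prefactor $\epsilon^{1-1/p}$, and with $p = 1+1/\log n$ we have $\epsilon^{1-1/p}\to 1$; the $\epsilon$ in the target bound evaporates. The paper avoids this by running the progress argument directly in the $\ell_p$ norm: under the condition $\|x-a\|_p > k^{1/p-1}\|x-x_k\|_1$ each iteration halves $\|x-a\|_p$ (using the Head and Tail lemma to relate $\|r\|_p$, $\|r-r_\ell\|_p$ and $\|r\|_1$), and the loop terminates after $O(\log\Delta)$ iterations once the $\ell_p$ error itself drops below the target. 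If you want to salvage your route, you would need an aggregate (summed over the candidate set) estimation-error bound of the form $O(\epsilon)\cdot\|x-x_k\|_1$ in $\ell_1$, not a per-coordinate $\ell_\infty$ bound, and you would still need the multi-scale isolation/noise-reduction layer to make identification work below the top band.
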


\iffalse
\begin{thm}
\label{thm:hhsp}
  Let $x \in \Rspace{n}$ and fix a sparsity parameter $k$ and a number
  $\epsilon \in (0,1)$. There is a measurement matrix $\mtx{\Psi}$,
  which we can construct explicitly or randomly, with the following
  property.  There is an algorithm (called HHS$(p)$) that, given
  measurements $v =
  \mtx{\Psi}x$ of $x$, returns an approximation $\widehat x$ of $x$
  with $O(k/\epsilon)$ nonzero entries.  The approximation satisfies
\[  \|x - \widehat x\|_p \leq \epsilon k^{1/p - 1} \|x - x_k\|_1.
\]
Let $R$ denote the size of the measurements for either an explicit or
random construction.  Then the HHS$(p)$ algorithm runs in time
$\poly(R)$.  For explicit constructions $R = O(k2^{(\log\log
  n)^{O(1)}})$ and for random constructions $R = O(k\polylog n)$.
\end{thm}
\fi

Figure~\ref{f:commutativediagram} summarizes the connections among all
of our results.  We show the relationship between the combinatorial and geometric approaches to sparse signal recovery 
\begin{center}
\begin{figure}[h!]
\includegraphics[width=2.75in,height=1.5in]{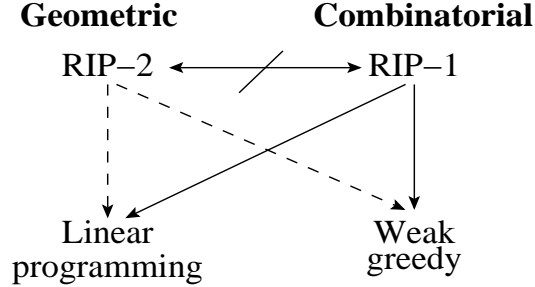}
\caption{The above diagram captures the relations between the combinatorial and
geometric approaches and the two main classes of decoding algorithms.
Connections established in prior work are shown with dashed lines. Our work
connects both approaches, with the ultimate goal of obtaining the ``best of both
worlds.''} \label{f:commutativediagram}
\vspace{-0.25cm}
\end{figure}
\end{center}
%\subsection{Relation to~\cite{RI08}}
%\label{ss:rel}
%
%Blah blah blah

%\subsection{Notation}
%For any vector $x\in\Rspace{n}$, we say that $x$ has $n$ components
% or {\em terms}.  Each term consists of an {\em index} and a {\em
%   coefficient}.  Below, we will recover indices and coefficients
% separately and we need this terminology to be clear.  For any vector
% $x$, let $x_t$ denote $x$ with all but the $t$ largest terms zeroed
% out.  Thus $x_t$ is the $t$-sparse approximation to $x$ that is
% optimal under $\ell^p$ norm ({\em i.e.}, the value of $x_\star$ that
% minimizes $\pnorm{p}{x_\star - x}$ for any $p\ge 1$.  Let $\supp x$
% denote the set of indices with non-zero coefficients in $x$ and
% $\pnorm{0}{x}=|\supp x|$ denote the number of such indices.

\section{Unbalanced expanders and RIP matrices}
\label{s:ue}

\subsection{Unbalanced expanders}

In this section we show that RIP-$p$ matrices for $p \approx 1$ can be
constructed using high-quality expanders.  The formal definition of
the latter is as follows.

\begin{defn}
  A $(k,\epsilon)$-{\em unbalanced expander} is a bipartite simple
  graph $G=(A,B,E)$ with left degree $d$ such that for any $X \subset
  A$ with $|X| \leq k$, the set of neighbors $N(X)$ of $X$ has size
  $|N(X)| \geq (1-\epsilon) d |X|$.
\end{defn}

In constructing such graphs, our goal is to make $|B|$, $d$, and
$\epsilon$ as small as possible, while making $k$ as close to $|B|$ as
possible.

The following well-known proposition can be shown using the
probabilistic method.

\begin{prop}
  For any $n/2 \ge k \ge 1$, $\epsilon>0$, there exists a
  $(k,\epsilon)$-unbalanced expander with left degree
  $d=O(\log(n/k)/\epsilon$ and right set size $O(kd/\epsilon)= O(k
  \log(n/k)/\epsilon^2$).
\end{prop}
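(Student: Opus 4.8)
The plan is to use the probabilistic method: exhibit a distribution over bipartite graphs with left degree $d$ and right side $|B|=m$ under which a random graph is a $(k,\epsilon)$-expander with positive probability. I would set $d = a\log(n/k)/\epsilon$ and $m = c\,kd/\epsilon$ for absolute constants $a,c$ to be pinned down at the end, and build $G$ by letting each of the $n$ left vertices independently pick its $d$ neighbors uniformly at random, sampling without replacement within a vertex so that $G$ is automatically simple. The goal is then to show that the probability that \emph{some} $X\subseteq A$ with $|X|\le k$ violates $|N(X)|\ge(1-\epsilon)d|X|$ is strictly below $1$, since any graph avoiding all such failures has exactly the required expansion and parameters $d=O(\log(n/k)/\epsilon)$, $m=O(kd/\epsilon)=O(k\log(n/k)/\epsilon^2)$.

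First I would prove a single-set estimate. Fix $X$ with $|X|=s$ and reveal its $ds$ incident edges one at a time. Call an edge a \emph{collision} if its right endpoint has already appeared among previously revealed edges, and let $C$ be the number of collisions; then $|N(X)| = ds - C$, so $X$ fails exactly when $C > \epsilon ds$. Conditioned on the past, the next edge lands on one of the at most $ds$ distinct endpoints seen so far with probability at most $ds/m$, so $C$ is stochastically dominated by $\mathrm{Bin}(ds,\,ds/m)$. A standard tail bound then gives $\Pr[C\ge \epsilon ds]\le \binom{ds}{\epsilon ds}(ds/m)^{\epsilon ds}\le (e\,ds/(\epsilon m))^{\epsilon ds}$, and substituting $m = ckd/\epsilon$ together with $s\le k$ bounds this by $(es/(ck))^{\epsilon ds}$.

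Next comes the union bound. Summing the single-set estimate over the $\binom{n}{s}\le (en/s)^s$ sets of size $s$ gives a bound of the form $\bigl[(en/s)\,(es/(ck))^{\epsilon d}\bigr]^s$ on the probability that some $s$-set fails. Writing $L=\log(n/k)$ and $v=\log(k/s)\ge 0$, the logarithm of the bracket is $1 + L + v + aL\bigl(1 - \log c - v\bigr) = 1 + L\bigl(1 + a(1-\log c)\bigr) + v(1-aL)$. For $a,c$ chosen large enough absolute constants (so that $aL>1$ and $a\log c > 1+a$, using $L\ge\log 2$), both coefficient terms are negative for every $v\ge0$, so each bracket is at most $1/2$ uniformly over $1\le s\le k$; summing the resulting geometric series over $s$ keeps the total failure probability below $1$, and the existence follows.

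The step I expect to be the crux is exactly this calibration of $a$ and $c$ so the union bound closes simultaneously for all $s\in\{1,\dots,k\}$. The tension is between the entropy factor $(en/s)^s$, largest per unit $s$ near $s=k$ where $en/s\approx en/k$, and the gain factor $(es/(ck))^{\epsilon d}$; balancing the two is precisely what forces $\epsilon d$ to scale like $\log(n/k)$, hence $d=\Theta(\log(n/k)/\epsilon)$ and nothing smaller. Minor care is also needed to confirm that without-replacement sampling yields a simple graph and that the domination of $C$ by a binomial survives the mild dependence it introduces, but those are routine once the main estimate is in place.
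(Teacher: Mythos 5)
Your proposal is correct, and it is precisely the argument the paper has in mind: the paper states this proposition without proof, remarking only that it is well known and follows from the probabilistic method, which is exactly what you execute (random $d$-subsets per left vertex, collision counting, binomial tail bound, union bound over set sizes $s\le k$). The only points needing care are the ones you already flag --- replacing $ds/m$ by $ds/(m-d)$ in the conditional collision probability and taking the constants $a,c$ slightly larger so the bracket actually drops below $1/2$ rather than merely having negative coefficients --- and both are routine adjustments of absolute constants.
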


%
% TODO
%
\begin{prop}
  For any $n \ge k \ge 1$ and $\epsilon>0$, one can explicitly
  construct a $(k,\epsilon)$-unbalanced expander with left degree
  $d=2^{ O(\log (\log(n)/\epsilon)))^3}$, left set size $n$ and right
  set size $m=k d/\epsilon^{O(1)}$.
\end{prop}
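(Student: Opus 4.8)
The plan is to obtain the explicit expander from an explicit \emph{lossless condenser} --- a near-optimal form of randomness extractor --- via the standard equivalence between unbalanced graphs with near-maximal vertex expansion and lossless condensers, and then to import the explicit condenser of Ta-Shma, Umans, and Zuckerman as a black box. All of the genuine content sits inside that cited construction; what remains is to fix the graph/function dictionary, verify expansion by a short collision count, and translate parameters.

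First I would set up the dictionary. Write $n = 2^{n'}$, $k = 2^{k'}$, and $d = 2^{t}$, identify $A$ with $\{0,1\}^{n'}$, and index the $d$ edges out of each left vertex by a seed $y \in \{0,1\}^{t}$. A left-$d$-regular bipartite graph is then precisely a function $C \colon \{0,1\}^{n'} \times \{0,1\}^{t} \to B$, with $C(x,y)$ the $y$-th neighbor of $x$. I would take $C$ to be \emph{seed-extending} (its output records $y$); this guarantees that the $d$ edges out of any fixed $x$ land on $d$ distinct right vertices, so the graph is simple with left degree exactly $d$, and it identifies $B$ with $\{0,1\}^{t}\times\{0,1\}^{m''}$ for the non-seed part. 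The hypothesis on $C$ is that it is a $(k',\epsilon')$-lossless condenser: for every source $X$ on $\{0,1\}^{n'}$ with $H_\infty(X) \ge k'$, the output $C(X,Y)$ with $Y$ uniform on seeds is $\epsilon'$-close to a distribution of min-entropy $t+k'$.

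Next I would deduce expansion by a collision count. Fix $X \subseteq A$ with $|X| = s \le k$ and feed the uniform distribution $U_X$ (min-entropy $\lceil \log s\rceil$) through the condenser at threshold $\lceil\log s\rceil$; here I use that the cited construction is lossless simultaneously at all thresholds up to $k'$, so the guarantee applies to every set of size at most $k$, not merely those of size exactly $k$. Now $C(U_X, Y)$ is the distribution of the right endpoint of a uniformly chosen edge out of $X$, spread over the $ds$ edges. If $|N(X)| = (1-\gamma)ds$, then capping each of the $|N(X)|$ hit vertices at mass $1/(ds)$ accounts for at most $1-\gamma$ of the total mass, so the excess --- hence the statistical distance to any distribution of min-entropy $t+\log s$ --- is at least $\gamma$. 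The condenser guarantee forces $\gamma \le \epsilon'$, i.e. $|N(X)| \ge (1-\epsilon')d|X|$; choosing $\epsilon' = \epsilon$ gives exactly the required $(k,\epsilon)$-expansion.

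Finally, the parameter bookkeeping and the main difficulty. With $n' = \log n$, $k' = \log k$, and $\epsilon' = \epsilon$, the construction of Ta-Shma, Umans, and Zuckerman supplies a seed length $t = O\big((\log(n'/\epsilon'))^{3}\big) = O\big((\log(\log n / \epsilon))^{3}\big)$ and a seed-extending output of length $t + k' + O(\log(1/\epsilon))$. Exponentiating yields $d = 2^{O((\log(\log n/\epsilon))^{3})}$ and right-set size $|B| = 2^{t}\cdot 2^{k' + O(\log(1/\epsilon))} = k\,d/\epsilon^{O(1)}$, with left-set size $2^{n'}=n$, as claimed. The hard part is not anything above: it is the existence of the explicit condenser itself, whose nontrivial (algebraic) ideas are exactly what make the degree quasipolynomial in $\log n$ rather than polynomial in $n$, and which I am quoting outright. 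Among the steps I would actually perform, the two needing care are (i) checking that the cited condenser can be taken seed-extending and simultaneously lossless at all thresholds up to $k'$ --- both standard but worth verifying --- and (ii) keeping the error bookkeeping clean so that the single parameter $\epsilon$ controls both the condenser error and the expansion loss with no spurious constant factors.
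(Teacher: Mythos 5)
Your proposal takes essentially the same route as the paper: the paper's entire proof is a citation to the explicit lossless-conductor construction of Capalbo--Reingold--Vadhan--Wigderson (Theorem 7.3 of \cite{CRVW}), with the remark that lossless conductors and unbalanced expanders are equivalent ``modulo representing all relevant parameters in the log-scale,'' followed by an $O(nd)$-time postprocessing to make the graph simple. You cite the Ta-Shma--Umans--Zuckerman condenser instead and enforce simplicity via seed-extension rather than postprocessing, and you usefully spell out the condenser-to-expander dictionary and the all-thresholds caveat that the paper leaves implicit, but the substance --- quoting an explicit lossless condenser as a black box and translating parameters --- is identical.
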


\begin{proof}
  The construction is given in~\cite{CRVW}, Theorem 7.3. Note that the
  theorem refers to notion of {\em lossless conductors}, which is
  equivalent to unbalanced expanders, modulo representing all relevant
  parameters (set sizes, degree, etc.) in the log-scale.  After an
  additional $O(nd)$-time postprocessing, we can ensure that the graph
  is simple; i.e., it contains no duplicate edges.
\end{proof}

\subsection{Construction of RIP matrices}

\begin{defn} 
  An $m\times n$ matrix $\Ph$ is said to satisfy $\rip{p}{k}{\delta}$ if,
  for any $k$-sparse vector $x$, we have
\[    \pnorm{p}{x} \le \pnorm{p}{\Ph x}\le (1+\delta)\pnorm{p}{x}.
\]
\end{defn}

Observe that the definitions of $\rip{1}{k}{\delta}$ and
$\rip{2}{k}{\delta}$ matrices are incomparable.  In what follows
below, we present sparse binary matrices with $O(k \log(n/k))$ rows
that are RIP$_{1,k,\delta}$; it has been shown recently~\cite{Chan08}
that sparse binary matrices cannot be RIP$_{2,k,\delta}$ unless the
number of rows is $\Omega(k^2)$.  In the other direction, consider an
appropriately scaled random Gaussian matrix $G$ of $R\approx k\log(n)$
rows.  Such a matrix is known to be RIP$_{2,k,\delta}$.  To see that
this matrix is {\em not} RIP$_{1,k,\delta}$, consider the signal $x$
consisting of all zeros except a single 1 and the signal $y$
consisting of all zeros except $k$ terms with coefficient $1/k$.  Then
$\|x\|_1=\|y\|_1$ but $\|Gx\|_1\approx\sqrt{k}\|Gy\|_1$.

\begin{reftheorem}{thm:ubrip}
Consider any $m \times n$ matrix $\Ph$ that is the adjacency matrix of an $(k,\epsilon)$-unbalanced expander $G=(A,B,E)$ with left degree $d$, such that $1/\epsilon,d$ are smaller than $n$.
Then the scaled matrix $\Ph / d^{1/p}$ satisfies the $\rip{p}{k}{\delta}$  property, for $1 \le p \le 1+1/\log n$ and $\delta=C \epsilon$ for some absolute constant $C>1$.
\end{reftheorem}

\begin{proof}
Let $x \in \R^n$ be a $k$-sparse vector.
Without loss of generality, we assume that the coordinates of $x$ are ordered such that $|x_1| \ge \ldots \ge |x_n|$.

The proof proceeds in two stages. 
In the first part, we show that the theorem holds for the case of $p=1$.
In the second part, we extend the theorem to the case where $p$ is slightly larger than $1$.

\ \\
{\bf The case of $p=1$.} 
We order the edges $e_t = (i_t, j_t)$, $t=1 \ldots dn$ of $G$ in a lexicographic manner. 
It is helpful to imagine that the edges $e_1, e_2 \ldots$ of $E$ are being added to the (initially empty) graph. 
An edge $e_t=(i_t,j_t)$ causes a {\em collision} if there exists an earlier edge $e_s=(i_s,j_s), s<t$, such that $j_t=j_s$.
We define $E'$ to be the set of edges which do {\em not} cause collisions, and $E''=E-E'$.

%The variable $r_t$ is equal to $-1$ if the edge $e_t$ collides with an earlier edge, and it is equal to $1$ otherwise.

\iffalse
Now a simple claim.

\begin{claim}
For any $z \in \R^l$, and $r \in \{-1,1\}^l$, with at most one entry of $r$ set to $1$, we have
\[ |r \cdot z| \ge \sum_i r_i |z_i| \]
\end{claim}

From this claim, the next one follows.
\fi

\begin{lemma}
\label{lemma:l1}
We have
\[ \sum_{(i,j) \in E''} |x_i| \le \epsilon d \pnorm{1}{x} \]
\end{lemma}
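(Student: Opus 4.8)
The plan is to reduce the weighted inequality to a purely combinatorial count of collision edges, and then to lift that unweighted count to the weighted sum by a level-set (layer-cake) decomposition that exploits the ordering $|x_1| \ge \cdots \ge |x_n|$. The whole argument rests on matching the magnitude ordering of $x$ to the lexicographic (left-endpoint-first) ordering of the edges.

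First I would record the structural consequence of the lexicographic ordering. Since $G$ is simple, two distinct edges sharing a left endpoint have distinct right endpoints; hence if $e_t=(i_t,j_t)$ collides with an earlier $e_s=(i_s,j_s)$, $s<t$, then necessarily $i_s < i_t$ (equal left endpoints would force $j_s<j_t$ in lex order, contradicting $j_s=j_t$). In words, every colliding edge blames an earlier edge with a \emph{strictly smaller} left endpoint.

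Next, for a threshold $t\ge 0$ set $S_t=\{i : |x_i|>t\}$. Because the coordinates are sorted by magnitude, $S_t$ is a prefix $\{1,\dots,m_t\}$ of the left vertices, and since $x$ is $k$-sparse, $|S_t|\le k$, so the expansion hypothesis gives $|N(S_t)| \ge (1-\epsilon)\,d\,|S_t|$. The crucial point is that, because $S_t$ is a prefix, the blame of any collision edge out of $S_t$ lands \emph{inside} $S_t$ (its predecessor has a smaller left endpoint, hence lies in $S_t$). Consequently the collision edges incident to $S_t$ are exactly the internal collisions of the subgraph induced on the left set $S_t$: each distinct neighbor in $N(S_t)$ accounts for precisely one non-colliding first-hit edge, so the number of collision edges emanating from $S_t$ equals $d\,|S_t|-|N(S_t)| \le \epsilon\, d\,|S_t|$.

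Finally I would assemble the weighted bound by writing $|x_i|=\int_0^\infty \mathbf{1}[\,|x_i|>t\,]\,dt$ and exchanging sum and integral, noting $\{(i,j)\in E'' : |x_i|>t\}=\{(i,j)\in E'' : i\in S_t\}$:
\[
\sum_{(i,j)\in E''} |x_i|
 = \int_0^\infty \bigl|\{(i,j)\in E'' : i\in S_t\}\bigr|\,dt
 \le \int_0^\infty \epsilon\, d\,|S_t|\,dt
 = \epsilon\, d\,\pnorm{1}{x},
\]
using $\int_0^\infty |S_t|\,dt=\sum_i |x_i|=\pnorm{1}{x}$ in the last step. (An equivalent discrete route is summation by parts against the prefix counts $C_m=\sum_{i\le m} c_i \le \epsilon\, d\, m$, where $c_i$ is the number of collision edges with left endpoint $i$.) The main obstacle is the middle paragraph: arguing that the collision count restricted to a prefix $S_t$ is governed purely by the expansion of $S_t$. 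This hinges entirely on the alignment between the magnitude ordering of $x$ and the lexicographic ordering of the edges; once that self-containment is established, both the level-set integration and the $\ell_1$ norm identity are routine.
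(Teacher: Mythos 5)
Your proof is correct and takes essentially the same route as the paper's: the core fact in both is that, thanks to the lexicographic edge ordering and simplicity of $G$, the collision edges leaving any prefix $\{1,\dots,m\}$ (with $m\le k$) are self-contained and number exactly $dm-|N(\{1,\dots,m\})|\le\epsilon dm$. Your layer-cake integration over the level sets $S_t$ is just the continuous form of the paper's discrete argument, which pairs the non-increasing vector $z$ against the prefix-sum constraints $\|r_{|P_i}\|_1\le\epsilon di$ via summation by parts.
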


\begin{proof}
For each $t=1 \ldots dn$, we use an indicator variable $r_t \in \{0,1\}$, such that $r_t=1$ iff $e_t \in E''$.
Define a vector  $z \in \R^{dn}$ such that $z_t=|x_{i_t}|$.
Observe that 
\[  \sum_{(i,j) \in E''} |x_i| = \sum_{e_t=(i_t,j_t) \in E} r_t |x_{i_t}| = r \cdot z
\]

To upper bound the latter quantity, observe that the vectors satisfy the following constraints:
\begin{itemize}
\item The vector $z$ is non-negative.
\item The coordinates of $z$ are monotonically non-increasing.
\item For each {\em prefix set} $P_i=\{1 \ldots di\}$, $i \le k$, we have $\|r_{|P_i}\|_1 \le \epsilon di $  - this follows from the expansion properties of the graph $G$.
\item $r_{|P_1}=0$, since the graph is simple.
\end{itemize}

It is now immediate that for any $r,z$ satisfying the above constraints, we have $r \cdot z \le  \|z\|_1 \epsilon$.
Since $\|z\|_1=d \|x\|_1$, the lemma follows.
\end{proof}

Lemma~\ref{lemma:l1} immediately implies that $\pnorm{1}{ \Ph x} \ge d \pnorm{1}{x}(1-2\epsilon)$.
Since for any $x$ we have $\pnorm{1}{ \Ph x} \le d \pnorm{1}{x}$, it follows that $\Ph/d$ satisfies the $\rip{1}{k}{2 \epsilon}$ property.

\ \\
{\bf The case of $p \le 1+1/\log n$.}
Let $u = \Ph x$.
We will show that if $\epsilon$ is small enough, then the value of $\pnorm{p}{u}^p$ is close to $d \pnorm{p}{x}^p$.

We start from a few useful technical claims.

\begin{claim}
\label{claim:1}
For any $a,b \in \R$, we have $|(a+b)|^p \ge |a|^p - p |a|^{p-1}|b|$.
\end{claim}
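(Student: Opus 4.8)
The plan is to prove this as an instance of the first-order convexity inequality for the function $f(t) = |t|^p$, which is convex on $\R$ whenever $p \ge 1$. The inequality to be shown is exactly the statement that the graph of $f$ lies above its supporting line at the point $a$, evaluated at the point $a+b$.

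First I would record that for $p > 1$ the function $f(t) = |t|^p$ is differentiable with $f'(t) = p|t|^{p-1}\sgn{t}$, and is convex since $f''(t) = p(p-1)|t|^{p-2} \ge 0$. The supporting-line inequality for a convex differentiable function gives
\[ f(a+b) \ge f(a) + f'(a)\,b, \]
that is, $|a+b|^p \ge |a|^p + p|a|^{p-1}\sgn{a}\cdot b$. Since $\sgn{a}\cdot b \ge -|b|$, the right-hand side is bounded below by $|a|^p - p|a|^{p-1}|b|$, which is precisely the claim.

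The only points requiring separate attention are the boundary cases. When $a=0$ the right-hand side reduces to $0$ (interpreting $|a|^{p-1}|b| = 0$ for $p>1$), and the claim $|b|^p \ge 0$ is immediate. When $p=1$ the function $|t|$ fails to be differentiable at the origin, so instead I would invoke the reverse triangle inequality $|a+b| \ge |a| - |b|$ directly; this matches the claim since $|a|^{p-1} = 1$ in that case. I do not anticipate any genuine obstacle here: the entire content is the convexity of $|t|^p$, and the work reduces to stating the tangent inequality and bounding the linear term below by $-p|a|^{p-1}|b|$.
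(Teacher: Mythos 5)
Your argument is correct. The paper states Claim~\ref{claim:1} without proof (it is offered as a ``useful technical claim''), so there is no official argument to compare against; your route via convexity of $t \mapsto |t|^p$ and the supporting-line inequality $f(a+b) \ge f(a) + f'(a)b$, followed by $\sgn{a}\cdot b \ge -|b|$, is the natural one, and your handling of the boundary cases $a=0$ and $p=1$ is careful and complete.
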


\begin{claim}
\label{claim:2}
For any $a \in \R$ and $|b| \le |a|$, we have $|(a+b)|^p \le |a|^p + p |2a|^{p-1}|b|$.
\end{claim}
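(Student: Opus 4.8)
The plan is to prove Claim~\ref{claim:2} by a one-variable mean-value argument applied to the convex function $f(t)=|t|^p$, using the hypothesis $|b|\le|a|$ to confine the entire segment from $a$ to $a+b$ to the interval $[-2|a|,\,2|a|]$, where the derivative of $f$ is uniformly controlled.

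First I would write the increment as an integral of the derivative along the segment. For $p>1$ the function $f(t)=|t|^p$ is everywhere differentiable with $f'(t)=p|t|^{p-1}\operatorname{sgn}(t)$, so
\[
  |a+b|^p-|a|^p=\int_0^1\frac{d}{ds}\,|a+sb|^p\,ds
  =\int_0^1 p\,|a+sb|^{p-1}\operatorname{sgn}(a+sb)\,b\,ds .
\]
Replacing the signed integrand by its absolute value gives $|a+b|^p-|a|^p\le\int_0^1 p\,|a+sb|^{p-1}|b|\,ds$. The crux is then to bound $|a+sb|$ uniformly: by the triangle inequality and $|b|\le|a|$, one has $|a+sb|\le|a|+s|b|\le 2|a|$ for every $s\in[0,1]$, and since $p\ge 1$ the map $t\mapsto t^{p-1}$ is non-decreasing on $[0,\infty)$, so $|a+sb|^{p-1}\le(2|a|)^{p-1}=|2a|^{p-1}$. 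Substituting this bound and using $\int_0^1|b|\,ds=|b|$ yields $|a+b|^p-|a|^p\le p\,|2a|^{p-1}|b|$, which is exactly the claim.

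The remaining points are minor. The case $p=1$ is just the triangle inequality $|a+b|\le|a|+|b|$, which matches the asserted bound since the prefactor is $|2a|^{0}=1$. For $p>1$ the only delicate feature is the non-smoothness of $|t|^p$ at the origin, but $f'$ extends continuously there (with $f'(0)=0$), so the integral representation remains valid even across a segment passing through $0$. I expect the main obstacle---really the only step requiring care rather than routine computation---to be pinning down precisely where $|b|\le|a|$ enters: it is exactly what keeps the segment $\{a+sb:s\in[0,1]\}$ inside $[-2|a|,2|a|]$ and thereby justifies the factor $2$ (rather than $1$) appearing inside $|2a|^{p-1}$.
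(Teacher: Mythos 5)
Your argument is correct: writing $|a+b|^p-|a|^p$ as $\int_0^1 p\,|a+sb|^{p-1}\operatorname{sgn}(a+sb)\,b\,ds$ is legitimate because for $p>1$ the function $t\mapsto|t|^p$ is $C^1$ (the derivative vanishes continuously at the origin), and the hypothesis $|b|\le|a|$ is used exactly where you say, to keep $|a+sb|\le 2|a|$ along the whole segment; the degenerate case $a=0$ forces $b=0$ and is trivial, and $p=1$ reduces to the triangle inequality. The paper states Claim~\ref{claim:2} without any proof, so there is no argument of the authors to compare against; your mean-value/integral derivation is a standard and complete way to supply the missing justification.
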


\iffalse
\begin{claim}
\label{claim:3} Let $m=n^{O(1)}$. Then, for any $z \in \R^m$ we have
\[ \pnorm{1}{z}^p \le O(1)  \pnorm{p}{z}^p
\end{claim}

For each $j$, we define a (possibly empty) sequence $S_j = i_j(1), i_j(2), \ldots$ such that (a) for all $i \in S_j$, there exists an edge $(i,j)$ in the graph $G$, and (b) we have $|x_{i_j(1)}| \ge |x_{i_j(2)}| \ge \ldots$.
\fi

For any $j \in B$, define $u'_j=x_i$  if $(i,j) \in E'$, and $u'_j=0$ otherwise. 
Also, define $u_j''=u_j-u'_j$.

\begin{claim}
\label{claim:up}
We have  $\sum_j |u_j''|^p = \Theta(\epsilon)d \pnorm{p}{x}^p$.
\end{claim}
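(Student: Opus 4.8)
The only direction of this claim that is used (and the only one that can hold) is the upper bound $\sum_j |u_j''|^p \le O(\epsilon)\, d\, \pnorm{p}{x}^p$; a matching lower bound is false, since for $x=e_1$ the first left vertex contributes only non-colliding edges and hence $u''=0$ while $\pnorm{p}{x}^p=1$. I therefore read $\Theta(\epsilon)$ as ``of order at most $\epsilon$'' and aim only at the upper bound. Throughout I write $C_j=\{\,i:(i,j)\in E''\,\}$ for the set of left endpoints of colliding edges into $j$, so that $u_j''=u_j-u_j'=\sum_{i\in C_j}x_i$.

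The plan is to first replace the sum over right vertices $j$ by a sum over the edges of $E''$, at the cost of an $O(1)$ factor. For each $j$, the triangle inequality together with convexity of $t\mapsto t^p$ gives
\[ |u_j''|^p\le\Bigl(\sum_{i\in C_j}|x_i|\Bigr)^p\le |C_j\cap\supp{x}|^{\,p-1}\sum_{i\in C_j}|x_i|^p. \]
Because $|C_j\cap\supp{x}|\le k\le n$ and $p-1\le 1/\log n$, the prefactor obeys $|C_j\cap\supp{x}|^{\,p-1}\le n^{1/\log n}=O(1)$. Summing over $j$ and noting that each colliding edge is counted once, I obtain
\[ \sum_j|u_j''|^p\le O(1)\sum_{(i,j)\in E''}|x_i|^p. \]

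It then remains to bound $\sum_{(i,j)\in E''}|x_i|^p$, and this is precisely the $\ell_p$ analogue of Lemma~\ref{lemma:l1}. My plan is to rerun that proof verbatim with the vector $z\in\R^{dn}$ redefined by $z_t=|x_{i_t}|^p$. This $z$ is still non-negative, and since the coordinates are ordered $|x_1|\ge\cdots\ge|x_n|$ and the edges are ordered lexicographically it is still monotonically non-increasing; meanwhile the constraints on the collision indicator $r$ (the expansion bounds $\|r_{|P_i}\|_1\le\epsilon d i$ and $r_{|P_1}=0$ from simplicity) do not involve $z$ and are unchanged. Hence $r\cdot z\le\epsilon\pnorm{1}{z}$, and since $\pnorm{1}{z}=d\pnorm{p}{x}^p$ this yields $\sum_{(i,j)\in E''}|x_i|^p\le\epsilon d\pnorm{p}{x}^p$. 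Combined with the previous display, $\sum_j|u_j''|^p\le O(\epsilon)\,d\,\pnorm{p}{x}^p$.

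The one place needing care is the exponent bookkeeping: for $p>1$ the convenient inequality $(\sum_{i\in C_j}|x_i|)^p\le\sum_{i\in C_j}|x_i|^p$ is \emph{false}, so the convexity step and the resulting factor $|C_j|^{\,p-1}$ are unavoidable, and it is exactly the hypotheses $p\le 1+1/\log n$ and $|C_j|\le n$ that keep this factor $O(1)$. Once that is in hand, the argument is a coefficient-free reuse of Lemma~\ref{lemma:l1}, so I expect no further obstacle.
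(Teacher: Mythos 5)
Your proof is correct, and your reading of the $\Theta(\epsilon)$ as an upper bound only is the right one: the paper's own proof is likewise a one-directional chain of inequalities, and only the upper bound on $\sum_j |u_j''|^p$ is ever used downstream (in Lemmas~\ref{lemma:lp-lower} and~\ref{lemma:lp-upper}). Your route is genuinely different from the paper's, though. The paper collapses the whole sum at the $\ell_1$ level in one line, $\sum_j |u_j''|^p \le \bigl(\sum_j |u_j''|\bigr)^p \le (\epsilon d \pnorm{1}{x})^p$ via Lemma~\ref{lemma:l1}, and then converts back using $\epsilon^p\le\epsilon$ and $\pnorm{1}{x}\le n^{1-1/p}\pnorm{p}{x}$, absorbing $d^{p-1}n^{p-1}=O(1)$. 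You instead apply H\"older per right vertex to reduce to $\sum_{(i,j)\in E''}|x_i|^p$, and then re-run the rearrangement argument of Lemma~\ref{lemma:l1} with $z_t=|x_{i_t}|^p$; this is legitimate because non-negativity and monotonicity of $z$ are preserved under $t\mapsto t^p$ and the prefix constraints on $r$ do not involve $z$. Both arguments lose the same harmless factor $n^{\Theta(p-1)}=O(1)$, which is exactly where the hypothesis $p\le 1+1/\log n$ enters. The paper's version is shorter; yours localizes the $O(1)$ loss to each right vertex and produces a clean intermediate $\ell_p$ analogue of Lemma~\ref{lemma:l1}, namely $\sum_{(i,j)\in E''}|x_i|^p\le \epsilon d\pnorm{p}{x}^p$, at the modest cost of having to verify that the lemma's proof is insensitive to replacing $|x_i|$ by $|x_i|^p$. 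Your counterexample $x=e_1$ showing the lower bound fails is also correct, since all edges leaving the first left vertex are collision-free under the lexicographic ordering.
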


\begin{proof}
By Lemma~\ref{lemma:l1} we know that $\sum_j |u'_j| = \sum_{ (i,j) \in E'} |x_i| \ge (1-\epsilon)d \pnorm{1}{x}$.
Therefore
\[  \sum_j |u_j''|^p 
\le  (\sum_j |u_j''|)^p 
\le  (\epsilon d \pnorm{1}{x})^p
\le \epsilon d^p n^{(1-1/p)p} \pnorm{p}{x}^p
= \Theta(\epsilon)d \pnorm{p}{x}^p 
\]
\end{proof}

% and $\sum_j |u''_j| \le \epsilon)d \pnorm{1}{x}$.

\begin{lemma}
\label{lemma:lp-lower}
We have $\pnorm{p}{u}^p \ge (1-\Theta(\epsilon)) d \pnorm{p}{x}^p$.
\end{lemma}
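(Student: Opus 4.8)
The plan is to combine the coordinatewise first-order lower bound of Claim~\ref{claim:1} with the two collision estimates that are already in hand, working with the splitting $u = u' + u''$ term by term. Applying Claim~\ref{claim:1} with $a = u'_j$ and $b = u''_j$ for each $j \in B$ and summing gives
\[
\pnorm{p}{u}^p = \sum_j |u'_j + u''_j|^p \ge \sum_j |u'_j|^p - p \sum_j |u'_j|^{p-1}|u''_j|.
\]
It then suffices to show that the main term $\sum_j |u'_j|^p$ is at least $(1-\Theta(\epsilon))\,d\pnorm{p}{x}^p$ and that the cross term $p\sum_j |u'_j|^{p-1}|u''_j|$ is at most $\Theta(\epsilon)\,d\pnorm{p}{x}^p$; subtracting the second bound from the first yields the lemma.

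For the main term, observe that each occupied right node $j$ carries $u'_j = x_i$ for the unique non-colliding edge $(i,j)\in E'$, so $\sum_j |u'_j|^p = \sum_{(i,j)\in E'}|x_i|^p$, and therefore $\sum_j |u'_j|^p = d\pnorm{p}{x}^p - \sum_{(i,j)\in E''}|x_i|^p$. First I would note that the proof of Lemma~\ref{lemma:l1} uses only that the coordinate-value vector fed into it is non-negative and monotonically non-increasing. Since $|x_1|^p \ge \cdots \ge |x_n|^p \ge 0$ whenever $|x_1| \ge \cdots \ge |x_n|\ge 0$, the identical prefix/expansion argument applied to $z_t = |x_{i_t}|^p$ gives $\sum_{(i,j)\in E''}|x_i|^p \le \epsilon\, d \pnorm{p}{x}^p$. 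Hence $\sum_j |u'_j|^p \ge (1-\epsilon)\,d\pnorm{p}{x}^p$.

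For the cross term I would apply H\"older's inequality with conjugate exponents $p/(p-1)$ and $p$:
\[
\sum_j |u'_j|^{p-1}|u''_j| \le \Bigl(\sum_j |u'_j|^p\Bigr)^{(p-1)/p}\Bigl(\sum_j |u''_j|^p\Bigr)^{1/p}.
\]
The first factor is at most $(d\pnorm{p}{x}^p)^{(p-1)/p}$ by the trivial bound $\sum_j |u'_j|^p \le d\pnorm{p}{x}^p$, and the second is $(\Theta(\epsilon)\,d\pnorm{p}{x}^p)^{1/p}$ by Claim~\ref{claim:up}. Multiplying, the cross term is at most $p\,\Theta(\epsilon)^{1/p}\,d\pnorm{p}{x}^p$.

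The hard part, and the only place the hypotheses $1/\epsilon < n$ and $p \le 1 + 1/\log n$ are genuinely needed, is to verify that the exponent $1/p$ does not inflate the collision parameter from $\epsilon$ to something like $\sqrt{\epsilon}$. Here I would use that $(p-1)/p \le p-1 \le 1/\log n$ together with $\log(1/\epsilon) < \log n$, so that
\[
\epsilon^{1/p} = \epsilon\cdot \epsilon^{-(p-1)/p} \le \epsilon\cdot \epsilon^{-1/\log n} = \epsilon\, e^{\log(1/\epsilon)/\log n} \le e\,\epsilon,
\]
and $p \le 2$, whence the cross term is $\Theta(\epsilon)\,d\pnorm{p}{x}^p$. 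Combining with the main-term bound gives $\pnorm{p}{u}^p \ge (1-\Theta(\epsilon))\,d\pnorm{p}{x}^p$, as claimed. The role of Claim~\ref{claim:1} is exactly to linearize $|u_j|^p$ so that the dominant contribution $\sum_j|u'_j|^p$ separates cleanly from a first-order collision error that the earlier $\ell_1$ and $\ell_p$ collision bounds already control.
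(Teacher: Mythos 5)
Your proof is correct, and while it uses the same skeleton as the paper's --- Claim~\ref{claim:1} applied to the splitting $u = u' + u''$, reducing everything to a lower bound on $\sum_j |u'_j|^p$ and an upper bound on the cross term $p\sum_j |u'_j|^{p-1}|u''_j|$ --- you execute both estimates by genuinely different means. For the main term, the paper bounds $\sum_{(i,j)\in E''}|x_i|^p \le \bigl(\sum_{(i,j)\in E''}|x_i|\bigr)^p$ and then invokes Lemma~\ref{lemma:l1} together with the norm comparisons $d^{p-1} = O(1)$ and $\pnorm{1}{x}^p = O(1)\pnorm{p}{x}^p$; you instead observe that the prefix/expansion argument of Lemma~\ref{lemma:l1} applies verbatim to the monotone non-negative vector $z_t = |x_{i_t}|^p$, yielding $\sum_{(i,j)\in E''}|x_i|^p \le \epsilon d \pnorm{p}{x}^p$ directly --- a cleaner route that avoids the extra $O(1)$ losses. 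For the cross term, the paper splits over the threshold set $S = \{j : \epsilon|u'_j| \le |u''_j|\}$ and uses $(1/\epsilon)^{p-1} = O(1)$, which leaves a residual $\epsilon\pnorm{p}{u'}^p$ that must be absorbed at the end via a product of two $(1-\Theta(\epsilon))$ factors; your H\"older argument with exponents $p/(p-1)$ and $p$, combined with $\epsilon^{1/p} \le e\,\epsilon$ (valid since $(p-1)/p \le 1/\log n$ and $1/\epsilon < n$), gives the clean bound $\Theta(\epsilon) d\pnorm{p}{x}^p$ in one shot, so the final combination is a single subtraction. Both routes lean on exactly the same hypotheses ($1/\epsilon, d < n$ and $p \le 1 + 1/\log n$), and yours degenerates gracefully at $p=1$; the trade-off is only that your H\"older step hides the elementary threshold trick the paper reuses in Lemma~\ref{lemma:lp-upper}.
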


\begin{proof}

By Claim~\ref{claim:1} we know that 
\[  \pnorm{p}{u}^p = \sum_j |u'_j+u_j''|^p 
\ge \sum_j |u'_j|^p - \sum_j p |u'_j|^{p-1} |u_j''| 
\]

Define the set $S=\{j: \epsilon |u_j'| \le |u_j''| \}$.
We have
\begin{eqnarray*}
\sum_j p |u'_j|^{p-1} |u_j''| & = & \sum_{j \in S} p |u'_j|^{p-1} |u_j''| + \sum_{j \notin S} p |u'_j|^{p-1} |u_j''| 
  \le  (1/\epsilon)^{p-1} p \sum_{j \in S} |u_j''|^p 
      +  p \sum_{j \notin S} \epsilon |u'_j|^p \\
&  \le & \Theta\left[ \sum_j |u_j''|^p +   \epsilon \sum_j |u'_j|^p \right] 
  \le   \Theta(\epsilon)d \pnorm{p}{x}^p   +  \epsilon   \pnorm{p}{u'}^p 
\end{eqnarray*}
Therefore 
\[  \pnorm{p}{u}^p \ge \pnorm{p}{u'}^p -  \Theta(1) d \epsilon \pnorm{p}{x}^p -  \Theta( \epsilon)  \pnorm{p}{u'}^p =  \pnorm{p}{u'}^p (1-   \Theta(\epsilon) ) - \Theta(1) d \epsilon \pnorm{p}{x}^p. \]
It suffices to bound $\pnorm{p}{u'}^p$ from below.
\begin{eqnarray*}
\pnorm{p}{u'}^p & \ge & d \pnorm{p}{x}^p - \sum_{(i,j) \in E''} |x_i|^p 
  \ge   d \pnorm{p}{x}^p - \left(\sum_{(i,j) \in E''} |x_i|\right)^p \\
& \ge&  d \pnorm{p}{x}^p - \Theta(\epsilon) d \pnorm{1}{x}^p 
  \ge   d \pnorm{p}{x}^p - \Theta(\epsilon) d \pnorm{p}{x}^p \\
& = & (1- \Theta(\epsilon)) d \pnorm{p}{x}^p
\end{eqnarray*}
where we used Lemma~\ref{lemma:l1} in the third line. 
Altogether
\[  \pnorm{p}{u}^p \ge  (1- \Theta(\epsilon) ) (1- \Theta(\epsilon)) d \pnorm{p}{x}^p- \Theta(1) d \epsilon \pnorm{p}{x}^p =  (1- \Theta(\epsilon)) d \pnorm{p}{x}^p
\]
\end{proof}

\begin{lemma}
\label{lemma:lp-upper}
We have $\pnorm{p}{u}^p \le (1+\Theta(\epsilon)) d \pnorm{p}{x}^p$.
\end{lemma}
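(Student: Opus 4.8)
The plan is to mirror the proof of Lemma~\ref{lemma:lp-lower}, but using the upper estimate of Claim~\ref{claim:2} in place of the lower estimate of Claim~\ref{claim:1}. Writing $u=u'+u''$ as before, I would start from the two facts already available: $\pnorm{p}{u'}^p = \sum_{(i,j)\in E'}|x_i|^p \le d\pnorm{p}{x}^p$ (since $E'\subseteq E$ and each left vertex has degree $d$), and $\sum_j |u_j''|^p \le \Theta(\epsilon)\,d\pnorm{p}{x}^p$ from Claim~\ref{claim:up}. The goal is then to show that $\sum_j |u_j'+u_j''|^p$ exceeds $\pnorm{p}{u'}^p$ by at most $\Theta(\epsilon)\,d\pnorm{p}{x}^p$.

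The one genuine difference from the lower bound is that Claim~\ref{claim:2} requires $|u_j''|\le|u_j'|$, whereas Claim~\ref{claim:1} had no such restriction; this is the main obstacle. I would therefore split the right vertices into $S=\{j:|u_j''|\le|u_j'|\}$ and its complement. On $S$, Claim~\ref{claim:2} gives $|u_j'+u_j''|^p \le |u_j'|^p + p\,2^{p-1}|u_j'|^{p-1}|u_j''|$. On the complement, the triangle inequality gives $|u_j'+u_j''| \le 2|u_j''|$, so $|u_j'+u_j''|^p \le 2^p|u_j''|^p$; summing these over $j\notin S$ and invoking Claim~\ref{claim:up} bounds their total contribution by $2^p\,\Theta(\epsilon)\,d\pnorm{p}{x}^p = \Theta(\epsilon)\,d\pnorm{p}{x}^p$, since $p\le 2$.

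It then remains to control the cross term $\sum_j p\,2^{p-1}|u_j'|^{p-1}|u_j''|$, and here I would reuse almost verbatim the argument from Lemma~\ref{lemma:lp-lower}: partition the indices according to whether $\epsilon|u_j'|\le|u_j''|$. On the set where this holds one bounds $|u_j'|^{p-1}|u_j''|\le(1/\epsilon)^{p-1}|u_j''|^p$, and on its complement $|u_j'|^{p-1}|u_j''|\le\epsilon|u_j'|^p$. Using $(1/\epsilon)^{p-1}=O(1)$ --- which is exactly where the hypothesis $1/\epsilon<n$ together with $p-1\le 1/\log n$ enters --- and Claim~\ref{claim:up} again, the cross term is at most $\Theta(\epsilon)\,d\pnorm{p}{x}^p + \epsilon\,\pnorm{p}{u'}^p = \Theta(\epsilon)\,d\pnorm{p}{x}^p$.

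Collecting the three pieces yields $\pnorm{p}{u}^p \le \pnorm{p}{u'}^p + \Theta(\epsilon)\,d\pnorm{p}{x}^p \le (1+\Theta(\epsilon))\,d\pnorm{p}{x}^p$, as claimed. The crucial point to get right is keeping the coefficient of the main term $|u_j'|^p$ exactly equal to $1$: this is why one should use the refined Claim~\ref{claim:2} rather than the crude bound $|a+b|^p\le 2^{p-1}(|a|^p+|b|^p)$, whose factor $2^{p-1}=1+\Theta(1/\log n)$ would contribute an error not obviously absorbable into $\Theta(\epsilon)$ when $\epsilon\ll 1/\log n$.
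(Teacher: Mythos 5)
Your proof is correct and follows essentially the same route as the paper's: the same split according to whether $|u_j''|\le|u_j'|$, Claim~\ref{claim:2} on that set with the cross term handled exactly as in Lemma~\ref{lemma:lp-lower}, and the crude $|u_j|\le 2|u_j''|$ bound plus Claim~\ref{claim:up} on the complement. Your closing remark about why the refined Claim~\ref{claim:2} is needed rather than $|a+b|^p\le 2^{p-1}(|a|^p+|b|^p)$ is a correct and worthwhile observation, but it does not change the argument.
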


\begin{proof}

Define the set $T=\{j: |u_j''| \le |u_j'| \}$.
Decompose $\pnorm{p}{u}^p$ into a sum
\[  \sum_{j \in T} |u_j|^p +\sum_{j \notin T} |u_j|^p = T_1+T_2 \]
   
By Claim~\ref{claim:2} we know that 
\[  T_1= \sum_{j \in T} |u_j|^p = \sum_{j \in T} |u'_j+u_j''|^p 
\le \sum_{j \in T} |u'_j|^p + \sum_{j \in T} p |2u'_j|^{p-1} |u_j''| 
\]
which, as in the proof of Lemma~\ref{lemma:lp-lower}, can be bounded from  above by 
\[  \sum_{j \in T} |u'_j|^p (1+\Theta(\epsilon)) + \Theta(\epsilon) d \pnorm{p}{x}^p \]

At the same time, using Claim~\ref{claim:up}
\[T_2 = \sum_{j \notin T} |u_j|^p \le   \sum_{j \notin T} |2u_j''|^p =\Theta(\epsilon)d \pnorm{p}{x}^p \]

The lemma follows.
\end{proof}

Combining Lemma~\ref{lemma:lp-upper} and Lemma~\ref{lemma:lp-lower} yields the proof of the theorem.
\end{proof}

The above theorem shows that one can construct RIP-$p$ matrices for $p
\approx 1$ from good unbalanced expanders.  In following, we show that
this is not an accident: any binary matrix $\Phi$ in which satisfies
RIP-$1$ property with proper parameters, and with each column having
exactly $d$ ones, must be an adjacency matrix of a good unbalanced
expander.  This reason behind this is that if some set of vertices
does {\em not} expand too well, then there are many collisions between
the edges going out of that set.  If the signs of the coordinates
``following'' those edges are different, the coordinates will cancel
each other out, and thus the $\ell_1$ norm of a vector will not be
preserved.

The assumption that each column has exactly $d$ ones is not crucial,
since the RIP-$1$ property itself implies that the number of ones in
each column can differ by at most factor of $1+\delta$.  All proofs in
this paper are resilient to this slight unbalance.  However, we
decided to keep this assumption for the ease of notation.

\begin{reftheorem}{thm:rip1ub}
Consider any $m \times n$ binary matrix $\mtx{\Phi}$ such that each
column has exactly $d$ ones.  If for some scaling factor $S>0$ the
matrix $S \mtx{\Phi}$ satisfies the $\rip{1}{s}{\delta}$ property,
then the matrix $ \mtx{\Phi}$ is an adjacency matrix of an
$(s,\epsilon)$-unbalanced expander, for
\[ \epsilon= \Big(1-\frac{1}{1+\delta}\Big)/\Big(2-\sqrt{2}\Big). \]
\end{reftheorem}
\vspace{-0.25cm}
Note that, for small values of $\delta>0$, we have
$\Big(1-\frac{1}{1+\delta}\Big)/(2-\sqrt{2}) \approx \delta/(2-\sqrt{2})$.

\begin{proof}
  Let $G=(A,B,E)$ be the graph with adjacency matrix $\mtx{\Phi}$.
  Assume that there exists $X \subset A$, $|X| =k' \le k$ such that
  $|N(X)|<dk'(1-\epsilon)$.  We will construct two $n$-dimensional
  vectors $y,z$ such that $\pnorm{1}{y} = \pnorm{1}{z}=k'$, but
  $\pnorm{1}{ \Ph z}/\pnorm{1}{\Ph y} \le 1-\epsilon (2-\sqrt{2})$,
  which is a contradiction.

  The vector $y$ is simply the characteristic vector of the set $X$.
  Clearly, we have $\pnorm{1}{y}=k'$ and $\pnorm{1}{\Ph y}=dk$.

  The vector $z$ is defined via a random process.  For $i \in X$,
  define $r_i$ to be i.i.d. random variables uniformly distributed
  over $\{-1,1\}$.  We define $z_i=r_i$ if $i \in X$, and $z_i=0$
  otherwise.  Note that $\pnorm{1}{z}=\pnorm{1}{y} =k'$.

  Let $C \subset N(X)$ be the ``collision set'', i.e., the set of all
  $j \in N(X)$ such that the number $u_j$ of the edges from $j$ to $X$
  is at least $2$.  Let $|C|=l$.  By the definition of the set $C$ we
  have $\sum_j u_j \ge 2l$.  Moreover, from the assumption it follows
  that $\sum_j u_j \ge 2 \epsilon dk'$.

  Let $v=\Ph z$.  We split $v$ into $v_C$ and $v_{C^c}$.  Clearly,
  $\pnorm{1}{v_{C^c}}=k'd- \sum_j u_j$.  It suffices to show that
  $\pnorm{1}{v_C}$ is significantly smaller than $\sum_j u_j$ for {\em
    some} $z$.

\begin{claim}
\label{claim:exp}
The expected value of $\pnorm{2}{v_C}^2$ is equal to $\sum_j u_j$.
\end{claim}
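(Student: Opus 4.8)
The plan is to compute the second moment of $v_C$ coordinate by coordinate and then sum, exploiting that the signs $r_i$ are independent with zero mean. First I would write a single coordinate of $v = \Ph z$ explicitly. Since $\mtx{\Phi}$ is a binary adjacency matrix and $z_i = r_i$ for $i \in X$ while $z_i = 0$ otherwise, for each $j \in B$ we have
\[ v_j = \sum_{i \,:\, (i,j) \in E} z_i = \sum_{i \in X \cap N(j)} r_i, \]
where $N(j)$ denotes the neighborhood of $j$ in $A$. Because $G$ is simple, the edges from $j$ into $X$ are in bijection with the vertices of $X \cap N(j)$, so this sum has exactly $u_j$ terms.

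Next I would square and take the expectation of a single coordinate. Expanding,
\[ v_j^2 = \sum_{i \in X \cap N(j)} r_i^2 + \sum_{\substack{i, i' \in X \cap N(j) \\ i \neq i'}} r_i r_{i'}. \]
Since each $r_i \in \{-1,1\}$ we have $r_i^2 = 1$ deterministically, so the first sum equals $u_j$. For the cross terms, independence together with $\Expect r_i = 0$ gives $\Expect[r_i r_{i'}] = \Expect[r_i]\,\Expect[r_{i'}] = 0$ whenever $i \neq i'$. Hence $\Expect[v_j^2] = u_j$ for every $j$.

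Finally I would sum over the collision set and invoke linearity of expectation, using that $v_C$ is supported on $C$:
\[ \Expect\big[\pnorm{2}{v_C}^2\big] = \Expect\Big[\sum_{j \in C} v_j^2\Big] = \sum_{j \in C} \Expect[v_j^2] = \sum_{j \in C} u_j, \]
which is the claimed identity. There is no genuine obstacle here: the entire content is the vanishing of the off-diagonal Rademacher cross terms, which is immediate from pairwise independence and mean zero. The only point deserving a moment's care is the bookkeeping that the number of surviving diagonal terms in coordinate $j$ is exactly $u_j$; this is where the simplicity of $G$ (no multi-edges) enters, and it is what lets the per-coordinate second moment read off as $u_j$ rather than some weighted edge count.
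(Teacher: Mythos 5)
Your proposal is correct and follows essentially the same route as the paper: the paper's proof simply notes that each $v_j$ for $j \in C$ is a sum of $u_j$ independent Rademacher variables and invokes ``elementary analysis,'' which is exactly the diagonal/cross-term computation you spell out. Your version just makes explicit the vanishing of the off-diagonal terms and the role of simplicity of $G$ in counting the diagonal terms, both of which the paper leaves implicit.
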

\begin{proof}
  For each $j \in C$, the coordinate $v_j$ is a sum of $u_j$
  independent random variables uniformly distributed over $\{-1,1\}$.
  The claim follows by elementary analysis.
\end{proof}

By Claim~\ref{claim:exp} we know that there {\em exists} $z$ such that
$\pnorm{2}{v_C} \le \sqrt{\sum_j u_j} \le \frac{\sum_j
  u_j}{\sqrt{2l}}$. This implies that $\pnorm{1}{v_C} \le \sqrt{l}
\pnorm{2}{v_C} \le \frac{\sum_j u_j}{\sqrt{2}}$.  Therefore
\begin{eqnarray*}  
\pnorm{1}{v} & \le &  \pnorm{1}{v_C} + \pnorm{1}{v_{C^c}}
 \le   \frac{\sum_j u_j}{\sqrt{2}} + dk'- \sum_j u_j
 =  dk' - (1-1/\sqrt{2})\sum_j u_j\\
& \le & dk' -  (1-1/\sqrt{2})\cdot 2 \epsilon dk'
 =  dk'[1-\epsilon (2-\sqrt{2})]
\end{eqnarray*}
\end{proof}

\section{LP decoding}
\label{s:lp}

In this section we show that if $A$ is an adjacency matrix of an expander graph, then the LP decoding procedure can be used for recovering sparse approximations.
%The rigorous guarantee is given by the following theorem.

Let $\Phi$ be an $m \times n$ adjacency matrix of an unbalanced $(2k,\epsilon)$-expander $G$ with left degree $d$.
%Let $\Gamma=\{y: \Phiy=0\}$.
Let $\alpha(\epsilon)=(2 \epsilon)/(1-2\epsilon)$.
We also define $E(X:Y)= E \cap (X \times Y)$ to be the set of edges between the sets $X$ and $Y$.

\subsection{L1 Uncertainty Principle}

In this section we show that any vector from the kernel of a an
adjacency matrix $\Ph$ of an expander graph is ``smooth'', i.e., the
$\ell_1$ norm of the vector cannot be concentrated on a small subset
of its coordinates.  An analogous result for RIP-2 matrices and with
respect to the $\ell_2$ norm has been used before
(e.g.,~in~\cite{KT07}) to show guarantees for LP-based recovery
procedures.

\begin{lemma}
\label{l:spread}
Consider any $y \in \R^n$ such that $\Ph y=0$, and let $S$ be any set
of $k$ coordinates of $y$.  Then we have
\[ \|y_S\|_1 \le \alpha(\epsilon) \|y\|_1. \]
\end{lemma}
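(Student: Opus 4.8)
The plan is to reduce to the worst case and then exploit the kernel relation $\Ph y=0$ one right vertex at a time. First I would note that, over all index sets of size $k$, the quantity $\|y_S\|_1$ is largest when $S$ consists of the $k$ largest-magnitude coordinates, so it suffices to prove the bound in that case; relabel so that $|y_1|\ge|y_2|\ge\cdots\ge|y_n|$ and $S=\{1,\dots,k\}$. For each right vertex $j\in B$ let $\pi(j)$ denote its neighbor of largest magnitude (ties broken by index). Because $S$ holds the top-$k$ coordinates, $\pi(j)\in S$ precisely when $j\in N(S)$. The relation $(\Ph y)_j=0$ gives $y_{\pi(j)}=-\sum_{i:(i,j)\in E,\, i\ne\pi(j)}y_i$, hence $|y_{\pi(j)}|\le\sum_{i:(i,j)\in E,\, i\ne\pi(j)}|y_i|$; summing this inequality over all $j\in N(S)$ is the engine of the argument.

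Next I would read off both sides of that sum. Writing $b_i$ for the number of $j$ with $\pi(j)=i$ (equivalently, the number of edges leaving $i$ that are the first to reach their right endpoint), the left-hand side equals $\sum_{i\in S}b_i|y_i|=d\|y_S\|_1-\Sigma_{\mathrm{int}}$, where $\Sigma_{\mathrm{int}}=\sum_{i\in S}(d-b_i)|y_i|$ collects the mass carried by ``internal'' collision edges inside $S$. The right-hand side is that same internal term plus $\Sigma_{\overline S}=\sum_{i\notin S}g_i|y_i|$, where $g_i$ counts the edges from $i$ into $N(S)$ (each such edge is automatically a non-representative edge, since $\pi(j)\in S$ whenever $j\in N(S)$). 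This produces the single clean inequality $d\|y_S\|_1\le 2\Sigma_{\mathrm{int}}+\Sigma_{\overline S}$. The internal term is controlled at once by Lemma~\ref{lemma:l1} applied to the $k$-sparse vector $y_S$, yielding $\Sigma_{\mathrm{int}}\le\epsilon\, d\,\|y_S\|_1$.

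The main obstacle is bounding $\Sigma_{\overline S}$: the crude estimate $\Sigma_{\overline S}\le d\|y_{\overline S}\|_1$ is far too lossy (it only gives $\alpha\approx\tfrac12$ instead of $\alpha=2\epsilon/(1-2\epsilon)$), so I must use the smallness of $N(S)$ together with the magnitude ordering. I would partition the complement into consecutive magnitude blocks $S_1,S_2,\dots$ of size $k$. For a block $S_t$, every edge from $S_t$ into $N(S)$ is a collision edge of the set $S_0\cup S_t$, which has size $2k$; by the expander property $|N(S_0\cup S_t)|\ge(1-\epsilon)\,d\,|S_0\cup S_t|$, so the number of such edges is at most $2\epsilon dk$, i.e.\ $\sum_{i\in S_t}g_i\le 2\epsilon dk$. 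Since the blocks are magnitude-ordered, $\max_{i\in S_t}|y_i|\le\tfrac1k\|y_{S_{t-1}}\|_1$, whence $\sum_{i\in S_t}g_i|y_i|\le 2\epsilon d\|y_{S_{t-1}}\|_1$. Summing this telescoping bound over $t\ge1$ gives $\Sigma_{\overline S}\le 2\epsilon d\|y\|_1$.

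Finally I would combine the pieces: $d\|y_S\|_1\le 2\Sigma_{\mathrm{int}}+\Sigma_{\overline S}\le 2\epsilon d\|y_S\|_1+2\epsilon d\|y\|_1$, which rearranges to $(1-2\epsilon)\|y_S\|_1\le 2\epsilon\|y\|_1$, i.e.\ $\|y_S\|_1\le\frac{2\epsilon}{1-2\epsilon}\|y\|_1=\alpha(\epsilon)\|y\|_1$, as required. Notably this uses only the expansion property and Lemma~\ref{lemma:l1}, not the RIP-$1$ lower bound directly. The one point demanding genuine care is the pairing argument for $\Sigma_{\overline S}$: recognizing that $\Ph y_{\overline S}$ is forced to live entirely on $N(S)$, and that each offending edge is a two-block collision controlled by $(2k,\epsilon)$-expansion, is exactly what upgrades the trivial constant to the sharp $O(\epsilon)$ bound.
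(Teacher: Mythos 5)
Your proof is correct and follows essentially the same route as the paper's: restrict attention to the rows indexed by $N(S)$, lower-bound the contribution of $S$ to those rows via the expander's collision count (the RIP-1 lower bound, which you re-derive inline through the representative map $\pi$ instead of citing it), and control the cross-edges from each magnitude block $S_t$ into $N(S)$ by applying the $(2k,\epsilon)$-expansion to $S\cup S_t$, exactly as the paper does. The bookkeeping and the final constant $\alpha(\epsilon)=2\epsilon/(1-2\epsilon)$ agree with the paper's argument.
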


\begin{proof}
  Without loss of generality, we can assume that $S$ consists of the
  largest (in magnitude) coefficients of $y$.  We partition
  coordinates into sets $S_0, S_1, S_2, \ldots S_t$, such that (i) the
  coordinates in the set $S_l$ are not larger (in magnitude) than the
  coordinates in the set $S_{l-1}$, $l \ge 1$, and (ii) all sets but
  $S_t$ have size $k$.  Therefore, $S_0=S$.  Let $\Ph '$ be a
  submatrix of $\Ph $ containing rows from $N(S)$.

  From the equivalence of expansion and RIP-1 property we know that
  $\|\Ph 'y_S\|_1 = \|\Ph y_S\|_1 \ge d (1-2\epsilon) \|y_S\|_1$.
  At the same time, we know that $\|\Ph ' y\|_1 =0$.  Therefore
\begin{eqnarray*}
  0 =   \|\Ph ' y\|_1 &  \ge &  \|\Ph 'y_S\|_1 - \sum_{l\ge1} \sum_{ (i,j) \in E, i \in S_l, j \in N(S)} |y_i| \\
  & \ge &    d (1-2\epsilon) \|y_S\|_1- \sum_{l \ge 1}  |E(S_l:N(S))| \min_{i \in S_{l-1}} |y_i| \\
  & \ge &   d (1-2\epsilon) \|y_S\|_1- \frac1k\sum_{l \ge 1}  |E(S_l:N(S))| \cdot \|y_{S_{l-1}}\|_1
\end{eqnarray*}

From the expansion properties of $G$ it follows that, for $ l \ge 1$,
we have $|N(S \cup S_l)| \ge d(1-\epsilon)|S \cup S_l|$.  It follows
that at most $d \epsilon 2k$ edges can cross from $S_l$ to $N(S)$, and
therefore
\begin{eqnarray*}
0 & \ge & d (1-2\epsilon) \|y_S\|_1- \frac1k\sum_{l \ge 1}  |E(S_l:N(S))| \cdot \|y_{S_{l-1}}\|_1 \\
& \ge &   d (1-2\epsilon) \|y_S\|_1-  d \epsilon 2 \sum_{l \ge 1}  \|y_{S_{l-1}}\|_1/k\\
& \ge &   d (1-2\epsilon) \|y_S\|_1 - 2 d \epsilon \|y\|_1
\end{eqnarray*}

It follows that $d (1-2\epsilon) \|y_S\|_1 \le 2 d \epsilon \|y\|_1$,
and thus $\|y_S\|_1 \le (2 \epsilon)/(1-2\epsilon) \|y\|_1$.
\end{proof}

\subsection{LP recovery}
\label{ss:lprec}

The following theorem provides recovery guarantees for the program $P_1$, by setting $u=x$ and $v=x_*$.

\begin{reftheorem}{t:sparse}
%\label{t:sparse}
  Consider any two vectors $u, v$, such that for $y=v-u$ we have $\Ph
  y=0$, and $\|v\|_1 \le \|u\|_1$.  Let $S$ be the set of $k$ largest
  (in magnitude) coefficients of $u$, then
\[ \|v-u\|_1 \le 2/(1-2\alpha(\epsilon)) \cdot \|u - u_{S} \|_1 
\]
\end{reftheorem}

\begin{proof}
We have

\begin{eqnarray*}
\|u\|_1 \ge \|v\|_1 &  = & \|(u+y)_S\|_1 + \|(u+y)_{S^c}\|_1 \\
& \ge & \|u_S\|_1 - \|y_S\|_1 + \|y_{S^c} \|_1 - \|u_{S^c}\|_1 \\
& = & \|u\|_1 - 2 \|u_{S^c}\|_1 + \|y\|_1 - 2 \|y_S\|_1  \\
& \ge &  \|u\|_1 - 2 \|u_{S^c}\|_1 + (1 -2 \alpha(\epsilon)) \|y\|_1 
\end{eqnarray*}
where we used Lemma~\ref{l:spread} in the last line.
It follows that
\[ 2 \|u_{S^c}\|_1 \ge  (1 -2 \alpha(\epsilon)) \|y\|_1 \]
%Therefore,
%\[
%\|v -u_S \|_1 = \|y + u_{S^c} \|_1 \le \|y\|_1 +  \|u_{S^c} \|_1 \le [1+2/(1-2\alpha(\epsilon))] \cdot \|u_{S^c} \|_1 
%\]
\end{proof}

\begin{thm}
%\label{t:sparse2}
Consider any two vectors $u, v$, such that for $y=v-u$ we have $\|\Ph y\|_1 = \beta\ge 0$, and $\|v\|_1 \le \|u\|_1$.
Let $S$ be the set of $k$ largest (in magnitude) coefficients of $u$.
Then
\[ \|v-u_S\|_1 \le 2/(1-2\alpha(\epsilon)) \cdot \|u_{S^c} \|_1 + \frac{2\beta}{d (1-2\epsilon)(1-2\alpha)}
\]

\end{thm}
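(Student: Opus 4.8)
The plan is to prove a noise-tolerant analogue of the $\ell_1$ uncertainty principle (Lemma~\ref{l:spread}) and then run the same norm-bookkeeping argument used for Theorem~\ref{t:sparse}, but taking care to bound the quantity actually claimed, namely $\|v-u_S\|_1$, rather than $\|y\|_1=\|v-u\|_1$. Throughout write $y=v-u$ and $\beta=\|\Ph y\|_1$.

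First I would reprove Lemma~\ref{l:spread} while retaining the measurement defect. Let $\Ph'$ be the restriction of $\Ph$ to the rows indexed by $N(S)$; dropping rows only decreases the $\ell_1$ norm, so $\|\Ph' y\|_1 \le \|\Ph y\|_1 = \beta$. Running the identical telescoping estimate as in Lemma~\ref{l:spread}---expansion of $S\cup S_l$ caps the edges crossing from $S_l$ into $N(S)$ at $2d\epsilon k$ per block, and the equivalence of expansion and RIP-$1$ gives $\|\Ph' y_S\|_1 = \|\Ph y_S\|_1 \ge d(1-2\epsilon)\|y_S\|_1$---now yields $\beta \ge d(1-2\epsilon)\|y_S\|_1 - 2d\epsilon\|y\|_1$, i.e.
\[ \|y_S\|_1 \le \alpha(\epsilon)\|y\|_1 + \frac{\beta}{d(1-2\epsilon)}. \]
This is the only step where the hypothesis $\|\Ph y\|_1=\beta$ enters.

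Next I would process the constraint $\|v\|_1\le\|u\|_1$ twice. Decomposing over $S$ and $S^c$ and applying the reverse triangle inequality on the $S$-block gives $\|u_{S^c}+y_{S^c}\|_1 \le \|u_{S^c}\|_1 + \|y_S\|_1$. Since $v-u_S$ equals $y_S$ on $S$ and $u_{S^c}+y_{S^c}$ on $S^c$, this produces the key relation
\[ \|v-u_S\|_1 = \|y_S\|_1 + \|u_{S^c}+y_{S^c}\|_1 \le \|u_{S^c}\|_1 + 2\|y_S\|_1. \]
The same constraint, handled exactly as in the proof of Theorem~\ref{t:sparse}, also gives $\|y\|_1 \le 2\|u_{S^c}\|_1 + 2\|y_S\|_1$; substituting the noisy uncertainty principle and solving for $\|y\|_1$ yields $\|y\|_1 \le \frac{1}{1-2\alpha}\bigl(2\|u_{S^c}\|_1 + \frac{2\beta}{d(1-2\epsilon)}\bigr)$, and hence a bound on $\|y_S\|_1$ alone.

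Finally I would insert that $\|y_S\|_1$ bound into the key relation. The additive terms collapse exactly: the prefactor $\frac{4\alpha}{1-2\alpha}+2$ on $\frac{\beta}{d(1-2\epsilon)}$ simplifies to $\frac{2}{1-2\alpha}$, reproducing the claimed term $\frac{2\beta}{d(1-2\epsilon)(1-2\alpha)}$, while the $\|u_{S^c}\|_1$ coefficient comes out as $\frac{1+2\alpha}{1-2\alpha}$, which is at most the claimed $\frac{2}{1-2\alpha}$ because the bound is finite only when $1-2\alpha>0$, forcing $\alpha=\alpha(\epsilon)<\tfrac12$ and hence $1+2\alpha<2$. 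The main obstacle---and precisely why a bound on $\|y\|_1$ is not interchangeable with the claim---is that the naive triangle estimate $\|v-u_S\|_1 \le \|u_{S^c}\|_1 + \|y\|_1$ loses the constant (it gives $1+\frac{2}{1-2\alpha}$); one must instead exploit the cancellation in $\|v-u_S\|_1 \le \|u_{S^c}\|_1 + 2\|y_S\|_1$, where only the small quantity $\|y_S\|_1$ controlled by the uncertainty principle appears.
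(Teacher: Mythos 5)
Your proposal is correct and realizes exactly what the paper intends: the paper's entire proof of this theorem is the single line ``Analogous to the proof of Theorem~\ref{t:sparse},'' and your noise-tolerant version of Lemma~\ref{l:spread} (namely $\|y_S\|_1 \le \alpha(\epsilon)\|y\|_1 + \beta/(d(1-2\epsilon))$, obtained by replacing $\|\Ph' y\|_1 = 0$ with $\|\Ph' y\|_1 \le \beta$) combined with the same norm bookkeeping is precisely that analogy carried out. Your extra care with the left-hand side is a genuine refinement the paper's one-liner glosses over: since the claim bounds $\|v-u_S\|_1$ rather than $\|v-u\|_1$, the naive triangle inequality would give the coefficient $1 + 2/(1-2\alpha)$ on $\|u_{S^c}\|_1$, whereas your decomposition $\|v-u_S\|_1 = \|y_S\|_1 + \|u_{S^c}+y_{S^c}\|_1 \le \|u_{S^c}\|_1 + 2\|y_S\|_1$ yields $(1+2\alpha)/(1-2\alpha) \le 2/(1-2\alpha)$, which establishes the stated bound in every regime where it is meaningful (i.e., $\alpha(\epsilon) < 1/2$).
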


\begin{proof}
Analogous to the proof of Theorem~\ref{t:sparse}. 
\end{proof}

\section{Experimental Results}
\label{sec:experiments}
\begin{figure}[h]
\centerline{
\mbox{\includegraphics[width=3.3in]{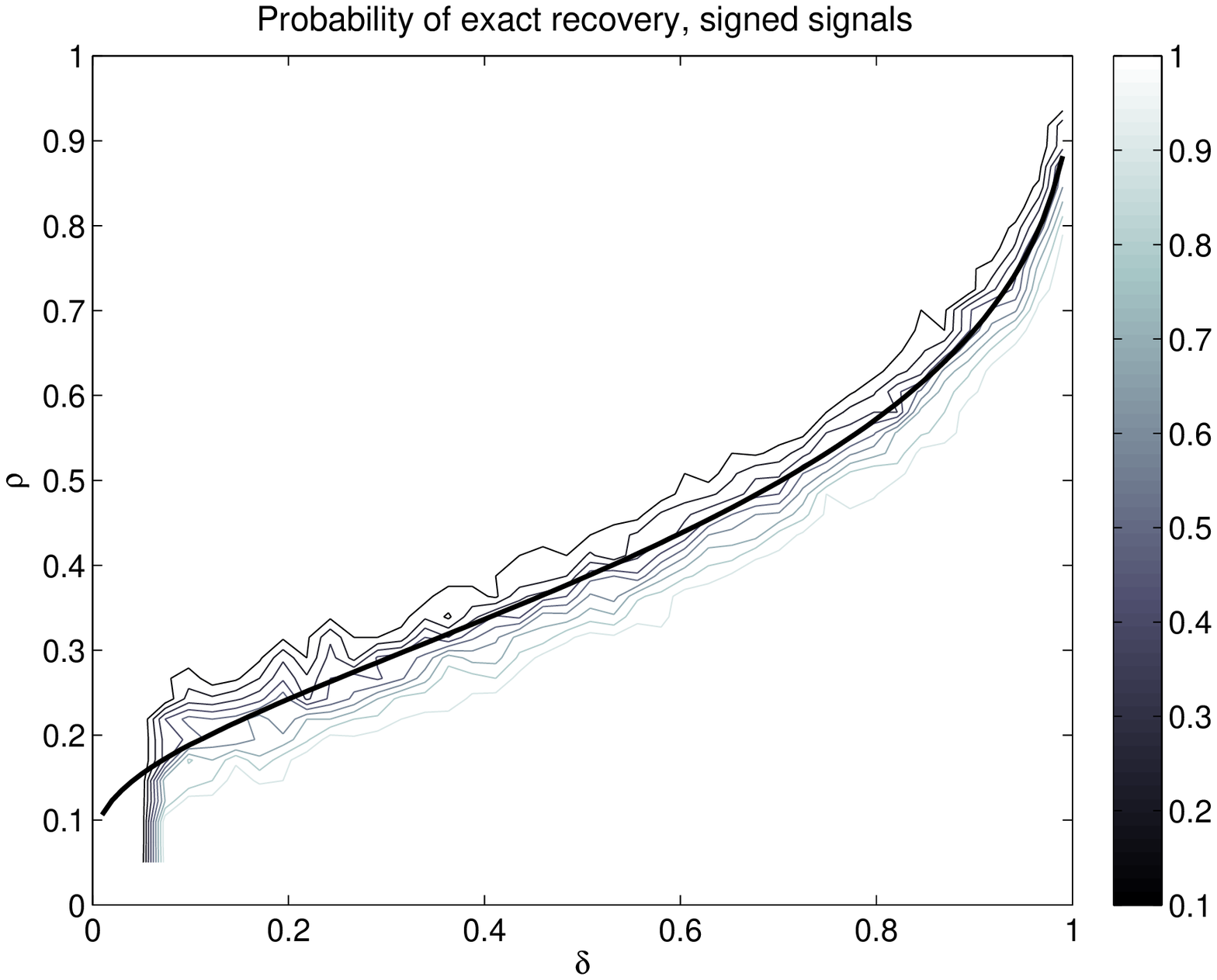}}
\mbox{\includegraphics[width=3.3in]{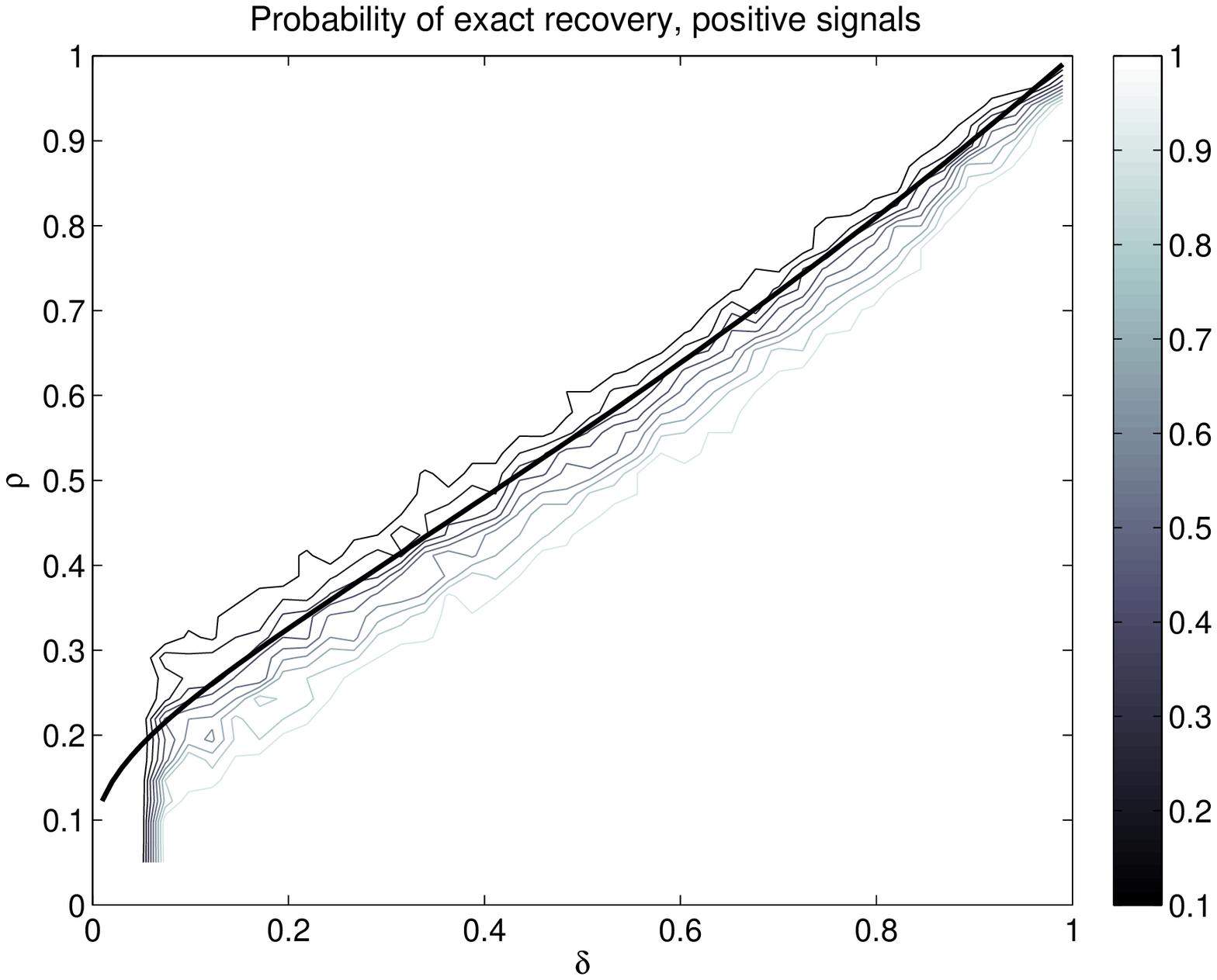}}
}
\caption{\label{f:sparse}\small Probability of correct signal recovery of a
  random $k$-sparse signal \mbox{$x \in \{-1,0,1\}^n$} (left) and $x \in \{0,1\}^n$
  (right) as a function of $k=\rho m$ and $m=\delta n$, for $n = 200$. The
  probabilities were estimated by partitioning the domain into $40 \times 40$
  data points and performing 50 independent trials for each data point, using
  random sparse matrices with $d = 8$. The thick curve demarcates a phase
  transition in the ability of LP decoding to find the sparsest solution to $G
  x_* = G x$ for $G$ a Gaussian random matrix (see \cite{DT06:Thresholds}).
  The empirical behavior for binary sparse matrices is consistent with the
  analytic behavior for Gaussian random matrices.}

\end{figure}
Our theoretical analysis shows that, up to constant factors, our
scheme achieves the best known bounds for sparse approximate recovery.
In order to determine the exact values of those constant factors, we
show, in Figure~\ref{f:sparse}, the empirical probability of correct
recovery of a random $k$-sparse signal of dimension $n=200$ as a
function of $k=\rho m$ and $m=\delta n$.  As one can verify, the
empirical $O(\cdot)$ constants involved are quite low.  The thick
curve shows the analytic computation of the phase transition between
the survival of typical $l$-faces of the cross-polytope (left) and the
polytope (right) under projection by $G$ a Gaussian random matrix.
This line is equivalent to a phase transition in the ability of LP
decoding to find the sparsest solution to $G x_* = G x$, and, in
effect, is representative of the performance of Gaussian matrices in
this framework (see \cite{Don:Poly} and \cite{DT06:Thresholds} for
more details).  Gaussian measurement matrices with $m = \delta n$ rows
and $n$ columns can recover signals with sparsity $k = \rho m$ below
the thick curve and cannot recover signals with sparsity $k$ above the
curve.  This figure thus shows that the empirical behavior of binary
sparse matrices with LP decoding is consistent with the analytic
performance of Gaussian random matrices.  Furthermore, the empirical
values of the asymptotic constants seem to agree.  See~\cite{BI08} for
further experimental data.

\newcommand{\Ps}{\mtx{\Psi}}
\section{Sublinear-time decoding of sparse vectors}
\label{sec:sparse-decoding}

In this section we focus on the recovery problem for the case where
the signal vector $x$ is $k$-sparse.  The algorithm given here is
essentially subsumed by the HHS algorithm provided in the Appendix
(modulo the slightly higher reconstruction time and the number of
measurements of the latter).  The advantage of the algorithm from this
section, however, is that it is very simple to present as well as
analyze.  This will enable us to illustrate the basic concepts without
getting into technical details needed for the case of general signals
$x$.

The algorithm we present here is essentially a simplification of the
algorithm given in~\cite{I:extractor}.  The main difference is the
assumption about the measurement matrix $\Ph$.  In~\cite{I:extractor}
the matrix $\Ph$ was assumed to be an ``augmented'' version of the
adjacency matrix of an {\em extractor.}  Here, we assume that is an
``augmented'' version of the adjacency matrix of an unbalanced
expander (or alternatively, by Theorem~\ref{thm:rip1ub}, any binary
matrix that satisfies an RIP-1 property).  The latter assumption turns
out to be the ``right'' one, resulting in further simplification of
the algorithm.

To define the matrices, we start with a straightforward definition.
If $q$ and $r$ are 0--1 vectors, we can view them as masks that
determine which entries of a signal appear and which are zeroed out.
For example, the signal $q \circ x$ consists of the entries of $x$
restricted to the nonzero components of $q$.  The notation $\circ$
indicates the componentwise or Hadamard product of two vectors.  Given
0--1 matrices $\mtx{Q}$ and $\mtx{R}$, we can form a matrix that
encodes sequential restrictions by all pairs of their rows.  We
express this matrix as the {\em row tensor product} $\mtx{Q} \rtp
\mtx{R}$.

The construction of the measurement matrix is as follows.  Let
$G=(A,B,E)$ be a $(k,\epsilon)$-unbalanced expander, $\epsilon=1/8$,
with left degree $d$, $|A|=n$, $|B|=m$.  Without loss of generality we
can assume that $m$ is a power of $2$.  Let $\Ps$ be the adjacency
matrix of $G$.  The measurement matrix $\Ph = \Ps \rtp \mtx{B}$ has $m
\log n$ rows.  The matrix $\mtx{B}$ is the {\em bit-test matrix}; its
$t$th column contains the binary expansion of $t$. That is, For $t=0
\ldots \log n-1$ and $j=0 \ldots m-1$, let $l=t+j \log n$.  The
$(l+1)$-th row $\Ph_{l+1}$ of $\Ph$ is such that, for any $i=0 \ldots
m-1$, $(\Ph_{l+1})_{i+1}=(\Ps)_{j+1} \cdot \mbox{bin}_t(i)$, where
$\mbox{bin}_t(i)$ is the $t$-th least significant bit in the binary
representation of $i$.

The purpose of the above augmentation is as follows.  Consider a
specific $k$-sparse vector $x$.  Let $\supp x$ denote the
 the support of $x$, i.e.,  the set of indices of the non-zero coordinates of $x$.
 Assume that the $(j+1)$-th row $r$ of
the matrix $\Ps$ is such that $|\supp x \cap \supp r |=1$, and let
$i+1 \in \supp x \cap \supp r$.  Then, from the values of $\Ph_{j \log
  n +1} x, \ldots, \Ph_{j \log n +\log n} x$, we can recover both $i$
and $x_{i+1}$.  This is because for each of those rows, the value of
$\Ph_{j \log n +t+1}x$ is equal to $0$ if $\mbox{bin}_t(i)=0$, or
$x_{i+1}$ otherwise; moreover, at least one of the values is non-zero.
Of course, if $|\supp x \cap \supp r| \neq 1$, then the algorithm
might ``recover'' an incorrect pair $(i,val)$.  This can be detected
if $val=0$; otherwise, the error might be undetected.

\begin{thm}
  Let $\Ph$ be a matrix described above.  Then for any $k$-sparse $x$,
  given $\Ph x$, we can recover $x$ in time $O(m \log^2 n)$.
\end{thm}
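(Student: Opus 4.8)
The plan is to recover $x$ with an iterative, fully parallel ``bit-flip'' algorithm that maintains a current estimate $\hat x$ (initially $0$) together with the residual sketch $w=\Ph(x-\hat x)$, and in each round corrects a constant fraction of the remaining coordinates. In one round I would, for every right vertex $j\in B$, read the block of $\log n$ bit-test entries of $w$ belonging to row $j$ of $\Ps$ and decode from them a candidate pair $(i_j,v_j)$ exactly as in the paragraph preceding the theorem: the nonzero bit-tests spell out the binary representation of $i_j$ and their common value is $v_j$. I would then cast a vote for $(i_j,v_j)$ at coordinate $i_j$, but only when $j$ is genuinely a neighbor of $i_j$ (that is, $(\Ps)_{j,i_j}=1$), discarding all other decodes. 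Finally, for every coordinate $i$ receiving a strict majority --- more than $d/2$ of its $d$ incident votes agreeing on one value $v\neq 0$ --- I update $\hat x_i \mathrel{+}= v$ and patch $w$ accordingly.

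Correctness rests on the unique-neighbor consequence of expansion applied to the residual. Write $y=x-\hat x$ and $X=\supp{y}$, and assume inductively $|X|\le k$. Since $\sum_{j\in N(X)}(u_j-1)=d|X|-|N(X)|\le \epsilon d|X|$, where $u_j$ is the number of edges from $j$ into $X$, the number of collision vertices (those with $u_j\ge 2$) is at most $\epsilon d|X|$ and the number of collision edges $\sum_{j:\,u_j\ge 2}u_j$ is at most $2\epsilon d|X|$. A unique neighbor of $i\in X$ sees only $y_i$ on its support, so it decodes $(i,y_i)$ correctly; hence the only votes that can ever be wrong are the at most $\epsilon d|X|$ cast by collision vertices. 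Two counting bounds follow with $\epsilon=1/8$. First, at most $\frac{2\epsilon d|X|}{d/2}=|X|/2$ coordinates of $X$ can have $d/2$ or more collision edges, so at least $|X|/2$ coordinates of $X$ are ``good'': they have a strict majority of unique neighbors, all voting the correct pair, and are therefore recovered exactly this round. Second, any wrongly updated coordinate needs more than $d/2$ of the wrong votes, and there are at most $\epsilon d|X|$ of those, so fewer than $\frac{\epsilon d|X|}{d/2}=|X|/4$ coordinates are updated incorrectly.

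Combining the two bounds controls the new residual. Let $R$ be the set of correctly recovered coordinates ($R\subseteq X$, $|R|\ge|X|/2$) and $W$ the set of wrongly updated coordinates ($|W|<|X|/4$). Recovered coordinates have residual zero, un-updated coordinates keep their old residual, and only coordinates in $W$ can carry a new nonzero residual, so $\supp{(x-\hat x)}\subseteq (X\setminus R)\cup W$ after the round, giving a new support of size at most $|X|/2+|X|/4=\tfrac34|X|$. Thus the residual support contracts by a constant factor each round and stays $\le k$, so after $O(\log k)=O(\log n)$ rounds it is empty and $\hat x=x$. For the running time, each round reads the $m\log n$ residual entries and tallies at most $m$ votes in $O(m\log n)$ time, while patching $w$ costs $O(d\log n)$ per update with $O(k)$ updates per round, i.e.\ $O(kd\log n)=O(m\log n)$ since $m=\Theta(kd)$ for constant $\epsilon$; over $O(\log n)$ rounds this totals $O(m\log^2 n)$. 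The one routine caveat is the index $i=0$, whose all-zero bit pattern is invisible to the bit-tests; this is handled by the standard device of appending a plain-sum measurement to each block (or indexing from $1$) and does not affect the bounds. I expect the main obstacle to be precisely the two-sided counting of the middle paragraph: one must simultaneously force enough correct recoveries and few enough spurious updates so that the support strictly contracts, and closing the constants requires the quantitative expansion $\epsilon=1/8$ rather than mere expansion.
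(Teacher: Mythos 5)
Your proposal is correct and follows essentially the same route as the paper: the paper's \textsc{Reduce}/\textsc{Recover} pair is exactly your voting round iterated recursively, with the same expansion-based count of collision vertices/edges bounding the missed and spuriously updated coordinates (the paper gets a $1/2$ contraction where you get $3/4$; both yield $O(\log n)$ rounds and the $O(m\log^2 n)$ bound). Your explicit neighbor-filtering of votes, the separate two-sided count, and the $i=0$ bit-pattern caveat are minor refinements of details the paper leaves implicit.
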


\begin{proof}
  The main component of the recovery algorithm is the procedure ${\sc
    Reduce}( \Ph x)$, which returns a vector $y$ such that $\|x-y\|_0
  \le \|x\|_0/2$.  The procedure maintains a vector $votes[\cdot]$ of
  multisets.  Initially, all entries are set to $\emptyset$.

\begin{center}
\fbox{ \tt
\begin{minipage}{6 in} 
Reduce($ \Ph x$):\\

\ \ for $j=1 \ldots m$

 \ \ \ \  compute $(l,val)$ such that if $\supp \Ph_j \cap \supp x=\{i\}$ then $x_i=val$ 

 \ \ \ \ if $val \neq 0$  then $votes[l] = votes[l] \cup \{val\}$

\ \ $y=0$

\ \ for $i=1 \ldots n$ such that $votes[i] \neq \emptyset$

\ \ \ \ if $votes[i]$ contains $ \ge d/2$ copies of  $val$
then $y_i=val$

return $y$
\end{minipage}
}
\end{center}

\begin{claim}
For any $k$-sparse vector $x$, the procedure {\sc Reduce} returns a vector $y$ such that $x-y$ is $k/2$-sparse.
\end{claim}

\begin{proof}
  From the expanding properties of $G$, it follows that at most
  $\epsilon d \|x\|_0$ rows of $\Ps$ return an incorrect pair
  $(i,val)$.  Each set of $d/2$ incorrect pairs can change the outcome
  for at most $2$ positions of $y$.  Thus, at most $4\epsilon \|x\|_0
  =\|x\|_0/2$ positions of $y$ can be incorrect.
\end{proof}

The final algorithm invokes {\sc Recover}$(\Ph x)$.

\begin{center}
\fbox{ \tt
\begin{minipage}{3 in}
Recover($\Ph x$)

\ \ if $\Ph x=0$ then return $0$

\ \ $y =$ Reduce($\Ph x$)

\ \ $\{$  We now need to recover $x-y \}$

\ \ $z=$Recover$(\Ph(x-y))$

\ \ return $y+z$
\end{minipage}
}
\end{center}

\end{proof}
\section{Conclusion}
\label{sec:conclusion}

We show in this paper that the geometric and the combinatorial
approaches to sparse signal recovery are different manifestations of a
common underyling phenomenon.  Thus, we are able to show a unified
perspective on both approaches---the key unifiying elements are the
adjacency matrices of unbalanced expanders.

In most of the recent applications of {\em compressed sensing}, a
physical device instantiates the measurement of $x$ and, as such,
these applications need measurement matrices which are conducive to
physical measurement processes.  This paper shows that there is
another, quite different, large, natural class of measurement
matrices, combined with the same (or similar) recovery algorithms for
sparse signal approximation.  These measurement matrices may or may
not be conducive to physical measurement processes but they are quite
amenable to computational or digital signal measurement.  Our work
suggests a number of applications in digital or computational
``sensing'' such as efficient numerical linear algebra and network
coding.

The preliminary experimental analysis exhibits interesting
high-dimensional geometric phenomena as well.  Our results suggest
that the projection of polytopes under Gaussian random matrices is
similar to that of projection by sparse random matrices, despite the
fact that Gaussian random matrices are quite different from sparse
ones.

% Conclusion and
%some open problems: (i) mathematical problems such as preservation of
%faces of polytopes under sparse matrices as opposed to orthoprojectors
%or Gaussian random matrices, analytic definition of expansion (does
%this get you anywhere interesting?), necessary and sufficient
%conditions for robust LP decoding, no-go theorems for RIP-1 => l2/l1
%error guarantees (perhaps covered already by DeVore's results),
%alternative constructions for RIP-1 matrices, show that RIP-1 means
%the matrix is essentially sparse and binary; (ii) algorithmic
%``applications'' such as network coding, numerical computing with l1
%guarantees, funny matrix multiplication, subroutine for determining
%which dimensions/directions/derivatives/variables/rows/columns are
%important for some other larger computation; (iii) scientific
%applications such as group testing in biological data.

{\bf Acknowledgments:} The authors would like to thank: Venkat
Guruswami and Salil Vadhan, for their help on the expanders front; David Donoho and Jared Tanner for providing the data for the analytic
Gaussian treshold curve in Figure \ref{f:sparse}; Justin Romberg for his help and clarifications regarding the $\ell_1$-MAGIC package;  and Tasos Sidiropoulos for many helpful comments.
% The authors would like to thank Venkat
%Guruswami and Salil Vadhan for their help on the expanders front. They also
%thank David Donoho and Jared Tanner for providing the data for the analytic
%Gaussian threshold curve in Figure~\ref{f:sparse}.

\bibliographystyle{alpha}
\bibliography{rip2expandbib}

\appendix
\section{HHS$(p)$ decoding}

In this section we focus on the general recovery problem for an
arbitrary vector $x$.  We show that, as in the previous algorithm in
Section~\ref{sec:sparse-decoding}, we can use the adjacency matrices
of unbalanced expanders, suitably augmented and concatenated, to
obtain an explicit construction of matrices that are designed for
the sub-linear time combinatorial recovery algorithm HHS
in~\cite{GSTV07:HHS}.  We call this modified algorithm the HHS$(p)$
algorithm.  For the sake of exposition, we highlight the necessary
changes in the construction of the matrices and the algorithm and
summarize the remaining, unchanged portions of the algorithm.
Similarly, for the analysis of HHS$(p)$, we present only those
portions which are significantly different from the original analysis
and summarize those which are not.  We establish the following result.

\begin{reftheorem}{thm:hhsp}
Let $x \in \Rspace{n}$ and fix a sparsity parameter $k$ and a number $\epsilon \in (0,1)$. There is a measurement matrix $\mtx{\Psi}$, which we can construct explicitly or randomly, with the following property.  The HHS$(p)$ algorithm, given measurements $v = \mtx{\Psi}x$ of $x$, returns an approximation $\widehat x$ of $x$ with $O(k/\epsilon)$ nonzero entries.  The approximation satisfies
\[  \|x - \widehat x\|_p \leq \epsilon k^{1/p - 1} \|x - x_k\|_1.
\]
Let $R$ denote the size of the measurements for either an explicit or random construction.  Then, the HHS$(p)$ algorithm runs in time $\poly(R)$.  
\end{reftheorem}

For explicit constructions $R = O(k2^{\log\log n^{O(1)}})$ and for random constructions $R = O(k\polylog n)$.  

\begin{cor}
If we truncate the output $\widehat x$ of the algorithm to the $k$ largest terms, then the modified output $\widehat x_k$ satisfies
\[   \|x - \widehat x_k\|_p \leq \|x - x_k\|_p + 2\epsilon k^{1/p - 1} \|x - x_k\|_1 \quad\text{and}
\]
\[ \| x - \widehat x_k\|_1 \leq (1 + 3\epsilon) \| x - x_k\|_1.
\]

\end{cor}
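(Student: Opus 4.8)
The plan is to derive both inequalities directly from the guarantee $\|x - \widehat x\|_p \le \epsilon k^{1/p-1}\|x-x_k\|_1$ of Theorem~\ref{thm:hhsp}, using only two elementary facts: that $\widehat x_k$ is the best $k$-term approximation of $\widehat x$ in \emph{every} $\ell_p$ norm (it keeps the $k$ largest-magnitude entries), and the H\"older bound $\|v\|_1 \le |\supp{v}|^{\,1-1/p}\|v\|_p$. Throughout, write $\sigma = \|x-x_k\|_1$, let $\Lambda = \supp{\widehat x_k}$, and let $\Lambda_0 = \supp{x_k}$ be the top-$k$ support of $x$, so that $\sigma = \|x_{\Lambda_0^c}\|_1$. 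I would dispatch the first (mixed $\ell_p$) inequality by chaining triangle inequalities: since $\widehat x_k$ minimizes $\|\widehat x - w\|_p$ over $k$-sparse $w$ and $x_k$ is $k$-sparse, $\|\widehat x - \widehat x_k\|_p \le \|\widehat x - x_k\|_p \le \|\widehat x - x\|_p + \|x - x_k\|_p$, whence $\|x - \widehat x_k\|_p \le 2\|x-\widehat x\|_p + \|x-x_k\|_p$; inserting the Theorem's bound gives the claimed $\|x-x_k\|_p + 2\epsilon k^{1/p-1}\sigma$. This part is routine.

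The $\ell_1$ bound is the delicate one, and the main obstacle is keeping the constant at exactly $(1+3\epsilon)$ rather than incurring a lossy factor (e.g.\ the factor close to $e$ one gets by applying H\"older to the whole tail). The remedy is to exploit that $\Lambda$ is precisely the top-$k$ support of $\widehat x$. I would split $\|x-\widehat x_k\|_1 = \|(x-\widehat x_k)_\Lambda\|_1 + \|x_{\Lambda^c}\|_1$. On $\Lambda$ we have $\widehat x_k = \widehat x$, so by H\"older on a set of size $\le k$,
\[ \|(x-\widehat x_k)_\Lambda\|_1 = \|(x-\widehat x)_\Lambda\|_1 \le k^{1-1/p}\|x-\widehat x\|_p \le k^{1-1/p}\epsilon k^{1/p-1}\sigma = \epsilon\sigma, \]
the point being that using the \emph{untruncated} difference here makes the powers of $k$ cancel exactly.

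For the off-support term I set $A = \Lambda_0\setminus\Lambda$ (large coordinates of $x$ that $\widehat x_k$ misses) and $B = \Lambda\setminus\Lambda_0$ (spurious coordinates in $\widehat x_k$). The accounting identity $\|x_{\Lambda^c}\|_1 = \sigma + \|x_A\|_1 - \|x_B\|_1$ then reduces matters to bounding $\|x_A\|_1 - \|x_B\|_1$ by $2\epsilon\sigma$. Since $\Lambda$ is the top-$k$ support of $\widehat x$ and $|A| = |B|$, every entry of $\widehat x$ on $B$ dominates every entry on $A$ in magnitude, giving $\|\widehat x_A\|_1 \le \|\widehat x_B\|_1$. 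Writing $x = \widehat x + (x-\widehat x)$ on each of $A$ and $B$ and cancelling the dominated $\widehat x$-terms yields
\[ \|x_A\|_1 - \|x_B\|_1 \le \|(x-\widehat x)_{A\cup B}\|_1 \le (2k)^{1-1/p}\|x-\widehat x\|_p \le 2\epsilon\sigma, \]
using $|A\cup B| \le 2k$ and $2^{1-1/p}\le 2$. Adding the two contributions gives $\|x-\widehat x_k\|_1 \le \sigma + \epsilon\sigma + 2\epsilon\sigma = (1+3\epsilon)\sigma$, as required.

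Finally I would note the degenerate cases: if $\widehat x_k$ or $x$ has fewer than $k$ nonzero entries, so that $|\Lambda| < k$ or $|\Lambda_0| < k$, the cardinality bookkeeping ($|A| = |B|$, $|A\cup B|\le 2k$) is only slackened, and one can pad supports to size $k$ without affecting any norm; the same estimates go through. The crucial structural input everywhere is the greedy optimality of truncation—that $\Lambda$ collects the largest entries of $\widehat x$—which is exactly what lets the missed coordinates $A$ be charged against the spurious coordinates $B$ through $\widehat x$, and is what secures the tight constant.
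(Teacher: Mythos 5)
Your proof is correct, and for the $\ell_1$ bound it takes a genuinely different route from the paper's. The paper works with the single index set $H=\supp{x_k}\cup\supp{\widehat x_k}$, $|H|\le 2k$, runs one chain of triangle inequalities for both norms (inserting the optimality of $\widehat x_k$ as $\|(\widehat x_k-\widehat x)|_H\|\le\|(x_k-\widehat x)|_H\|$), and finishes with a single H\"older step $\|(\widehat x-x)|_H\|_1\le(2k)^{1-1/p}\|\widehat x-x\|_p$, yielding $2^{2-1/p}\epsilon\le 3\epsilon$. You instead split over $\Lambda=\supp{\widehat x_k}$ and its complement, use the exact $\ell_1$ accounting identity $\|x_{\Lambda^c}\|_1=\|x-x_k\|_1+\|x_A\|_1-\|x_B\|_1$, and invoke greedy optimality in its \emph{entrywise} form (every entry of $\widehat x$ on the spurious set $B$ dominates every entry on the missed set $A$, so $\|\widehat x_A\|_1\le\|\widehat x_B\|_1$), arriving at $\epsilon+2^{1-1/p}\epsilon\le 3\epsilon$. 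Both deliver the stated constant; the paper's version is more uniform across the two norms, while yours isolates more transparently where the error comes from (an $\epsilon\sigma$ estimation error on $\Lambda$ plus a $2\epsilon\sigma$ charge for swapping at most $k$ coordinates). For the $\ell_p$ inequality your global triangle-inequality argument ($\|x-\widehat x_k\|_p\le 2\|x-\widehat x\|_p+\|x-x_k\|_p$) is also slightly cleaner: the paper's support-restricted chain implicitly recombines $\|(x_k-x)|_H\|_p+\|(x_k-x)|_{H^c}\|_p$ into $\|x_k-x\|_p$, which for $p>1$ is an inequality in the wrong direction (harmless here only because $2^{1-1/p}\approx 1$ for $p=1+1/\log n$); your version avoids that subtlety entirely. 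Your handling of the degenerate cases ($|\Lambda|<k$ or $|\Lambda_0|<k$) by padding is fine, since padded coordinates carry zero weight in every norm used.
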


\subsection{The measurement matrices}
This subsection describes how to construct the measurement matrix
$\mtx{\Psi}$.  We note that this construction has the same
super-structure as the random construction in~\cite{GSTV07:HHS} and,
as such, we highlight the necessary changes while summarizing the
similar pieces.  For ease of exposition, we set $\epsilon = 1$.  The
matrix $\mtx{\Psi}$ consists of two pieces: an identification matrix
$\mtx{\Omega}$ and an estimation matrix $\mtx{\Phi}$.  We use the
first part of the sketch to identify indices of significant components
of the signal quickly and then we use the second part to estimate the
coefficients of those terms.

We use concatenated copies of the (normalized) adjacency matrices
$\mtx{\Gamma}$ of $(k',\epsilon')$-unbalanced expanders with left
degree $d$.  In what follows, we let the sparsity parameter $k'$ vary
but fix $p = 1 + 1/\log(n)$.  Normalize by the factor $d^{-1/p}$.  By
Theorem~\ref{thm:ubrip}, such matrices satisfy the
$\rip{p}{k'}{\delta}$ property for $p = 1 + 1/\log n$ and $\delta \leq
C \epsilon'$.  Fix $\epsilon'$ so that $4 \epsilon' + 1/(2d^{1/p}) <
1/2$.  Finally, we note that the number of rows in such a matrix is
$k'2^{(\log\log n)^{O(1)}}$ for known {\em explicit} constructions of
such matrices and is $k'\polylog(n)$ for {\rm random} constructions.
To simplify our accounting below, we denote the number of rows by $R$
for either the explicit or the random constructions.

\subsubsection{Identification matrix.}
The identification matrix $\mtx{\Omega}$ is a 0--1 matrix with
dimensions \linebreak $\bigO(R \polylog(n)) \times n$.  It consists of
a combination of a three structured matrices.  Formally,
$\mtx{\Omega}$ is the row tensor product $\mtx{\Omega} = \mtx{B} \rtp
\mtx{A}$.  The \term{bit-test matrix} $\mtx{B}$ has dimensions
$\bigO(\log n) \times n$, and the \emph{isolation matrix} $\mtx{A}$
has dimensions $\bigO(R\polylog n) \times n$.

The isolation matrix $\mtx{A}$ is a 0--1 matrix with a hierarchical
structure.  It consists of $\log(k)$ blocks $\mtx{A}^{(j)}$ labeled by
$j = 1, 2, 4, 8, \dots, J$, where $J = k$.  Each block, in turn, has
further substructure as a row tensor product of a collection of 0--1
matrices: $\mtx{A}^{(j)}_{r,s} = \mtx{R}^{(j)}_r \rtp \mtx{S}^{(j)}_s$
for $s = 1, 2, 4,\ldots, k$ and $r = 2s, 4s, 8s, \ldots, n$. The first
matrix $\mtx{S}^{(j)}_s$ is called the \term{sifting matrix} and the
second matrix $\mtx{R}^{(j)}_r$ is called the \term{noise reduction
  matrix}.  Each sifting matrix $\mtx{S}^{(j)}_s$ is the normalized
adjacency matrix of an $(s,\epsilon')$-unbalanced expander (or,
equivalently, a normalized 0--1 $\rip{p}{s}{\delta}$ matrix).

The purpose of the noise reduction matrix $\mtx{R}^{(j)}_r$ is to
attenuate the noise in a signal that has a single large component.  It
is also a 0--1 valued matrix constructed as follows. For each pair of
indices $(r,s)$, let $\beta\approx r/s$ be a prime power corresponding
to a finite field of size $\beta$.  Then we set each noise reduction
matrix to be the Nisan-Wigderson generator over the field of size
$\beta$.  That is, we index the rows of the matrix by by pairs $(a,b)$
of field elements and index columns by polynomials $q$ of degree at
most $\polylog(n)$.  Position $((a,b),q)$ is 1, if $q(a)=b$, and 0,
otherwise. Such a construction produces a matrix with $O(\beta^2)$
rows.  See~\cite{DeV} or~\cite{CM06:Combinatorial-Algorithms} for
similar constructions of similar sizes.

We note that the dimensions of the product of matrices
$\mtx{A}^{(j)}_{r,s} = \mtx{R}^{(j)}_r \rtp \mtx{S}^{(j)}_s$ are the
critical ones (not the dimensions of the individual matrices).  Each
block is of dimension $\bigO(R \polylog(n)) \times n$.

\subsubsection{The estimation matrix}
The estimation matrix $\mtx{\Phi}$ is an $\rip{p}{K}{\delta}$ matrix
(or, equivalently, the adjacency matrix of an
$(K,\epsilon)$-unbalanced expander).  The parameter $K = O(k
\polylog(n))$ specifies how many spikes are identified during the
identification stage.  To ensure that the columns of $\mtx{\Phi}$ have
unit $\ell_p$ norm, we scale the matrix appropriately.

\subsection{The HHS$(p)$ algorithm}
The structure of the HHS$(p)$ algorithm remains unchanged.  We include
pseudo-code in Figure~\ref{fig:algo} for completeness.  This algorithm
is an iterative procedure with several steps per iteration, including
identifying a small set of signal positions that contain a substantial
fraction of the $\ell_p$ weight, then estimating their values, and
finally subtracting their encoded contribution from the initial
measurements.
\begin{center}
\begin{figure}[tb]
\tt
\begin{tabular}{|p{.9\textwidth}|}
\hline
%\raggedright
\centerline{\textbf{Algorithm: HHS$(p)$ Pursuit}} \\
Inputs: The number $k$ of spikes, the HHS measurement matrix $\mtx{\Psi}$, \\
\hspace{4pc} the initial sketch $v = \mtx{\Psi}x$, and $\Delta$ the
dynamic range of $x$ \\
Output: A list $L$ of $\bigO(k)$ spikes \\
\\
For each iteration $i = 0, 1, \ldots, \bigO(\log \Delta)$ $\{$ \\
\hspace{2pc} For each scale $j = 1,2,4,\ldots, \bigO(k)$ $\{$ \\
\hspace{4pc} Initialize $L' = \emptyset$. \\
\hspace{4pc} For each row of $\mtx{A}^{(j)}$ $\{$ \\
\hspace{6pc} Use the $\bigO(\log n)$ bit tests to identify one spike location \\
%\hspace{7pc} in each row $l$ from measurements $\vct{r}^{(j)}_{\rm id}(1+(l-1)\log d: l \log d)$. \\
\hspace{4pc} $\}$ \\
\hspace{4pc} Retain a list $L'_j$ of the spike locations \\
\hspace{6pc} that appear $\Omega( dj )$ times each \\
\hspace{4pc} Update $L' \longleftarrow L' \cup L'_j$ \\
\hspace{2pc} $\}$ \\
\hspace{2pc} Estimate coefficients for the indices in $L'$ by forming $\Fee_{L'}^\psinv s_{\rm est}$ \\
\hspace{4pc} with Jacobi iteration \\
\hspace{2pc} Update $L$ by adding the spikes in $L'$ \\
\hspace{4pc} If an index is duplicated, add the two values together \\
\hspace{2pc} Prune $L$ to retain the $\bigO(k)$ largest spikes \\
\hspace{2pc} Encode these spikes with measurement matrix $\mtx{\Psi}$ \\
\hspace{2pc} Subtract encoded spikes from original sketch $v$ to form \\
\hspace{4pc} a new residual sketch $s$ \\
$\}$ \\
\hline
\end{tabular}
%\vspace{1pc}
\caption{Pseudocode for the HHS$(p)$ Pursuit algorithm}
\label{fig:algo}
\end{figure}
\end{center}

\subsection{Proof sketches}
%This section describes, at the highest level, why HHS$(p)$ works.  We
%establish the following result.
%\begin{thm}
%  Fix $k$.  Assume that $\mtx{\Psi}$ is a measurement matrix that
%  satisfies the conclusions of Lemmas \ref{lem:isolation},
%  \ref{lem:noisereduce}, and \ref{lem:kp-bound}.  Suppose that $x$ is
%  a $n$-dimensional signal.  Given the sketch $v = \mtx{\Psi} x$, the
%  HHS$(p)$ Pursuit algorithm produces a signal $\widehat{x}$ with at
%  most $O(k)$ nonzero entries.  This signal estimate satisfies
%$$
%\smnorm{p}{ x - \widehat{x} }
%	\leq C k^{1/p - 1} \pnorm{1}{ x - \vct{x}_k }.
%$$
%\end{thm}
%Let $k$ and $\eps$ be fixed.  Observe that we can apply the above
%theorem with $k' = k / \eps^{p/(1-p)}$ to obtain a signal estimate
%$\widehat{\vct{x}}$ with $O(k)$ terms that satisfies the error bound
%$$
%\smnorm{p}{ \vct{x} - \widehat{\vct{x}} }
%	\leq C \epsilon k^{1/p - 1} \pnorm{1}{ \vct{x} - \vct{x}_{k'} }.
%$$
%The running time increases by a factor of
%$(1/\eps^{2p/(1-p)})\polylog(1/\epsilon)$.  This leads to Theorem
%\ref{thm:hhsp}.  We give an overview of the proof in the next
%subsections. 

The goal of the algorithm is to identify a small set of signal
components that carry most of the $p$-energy in the signal and to estimate
the coefficients of those components well.  We argue that, when our
signal estimate is poor, the algorithm makes substantial progress
toward this goal.  We focus on the analysis of the
algorithm in the case when our signal estimate is poor as this is the
critical case.  More precisely, assume
that the current approximation $\vct{a}$ satisfies
\begin{equation} \label{eqn:recovery-cond}
\pnorm{p}{ \vct{x} - \vct{a} } >
	k^{1/p-1}\pnorm{1}{ \vct{x} - \vct{x}_k }.
\end{equation}
Then one iteration produces a new approximation $\vct{a}^{\rm new}$
for which $\pnorm{p}{ \vct{x} - \vct{a}^{\rm new} } \leq \frac{1}{2}
\pnorm{p}{ \vct{x} - \vct{a} }$.  In this fashion, the algorithm
improves the $\ell_p$-norm of the error geometrically until it is
small enough.  Contrast this estimate with that of the previous
section (where we reduced the $\ell_0$ norm of the error at each
step).  We will show this major result in a series of steps:
\begin{enumerate}
\item First, we obtain a fundamental relationship between the large
  and the small entries in a signal in
  Lemmas~\ref{lemma:head}, \ref{lemma:headandtail}.
\item Next, we show in Lemma~\ref{lem:isolation} that the sifting matrix
  isolates significant coefficients in a moderate amount of noise.
\item Lemma~\ref{lem:noisereduce} proves that the noise reduction matrix reduces
  the $\ell_1$-norm of the noise enough so that we can accurately
  identify the indices of the significant coefficients.
\item Finally, we use a small RIP$_{p,L,\delta}$ matrix to estimate the
  values of the coefficients in Lemma~\ref{lem:kp-bound}.
\end{enumerate}
%This way, the
%algorithm improves $\pnorm{p}{x-\widetilde x}$ geometrically until it is
%good enough and, subsequently, never causes $\pnorm{p}{x-\widetilde x}$
%to be worse than $3/2$ times worse than optimal.  (The factor $3/2$
%can be absorbed into $\epsilon$.)

We describe some generic properties of signals that are important in
the analysis.  Given a signal $g$, we write $g_t$ to denote the signal
obtained by zeroing out all the components of $g$ except the $t$
largest in magnitude (break ties lexicographically).  We refer to
$g_t$ as the {\em head} of the signal; it is the best approximation of
the signal using at most $t$ terms with respect to any monotonic norm
(such as $\ell_p$).  The vector $g - g_t$ is called the {\em tail} of
the signal.  First, we show that the $\ell_p$ norm of the tail of a
signal is much smaller than the $\ell_1$ norm of the entire signal.
\begin{lemma}
For any signal $\vct{g}$, it holds that
\[  \| \vct{g} - \vct{g}_t \|_p \leq \Big(\frac{1}{p-1}\Big)^{1/p} t^{1/p - 1} \| \vct{g}\|_1.
\]
\label{lemma:head}
\end{lemma}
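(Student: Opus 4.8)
The plan is to exploit the monotone ordering of the coefficients together with a standard integral comparison. First I would assume without loss of generality that the coordinates of $\vct{g}$ are arranged in nonincreasing order of magnitude, $|g_1| \ge |g_2| \ge \cdots$, so that the head $\vct{g}_t$ retains exactly the first $t$ entries and the tail $\vct{g} - \vct{g}_t$ consists of the coordinates $g_{t+1}, g_{t+2}, \ldots$. The statement then reduces to bounding $\|\vct{g} - \vct{g}_t\|_p^p = \sum_{i > t} |g_i|^p$.

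The key pointwise estimate is that, because $|g_i|$ is the $i$-th largest magnitude, we have $i\, |g_i| \le \sum_{j=1}^{i} |g_j| \le \|\vct{g}\|_1$, and hence $|g_i| \le \|\vct{g}\|_1 / i$ for every $i$. Substituting this into the tail sum gives
\[ \sum_{i > t} |g_i|^p \le \|\vct{g}\|_1^p \sum_{i = t+1}^{\infty} i^{-p}. \]
Since $p > 1$, the series converges, and I would bound it by comparison with an integral: as $x \mapsto x^{-p}$ is decreasing, $\sum_{i=t+1}^{\infty} i^{-p} \le \int_t^\infty x^{-p}\,dx = \tfrac{1}{p-1}\, t^{1-p}$. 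Combining the two displays yields $\|\vct{g} - \vct{g}_t\|_p^p \le \tfrac{1}{p-1}\, t^{1-p} \|\vct{g}\|_1^p$, and taking $p$-th roots (using $(1-p)/p = 1/p - 1$) gives exactly the claimed bound $\|\vct{g} - \vct{g}_t\|_p \le \big(\tfrac{1}{p-1}\big)^{1/p} t^{1/p-1} \|\vct{g}\|_1$.

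I do not expect any real obstacle here; the argument is a short sorting-plus-integral-comparison calculation. The only points that require a little care are the pointwise bound $|g_i| \le \|\vct{g}\|_1/i$, which relies crucially on the monotone ordering, and checking that the integral comparison runs in the correct direction, so that the convergent tail of the series is dominated by $\int_t^\infty$ rather than $\int_{t+1}^\infty$. Both are routine, and it is worth noting that the hypothesis $p>1$ (guaranteed here by $p = 1 + 1/\log n$) is what makes the series summable and the constant $\big(\tfrac{1}{p-1}\big)^{1/p}$ finite.
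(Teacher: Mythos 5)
Your proposal is correct and follows essentially the same route as the paper: both use the ordering-based pointwise bound $|g_i| \le \|\vct{g}\|_1/i$ and then the integral comparison $\sum_{i>t} i^{-p} \le \int_t^\infty x^{-p}\,dx = \tfrac{1}{p-1}t^{1-p}$ before taking $p$-th roots. Your write-up is, if anything, more explicit than the paper about where the integral comparison enters and why $p>1$ is needed.
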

\begin{proof}
  Assume that the terms $g(i)$ of $g$ are arranged in decreasing order
  $|g(1)|\ge|g(2)|\ge\cdots\ge|g(n)|$.  We observe that each term
  $\vct{g}(i)$ in $\vct{g}$
  satisfies $|\vct{g}(i)| \leq i^{-1} \|\vct{g}\|_1$ and we
  can bound the $\ell_p$ norm of the tail as
\begin{align*}
    \|\vct{g} - \vct{g}_{t}\|_p 
    &\leq \left( \sum_{i=t+1}^n \frac{1}i^{-p} \|\vct{g}\|_1^p \right)^{1/p} \\
    &\leq \left( \|\vct{g}\|_1^p \frac{1}{p-1} t^{1-p}\right)^{1/p} \\
    &\leq \left(\frac{1}{p-1}\right)^{1/p} t^{1/p-1} \|\vct{g}\|_1 
\end{align*}
\end{proof}
We note that for $p = 1 + \iota$, the constant $C_p =
\Big(\frac{1}{p-1}\Big)^{1/p}$ is bounded below by $1/\iota$ and from
above by $1/\iota^2$ so if $p = 1 + 1/\log(n)$, then $\log(n) \leq C_p
\leq \log^2(n)$.

Let $\vct{r} = \vct{x} - \vct{a}$ denote the residual signal at the
current approximation $\vct{a}$ and let $\ell = O(k)$ be the number of
terms in the approximation $a$ at each iteration.  When condition
\eqref{eqn:recovery-cond} holds, most of the $p$-energy in the signal
is concentrated in its largest components.  The next lemma relates the
$\ell_p$ norm of the residual to the $\ell_p$ norm of its tail $r -
r_\ell$ and to the $\ell_1$ norm of the entire residual.
\begin{lemma}[Head and Tail]
  Suppose that \eqref{eqn:recovery-cond} holds.  Let $\ell = O(k)$ be
  the number of terms in the approximation $\vct{a}$ at each
  iteration.  Then the following bounds hold.
\begin{eqnarray}
    \| r \|_p & \geq C_p \| r - r_\ell \|_p \\
    \| r \|_p & \geq C \ell^{1/p - 1} \| r \|_1.
\end{eqnarray}
\label{lemma:headandtail}
\end{lemma}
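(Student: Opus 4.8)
The plan is to establish both bounds directly from Lemma~\ref{lemma:head} and the recovery condition \eqref{eqn:recovery-cond}, with the key arithmetic being the relation $\ell = O(k)$ so that constants coming from $k$ and $\ell$ differ only by an absolute factor.

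For the first inequality, I would apply Lemma~\ref{lemma:head} to the residual signal $g = r$ with $t = \ell$, which gives
\[
  \|r - r_\ell\|_p \leq C_p^{-1} \ell^{1/p - 1} \|r\|_1,
\]
writing $C_p = \bigl(\tfrac{1}{p-1}\bigr)^{1/p}$ as in the constant defined just after that lemma, so that its reciprocal appears as $C_p^{-1}$. The goal $\|r\|_p \geq C_p \|r - r_\ell\|_p$ is therefore equivalent to showing $\|r\|_p \geq \ell^{1/p-1}\|r\|_1$, which I would deduce from the recovery condition. The subtlety is that \eqref{eqn:recovery-cond} is phrased with $k^{1/p-1}$ and $\|x - x_k\|_1$, whereas I need $\ell^{1/p-1}$ and $\|r\|_1$. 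First I would relate $\|r\|_1$ to $\|x - x_k\|_1$ and the head of $r$, and then absorb the difference between $k$ and $\ell = O(k)$ into the factor $C_p$; since $\ell^{1/p-1} = (\ell/k)^{1/p-1} k^{1/p-1}$ and $(\ell/k)^{1/p-1}$ is an absolute constant bounded away from $0$ and $\infty$ for $p$ near $1$, this rescaling only affects constants. The hard part will be tracking these constants cleanly so that the final bound genuinely reads $\|r\|_p \geq C_p \|r - r_\ell\|_p$ rather than merely $\geq (\text{const}) \|r - r_\ell\|_p$; I expect one needs to argue that the $p$-energy in the head of $r$ dominates, using \eqref{eqn:recovery-cond} to guarantee that $\|r\|_p$ is large compared to the $\ell_1$-driven tail estimate.

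For the second inequality, $\|r\|_p \geq C \ell^{1/p - 1}\|r\|_1$, I would again start from the recovery condition $\|r\|_p > k^{1/p-1}\|x - x_k\|_1$ and compare $\|r\|_1$ against $\|x - x_k\|_1$. The natural step is to split $r = x - a$ into its behavior on the top-$k$ coordinates and the rest, bound $\|x - x_k\|_1$ below by a constant multiple of $\|r\|_1$ using the fact that $a$ has $O(k)$ terms (so it cannot cancel too much of the tail $x - x_k$), and then convert $k^{1/p-1}$ to $\ell^{1/p-1}$ exactly as above.

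I expect the main obstacle to be the bookkeeping that passes from the hypothesis stated in terms of $x$, $x_k$, and $k$ to the conclusion stated in terms of the residual $r$ and the approximation length $\ell$; the inequalities themselves are elementary, but ensuring that the constant $C_p$ in the first bound emerges with exactly the right shape — and that the absolute constant $C$ in the second bound does not secretly depend on $p$ — requires care in how the recovery condition and Lemma~\ref{lemma:head} are chained together.
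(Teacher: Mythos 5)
Your plan for the second bound rests on a step that is false: you propose to bound $\|x - x_k\|_1$ from below by a constant multiple of $\|r\|_1$ on the grounds that an $O(k)$-term approximation $a$ ``cannot cancel too much of the tail.'' No such inequality holds --- $a$ need not cancel anything to make $\|r\|_1 = \|x-a\|_1$ arbitrarily large while $\|x - x_k\|_1$ stays fixed (it is $0$ whenever $x$ is exactly $k$-sparse). The paper's argument is additive rather than multiplicative and uses two ingredients absent from your proposal. First, it identifies the head of the residual as $r_\ell = a - x_k$ (a vector with at most $\ell + k = O(k)$ terms, absorbed into the symbol $\ell$), so that $r - r_\ell = x - x_k$ \emph{exactly} and \eqref{eqn:recovery-cond} becomes $\|r\|_p > \ell^{1/p-1}\|r - r_\ell\|_1$. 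Second, for the head it uses the H\"older (``Cauchy--Schwarz'') bound for an $\ell$-sparse vector, $\|r\|_p \ge \|r_\ell\|_p \ge \ell^{1/p-1}\|r_\ell\|_1$. Adding these two inequalities and using $\|r\|_1 = \|r_\ell\|_1 + \|r - r_\ell\|_1$ gives $2\|r\|_p \ge \ell^{1/p-1}\|r\|_1$, i.e.\ the second bound with $C = 1/2$. The identification $r - r_\ell = x - x_k$ is precisely what replaces the comparison of $\|x - x_k\|_1$ with $\|r\|_1$ that you were hoping for; since your route to ``$\|r\|_p \ge \ell^{1/p-1}\|r\|_1$'' also underlies your proof of the first bound, the gap infects both parts.

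On the first bound itself: reducing it to the second bound plus Lemma~\ref{lemma:head} applied to $r$ is a legitimate reshuffling of the paper's route (the paper instead applies Lemma~\ref{lemma:head} to the tail $r - r_\ell$ and chains it with the displayed consequence of \eqref{eqn:recovery-cond}). However, you have inverted the constant of Lemma~\ref{lemma:head}: that lemma reads $\|g - g_t\|_p \le C_p\, t^{1/p-1}\|g\|_1$ with $C_p = (1/(p-1))^{1/p} \ge \log n$ multiplying the right-hand side directly, not as $C_p^{-1}$. Read correctly, your chain yields only $\|r\|_p \ge C_p^{-1}\|r - r_\ell\|_p$, a factor of $C_p^2$ short of the stated $\|r\|_p \ge C_p \|r - r_\ell\|_p$. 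No amount of constant bookkeeping will close that gap: for $p = 1 + 1/\log n$ the $\ell_1$ and $\ell_p$ norms of any vector in $\Rspace{n}$ agree up to a factor of $2$, so the hypothesis $\|r\|_p > \ell^{1/p-1}\|r-r_\ell\|_1$ can only force $\|r\|_p \gtrsim \|r - r_\ell\|_p$ up to an absolute constant, never a $\log n$ separation. (The paper's own invocation of Lemma~\ref{lemma:head} commits the same reciprocal slip, so you should not expect the factor $C_p$, as opposed to $1/C_p$ or an absolute constant, to emerge from this chain.)
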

\begin{proof}
  Observe that because the approximation $\vct{a}$ contains no more
  than $\ell$ terms in each iteration, the number of terms in the head
  of the residual $r_\ell = \vct{a} - \vct{x}_k$ is not more than
  $\ell + k = O(k)$.  Rather than keep careful track of the exact
  number of terms in each approximation and each residual, we will, by
  abuse of notation, suppress all constants and refer to the number of
  terms as $O(k)=\ell$.  Therefore, \eqref{eqn:recovery-cond} implies
  that
\begin{equation}
\| r \|_p = \| x - a \|_p > k^{1/p - 1} \| x - x_k \|_1 = k^{1/p-1} \| (x - a) + (a - x_k)\|_1 
       = \ell^{1/p-1} \| r - r_{\ell}\|_1.
\label{eqn:tailest}       
\end{equation}
Now, apply Lemma~\ref{lemma:head} to the residual tail $r - r_{\ell}$ with $t = \ell$ to obtain
\[  \|r - r_{\ell}\|_1 \geq  C_p \, \ell^{1-1/p} \| r - r_{\ell}\|_p.
\]
If we combine this relation with our previous calculation, we find
\[  \|r \|_p \geq C_p \|r - r_\ell\|_p
\]
which is our first desired inequality.

To prove the second inequality, we use the Cauchy-Schwarz inequality to see that
\[  \| r \|_p \geq \| r_\ell \|_p \geq \ell^{1/p - 1} \| r_\ell \|_1.
\]
Finally, we add this inequality to \eqref{eqn:tailest} to obtain
\[  \| r \|_p \geq C \ell^{1/p-1} \| r \|_1.
\]
\end{proof}

Now, we turn to the identification of significant signal entries.  We
pull these off in bands of decreasing magnitude.  For exposition, we
assume that $\|\vct{g}\|_1 = 1$.  We can think of the action of one
submatrix $\mtx{S}^{(j)}_t$ as
$$
\mtx{S}^{(j)}_t : \vct{g} \mapsto
\begin{bmatrix} \vct{h}^{1} & \vct{h}^{2} & \dots & \vct{h}^{N}
\end{bmatrix},
$$
mapping each input signal to a collection of output signals.
\begin{lemma}[Isolation]
  Let $\vct{g}$ be a signal and let $I$ be a subset of $\{1, 2,
  \ldots, n\}$ with $s \leq |I| < 2s$.  Let $\ell = |I|$.  We apply
  the sifting operator $\mtx{S}^{(j)}_s$ to $\vct{g}$.  For at least
  $(1-\rho') \ell=(1 - (4 \epsilon' + 1/(2d^{1/p}))$ of the indices $i
  \in I$, there is an output signal $\vct{h}$ of the form
\[   \vct{h} = \vct{g}_i \vct{e}_i + \nu \qquad\text{and}\qquad \|\nu\|_1 \leq \frac{2}{\ell} \|\vct{g}\|_1.
\]
\label{lem:isolation}
\end{lemma}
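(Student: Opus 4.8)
The plan is to show that all but a $\rho'$-fraction of the indices $i \in I$ fail to be isolated for one of exactly two reasons, and to bound each failure mode separately: collisions of the large coordinates among themselves (controlled by expansion, contributing the $4\epsilon'$ term) and excessive tail noise in a candidate bucket (controlled by an averaging/Markov argument, contributing the $\frac{1}{2d^{1/p}}$ term). Throughout I would identify each ``output signal'' $\vct{h}^j$ with the content of the right vertex (bucket) $j$ of the sifting expander, namely the restriction of $\vct{g}$ to $N^{-1}(j)$, the left-neighbors of $j$; the bit-test augmentation lets us read off $\vct{g}_i\vct{e}_i$ together with whatever else landed in that bucket.

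First I would control collisions inside $I$. Since $\mtx{S}^{(j)}_s$ is the adjacency matrix of an expander that is good at the scale $\ell = |I| < 2s$, we have $|N(I)| \ge (1-\epsilon')d\ell$, so the ``deficiency'' $d\ell - |N(I)|$ is at most $\epsilon' d\ell$. A standard counting argument (a bucket receiving $c_j \ge 2$ edges from $I$ contributes $c_j \le 2(c_j-1)$ colliding edges against $c_j-1$ to the deficiency) then shows that the number of edges out of $I$ sharing a right endpoint with another edge out of $I$ is at most $2\epsilon' d\ell$. Consequently the number of indices $i \in I$ for which at least half of the $d$ incident edges collide is at most $\frac{2\epsilon' d\ell}{d/2} = 4\epsilon'\ell$. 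Every remaining index $i$ possesses a set $P_i$ of more than $d/2$ \emph{private} buckets---adjacent to $i$ but to no other element of $I$---and for such a bucket the residual $\nu = \vct{h}^j - \vct{g}_i\vct{e}_i$ is supported entirely on $I^c$, the tail.

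Next I would bound the tail noise by averaging. The total tail mass deposited across all buckets is $\sum_j \sum_{i' \in N^{-1}(j)\cap I^c}|\vct{g}_{i'}| = d\,\pnorm{1}{\vct{g}_{I^c}} \le d\,\pnorm{1}{\vct{g}}$, since each tail coordinate has degree $d$; carrying the $d^{-1/p}$ normalization of the sifting matrix renders this $d^{1-1/p}\pnorm{1}{\vct{g}}$ in the scale of the output signals. Because the sets $\{P_i\}$ are pairwise disjoint and each has size exceeding $d/2$, summing the smallest tail mass over each $P_i$ gives $\sum_i \min_{j\in P_i}\pnorm{1}{\nu^j} \le \frac{2}{d}\cdot d^{1-1/p}\pnorm{1}{\vct{g}} = 2 d^{-1/p}\pnorm{1}{\vct{g}}$. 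Markov's inequality against the target threshold $\frac{2}{\ell}\pnorm{1}{\vct{g}}$ then bounds by $\frac{\ell}{2d^{1/p}}$ the number of otherwise-separated indices whose best private bucket is still too noisy. Adding the two failure counts yields at most $(4\epsilon' + \frac{1}{2d^{1/p}})\ell = \rho'\ell$ bad indices; for each remaining index its best private bucket supplies an output signal $\vct{h} = \vct{g}_i\vct{e}_i + \nu$ with $\pnorm{1}{\nu} \le \frac{2}{\ell}\pnorm{1}{\vct{g}}$.

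The step I expect to be the main obstacle is the careful bookkeeping of the $d^{-1/p}$ normalization: it is precisely this factor that must shrink the crude total tail mass $d\,\pnorm{1}{\vct{g}}$ enough for Markov to produce the $\frac{1}{2d^{1/p}}$ term rather than a vacuous bound, while simultaneously leaving the isolated coefficient reading off as $\vct{g}_i$. A secondary subtlety is that $|I|$ may slightly exceed the expander's nominal parameter $s$ (up to $2s-1$); I would either invoke expansion at the $2s$ scale directly or verify that the deficiency (hence collision) bound survives at this scale, since the collision count is all the argument actually consumes.
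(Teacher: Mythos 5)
Your proof follows essentially the same two-part strategy as the paper's: expansion/collision counting shows that all but $4\epsilon'\ell$ indices of $I$ retain more than $d/2$ private buckets, and the $1\to 1$ operator norm $d^{1-1/p}$ of the normalized sifting matrix combined with a Markov/averaging argument controls how many of those indices have every private bucket carrying noise above $\tfrac{2}{\ell}\|g\|_1$. You spell out the collision count that the paper merely asserts from "the expansion properties," and the factor-of-two slack in your final tally ($\ell/d^{1/p}$ versus the stated $\ell/(2d^{1/p})$) mirrors the same looseness in the paper's own accounting, so nothing essential is missing.
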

\begin{proof}
To prove this result, we must show a sequence of shorter results.  Our goal is to isolate significant spikes from one another and to reduce the contribution of the rest of the signal to these isolations.
\begin{defn}
  An $(s,r)$ band is a band of spikes of magnitude between $1/r$ and
  $1/(2r)$.  There is some $s$ (a power of 2) such that the number of
  spikes in the band is between $s$ and $2s$.  If the
  $p$-contribution, $sr^{-p}$, of this band to the current residual
  error $\| \vct{x} - \vct{a}\|_p^p$ is greater than the average
  band's contribution, $\frac{1}{\log n} \Bigg( k^{1/p-1} \| \vct{x} -
  \vct{a}\|_p \Bigg)^p$; i.e.,
  \[ s r^{-p} \geq \frac{1}{\log n} \Bigg( k^{1/p-1} \| \vct{x} -
  \vct{a}\|_p \Bigg)^p
  \]
  then we call this band {\em significant}.
\end{defn}
The next lemma tells us how many spikes lie above a significant band.  
\begin{lemma}
  If an $(s,r)$ band is significant, then there are at most $s
  \polylog(n)$ terms of magnitude greater than $1/r$ in the current
  residual.
\end{lemma}
\begin{proof}
  If the $s$ spikes of magnitude $1/r$ have $p$-energy at least as big
  as $s'$ spikes of magnitude $1/r' > 1/r$, then
\[   s(1/r)^p \geq s' (1/r')^p > s' (1/r)^p
\]
and so the number of larger spikes must be less than $s$, $s' \leq s$,
as $p > 1$.  Because there are only $\log n$ possible $s'$, if we take
a union over all $s'$, we achieve the desired result.
\end{proof}

\begin{defn}
  We say that a spike with index $i$ is {\em isolated} from other
  spikes with indices in a set $I$ by a matrix $\mtx{\Gamma}$ if there
  is a row in $\mtx{\Gamma}$ that has a one in column $i$ and no other
  one in any column in $I$.
\end{defn}

We are ready to begin the proof of Lemma~\ref{lem:isolation}.  Our
goal is to show that $\mtx{\Gamma} = \mtx{S}^{(j)}_s$ isolates from
each other the vast majority of spikes of magnitude $\geq 1/r$, and,
hence, isolates the majority of those with magnitude equal to $1/r$
from the larger spikes.

By the expansion properties of $\mtx{\Gamma}$, at least $(1 -
4\epsilon') \ell$ distinct spikes in the original signal are isolated
$d/2$ times in the measurements.
% First, let us assume that $\vct{g}$ is exactly $\ell$-sparse.  By
% the expansion properties of $\mtx{\Gamma}$, at most $\epsilon' d
% \ell$ rows of $\mtx{\Gamma} \vct{g}$ fail to isolate a spike.  That
% is, $\mtx{\Gamma}$ has exactly $(1-\epsilon') d \ell$ isolating rows
% for $\vct{g}$.  ACG: this is the same argument as used in the
% previous section.
Next, we show that $\mtx{\Gamma}$ not only isolates spikes but also
reduces the $\ell_1$ norm of any noise, if it is present.  We will
argue that few of the good output signals have a lot of noise.  Let
$u_m$ denote the magnitude of the $m$th output position.  We observe
that the $1 \to 1$ operator norm of the matrix $\mtx{\Gamma}$ is
$d^{1-1/p}$.  By Markov's inequality,
\begin{align*}
\#    \Big\{ m \,\Big|\, u_m \geq \frac{2}{\ell} \| \vct{g} \|_1 \Big\}  &\leq  
                 \frac{\ell}{2 \|\vct{g}\|_1} \sum u_m \\
            &= \frac{\ell}{2 \|\vct{g}\|_1} d^{1-1/p} \|\vct{g}\|_1 \\
            &= \frac{d^{1 - 1/p} \ell}{2}.
\end{align*}
Therefore, there are at least $(1-\rho) d \ell/2$ isolating rows and
no more than $\frac{d \ell}{2} d^{-1/p}$ of the rows are corrupted by
noise with large $\ell_1$ norm.  This leaves $d \ell/2 ((1-4 \epsilon') -
\frac{1}{2 d^{1/p}}) = (1-\rho') d \ell/2$ good rows with a single
distinct spike and a small amount of noise.  With judicious choice of
unbalanced expander parameters, we have $(1 - \rho')\ell$ isolated spikes
corrupted by a small amount of noise.
\end{proof}

The isolation lemma~\ref{lem:isolation} produces signals which consist
of single isolated spikes plus some noise.  In order to apply
bit-testing accurately, we must reduce the $\ell_1$ norm of the noise
from a factor $1/s$ of the original $\ell_1$ norm by a further factor
$s/r$ (i.e., to a factor $1/r$ of the original noise).  Assume, for
exposition, that the original $\ell_1$ norm is one.  To achieve this
reduction, we use a noise reduction matrix $\mtx{R}^{(j)}_r$.
\begin{lemma}[Noise reduction]
  Let $\vct{h} = \vct{g}_i \vct{e}_i + \nu$ with $\|\nu\|_1 \leq 1/s$.
  We apply the noise reduction operator $\mtx{R} = \mtx{R}^{(j)}_r$ to
  $\vct{h}$.  In at least half of the output signals, we have signals
  of the form
\[   \vct{\delta} + \vct{\nu} \qquad\text{with}\qquad \|\nu\|_1 \leq C/r.
\]
\label{lem:noisereduce}
\end{lemma}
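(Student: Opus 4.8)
The plan is to exploit the combinatorial structure of the Nisan--Wigderson generator underlying $\mtx{R}=\mtx{R}^{(j)}_r$. Recall that its rows are indexed by pairs $(a,b)$ of elements of $\mathbb{F}_\beta$ with $\beta\approx r/s$, its columns by polynomials of degree at most $t=\polylog(n)$, and the $((a,b),q)$ entry is $1$ iff $q(a)=b$. Two facts drive the argument: (i) each column $q$ has exactly $\beta$ ones, namely the rows $(a,q(a))$ for $a\in\mathbb{F}_\beta$; and (ii) two distinct columns $q\neq q'$ share at most $t$ common ones, since $q(a)=q'(a)$ has at most $\deg\le t$ solutions. As in the isolation lemma, I would regard the action of $\mtx{R}$ on $\vct{h}$ as producing one output signal per row, obtained by restricting $\vct{h}$ to the support of that row.

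First I would locate the column $q_i$ indexing the spike coordinate $i$ and consider the $\beta$ output signals attached to the rows $(a,q_i(a))$, $a\in\mathbb{F}_\beta$. By fact (i) each of these rows has a $1$ in column $i$, so every such output contains the spike term $\vct{g}_i\vct{e}_i$ exactly; this is the $\vct{\delta}$ in the statement. The residual noise in the output for row $(a,q_i(a))$ is the sum of $\nu_{i'}$ over those $i'\neq i$ whose column $q_{i'}$ also has a $1$ there, i.e.\ over $i'$ with $q_{i'}(a)=q_i(a)$.

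Next I would bound the \emph{total} noise aggregated over all $\beta$ spike-bearing rows. Summing and interchanging the order of summation, this total equals $\sum_{i'\neq i}|\nu_{i'}|\cdot\bigl|\{a:q_{i'}(a)=q_i(a)\}\bigr|$, and by fact (ii) each inner cardinality is at most $t$. Hence the aggregate noise is at most $t\,\pnorm{1}{\nu}\le t/s$. A Markov/averaging step then finishes the proof: since $\beta$ nonnegative numbers sum to at most $t/s$, fewer than $\beta/2$ of them can exceed $2t/(s\beta)$, so at least half of the $\beta$ spike-bearing outputs carry noise of $\ell_1$ norm at most $2t/(s\beta)$. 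Substituting $\beta\approx r/s$ gives the per-output bound $2t/r$, and absorbing the degree $t=\polylog(n)$ into the constant $C$ yields $\pnorm{1}{\nu}\le C/r$, as claimed.

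I expect the main obstacle to be setting up fact (ii) cleanly---translating the degree bound on the Nisan--Wigderson polynomials into the pairwise column-intersection bound---and verifying that the spike coordinate's own column is precisely one of the $\beta$ rows, so that the spike survives undamaged in each relevant output. Once the exact column weight $\beta$ and the intersection bound $t$ are in hand, the noise estimate reduces to a one-line aggregation followed by Markov's inequality, and the only remaining bookkeeping is the substitution $\beta\approx r/s$ together with absorbing the polylogarithmic degree into $C$.
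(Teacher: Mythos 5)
Your proposal is correct and follows essentially the same route as the paper: the paper also restricts to the $\beta$ rows containing a one in column $i$, deletes column $i$, observes that the resulting submatrix has at most $\polylog(n)$ ones per column (your pairwise polynomial-intersection bound, phrased as a $1\to 1$ operator-norm bound), and then applies the same averaging/Markov step over the $\beta$ rows to conclude that at least half carry noise of $\ell_1$ norm $O(\polylog(n)/r)$, absorbed into $C$ (or into $\beta$). Your write-up merely makes explicit the algebraic source of the column-weight bound, which the paper states without elaboration.
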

\begin{proof}
  Consider the submatrix $\mtx{R'}$ of $\mtx{R}$ obtained by
  extracting the $\beta$ rows of $\mtx{R}$ that have a 1 in column $i$ and
  then removing column $i$ itself.  By construction, $\mtx{R}'$ has at
  most $\polylog(n)$ ones in any column, so its 1-to-1 operator norm
  is at most $\polylog(n)$.  Thus the average over rows of $\mtx{R}'$
  of $\pnorm{1}{\mtx{R}'\nu}$ is at most $\frac{\polylog(n)}{\beta
    s}\approx 1/r\polylog(n)$, or at most $1/2r$ if we incorporate extra
  log factors into $\beta$.

  In addition, the number of rows in the matrix $\mtx{A}^{(j)}_{r,s}$
  is $s\cdot\max(1,(r/s)^2)$, where $s\le k$ and $sr^{-p}\ge k^{p-1}$.
  If $s/r \ge 1$, the number of rows is $s\le k$.  If $s/r \le 1$, the
  number of rows is $r^2/s$.  Observe that $sr^{-p}\ge k^{1-p}$ and
  $r/s\ge 1$ imply $1/r\ge 1/k$.  Since $sr^{-p}\ge k^{1-p}$, it
  follows that
  \[ \frac{r^2}{s} = \frac{r^p}{s r^{p-2}} \leq
  \frac{k^{p-1}}{r^{p-2}} \leq k^{2-p} k^{p-1} = k.
\]
\end{proof}

The estimation portion of the algorithm is exactly the same as the HHS
algorithm.  Let $L$ be the set of identified candidates, $|L|\le K$.
By the RIP-$p$ property, the submatrix of $\Phi$ given by columns indexed
by $L$ is invertible and the inverse $\Phi_L^{+}$ is bounded in the
appropriate operator norm.  we can apply $\Phi_L^{+}$ to $\Phi x$ by
Jacobi iteration in time $O(K^2 \polylog(n))$ to estimate the
coefficients in $L$.  In the exact case of $x_{\overline L}=0$, we
recover $x$ exactly.  Otherwise, the RIP-$p$ property assures that the
$p$-norm of the vector of estimates is bounded in terms of the
$p$-norm of $x-x_{K}$; i.e., this estimation process introduces errors
that are small compared with the error associated with missing the
$n-K$ other terms in $x$.  Its proof is a small modification of the
proof in~\cite{GSTV07:HHS}, which itself was originally proven by
Rudelson.
\begin{lemma}
  Suppose that $\Ph$ is an $m \times n$ RIP$(p,K,\delta)$ matrix.
  Then the following holds for every vector $x \in \R^n$.
  \[ \| \Ph x\|_p \leq (1 + \delta) \Big( \|x\|_p + K^{1/p - 1} \| x
  \|_1 \Big).
  \]
\label{lem:kp-bound}
\end{lemma}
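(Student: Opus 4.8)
The plan is to bootstrap the sparse guarantee of the RIP-$p$ property up to arbitrary vectors via a standard block-decomposition argument. First I would sort the coordinates of $x$ by magnitude and partition them into consecutive blocks $x^{(0)}, x^{(1)}, x^{(2)}, \dots$, where $x^{(0)}$ holds the $K$ largest entries, $x^{(1)}$ the next $K$ largest, and so on (the final block may contain fewer than $K$ entries). Each block $x^{(j)}$ is $K$-sparse, so the upper RIP-$p$ bound gives $\|\Ph x^{(j)}\|_p \le (1+\delta)\|x^{(j)}\|_p$. Applying the triangle inequality for $\|\cdot\|_p$ to $x = \sum_{j\ge 0} x^{(j)}$ yields $\|\Ph x\|_p \le (1+\delta)\sum_{j\ge 0}\|x^{(j)}\|_p$, so it remains to control the block sum $\sum_{j\ge0}\|x^{(j)}\|_p$.

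Next I would separate the head block from the tail blocks. The head contributes $\|x^{(0)}\|_p \le \|x\|_p$ trivially, since it is a coordinate restriction of $x$. For each tail block $x^{(j)}$ with $j\ge 1$, every entry of $x^{(j)}$ is no larger in magnitude than the smallest entry of the preceding block $x^{(j-1)}$, which in turn is at most the average magnitude $\tfrac1K\|x^{(j-1)}\|_1$. Since $x^{(j)}$ has at most $K$ entries, this gives $\|x^{(j)}\|_p^p \le K\big(\tfrac1K\|x^{(j-1)}\|_1\big)^p = K^{1-p}\|x^{(j-1)}\|_1^p$, i.e.\ $\|x^{(j)}\|_p \le K^{1/p-1}\|x^{(j-1)}\|_1$. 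This ``shift'' inequality, relating the $\ell_p$ norm of one block to the $\ell_1$ norm of the block above it, is the crux of the argument.

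Finally I would sum the tail estimates and telescope: $\sum_{j\ge1}\|x^{(j)}\|_p \le K^{1/p-1}\sum_{j\ge1}\|x^{(j-1)}\|_1 = K^{1/p-1}\sum_{j\ge0}\|x^{(j)}\|_1 = K^{1/p-1}\|x\|_1$, where the last equality uses that the blocks partition the coordinates of $x$. Combining the head and tail bounds gives $\sum_{j\ge0}\|x^{(j)}\|_p \le \|x\|_p + K^{1/p-1}\|x\|_1$, and substituting this into the triangle-inequality bound above proves the lemma. The only step demanding care is the shift inequality of the second paragraph, where one must correctly pass from the per-coordinate magnitude bound to the $\ell_p$ norm of the whole block and keep track of the exponent $1/p-1<0$; the remaining steps are a routine telescoping sum.
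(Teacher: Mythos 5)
Your proof is correct. The paper itself does not spell out a proof of Lemma~\ref{lem:kp-bound}; it only remarks that the argument is a small modification of one in the HHS paper, originally due to Rudelson --- and that argument is exactly the block decomposition you give: sort, split into $K$-sparse blocks, apply the upper RIP-$p$ bound blockwise, and use the shift inequality $\|x^{(j)}\|_p \le K^{1/p-1}\|x^{(j-1)}\|_1$ (valid because every block with a successor has exactly $K$ entries) to telescope the tail into $K^{1/p-1}\|x\|_1$. All steps check out, including the exponent bookkeeping.
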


For completeness, we conclude this section by showing that a mixed
norm error bound of the form $\ell_p \leq C/k^{1-1/p} \ell_1$ gives us
an $\ell_1$ error bound $\ell_1 \leq C \ell_1$.
\begin{thm}
Suppose
\[\pnorm{p}{\widetilde x-x}\le \epsilon k^{1/p-1}\pnorm{1}{x_k-x}.\]
Then
$\pnorm{p}{\widetilde x_k-x}\le \pnorm{p}{x_k-x}
   +2\epsilon k^{1/p-1}\pnorm{1}{x_k-x}$ and
$\pnorm{1}{\widetilde x_k-x}\le (1+3\epsilon)\pnorm{1}{x_k-x}$.
\end{thm}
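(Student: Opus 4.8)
The plan is to reduce both inequalities to two elementary facts. First, $\widetilde{x}_k$ is by construction the best $k$-term approximation to $\widetilde{x}$ in every monotone norm, since it keeps the $k$ largest magnitudes and zeros the rest. Second, H\"older's inequality gives $\|v\|_1 \le k^{1-1/p}\|v\|_p$ for any vector $v$ supported on at most $k$ coordinates. Combining the second fact with the hypothesis produces the single estimate I will reuse repeatedly: for any index set $S$ with $|S|\le k$,
\[
\|(\widetilde{x}-x)_S\|_1 \le k^{1-1/p}\|(\widetilde{x}-x)_S\|_p \le k^{1-1/p}\cdot\epsilon k^{1/p-1}\|x_k-x\|_1 = \epsilon\|x_k-x\|_1,
\]
where the factor $k^{1-1/p}$ exactly cancels the $k^{1/p-1}$ in the hypothesis.

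For the $\ell_p$ inequality I would simply chain triangle inequalities, inserting $\widetilde{x}$ and $x_k$. Since $x_k$ is $k$-sparse and $\widetilde{x}_k$ minimizes the $\ell_p$ distance to $\widetilde{x}$ over all $k$-sparse vectors,
\[
\|\widetilde{x}_k - x\|_p \le \|\widetilde{x}_k-\widetilde{x}\|_p + \|\widetilde{x}-x\|_p \le \|\widetilde{x}-x_k\|_p + \|\widetilde{x}-x\|_p \le \|x_k-x\|_p + 2\|\widetilde{x}-x\|_p,
\]
and the hypothesis bounds $2\|\widetilde{x}-x\|_p$ by $2\epsilon k^{1/p-1}\|x_k-x\|_1$, which is the first claim.

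The $\ell_1$ inequality is where I expect the real work. Let $T = \operatorname{supp}(\widetilde{x}_k)$ be the top-$k$ set of $\widetilde{x}$ and $\Lambda = \operatorname{supp}(x_k)$ the top-$k$ set of $x$. Because $\widetilde{x}_k$ equals $\widetilde{x}$ on $T$ and vanishes off $T$, splitting coordinates gives
\[
\|\widetilde{x}_k-x\|_1 = \|(\widetilde{x}-x)_T\|_1 + \|x_{T^c}\|_1,
\]
and the first term is at most $\epsilon\|x_k-x\|_1$ by the reused estimate (since $|T|=k$). The obstacle is the second term: because $T$ is the top-$k$ set of $\widetilde{x}$ rather than of $x$, I cannot identify $\|x_{T^c}\|_1$ with $\|x_{\Lambda^c}\|_1 = \|x-x_k\|_1$. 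Instead I would write $\|x_{T^c}\|_1 = \|x_{\Lambda^c}\|_1 + (\|x_\Lambda\|_1-\|x_T\|_1)$ and bound the swapping penalty $\|x_\Lambda\|_1-\|x_T\|_1$.

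To control the penalty I would interpolate through $\widetilde{x}$:
\[
\|x_\Lambda\|_1-\|x_T\|_1 = \big(\|x_\Lambda\|_1-\|\widetilde{x}_\Lambda\|_1\big) + \big(\|\widetilde{x}_\Lambda\|_1-\|\widetilde{x}_T\|_1\big) + \big(\|\widetilde{x}_T\|_1-\|x_T\|_1\big).
\]
The middle bracket is nonpositive, since $T$ collects the $k$ largest magnitudes of $\widetilde{x}$, so $\|\widetilde{x}_T\|_1\ge\|\widetilde{x}_\Lambda\|_1$. Each outer bracket is bounded in absolute value by $\|(\widetilde{x}-x)_\Lambda\|_1$ and $\|(\widetilde{x}-x)_T\|_1$ respectively, both at most $\epsilon\|x_k-x\|_1$ by the reused estimate. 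Hence $\|x_\Lambda\|_1-\|x_T\|_1 \le 2\epsilon\|x_k-x\|_1$, so $\|x_{T^c}\|_1 \le (1+2\epsilon)\|x_k-x\|_1$, and adding the first term yields $\|\widetilde{x}_k-x\|_1 \le (1+3\epsilon)\|x_k-x\|_1$. The one point I would watch carefully is that every restriction fed into the H\"older comparison is genuinely supported on at most $k$ coordinates, which is exactly what makes the cancellation of the powers of $k$ clean.
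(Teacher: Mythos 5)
Your proof is correct. The $\ell_p$ half is essentially the paper's argument in streamlined form: both rest on the triangle inequality plus the optimality of $\widetilde x_k$ as a $k$-term approximant of $\widetilde x$, though the paper routes everything through the restriction to $H=\supp{x_k}\cup\supp{\widetilde x_k}$ and you avoid that restriction entirely, which is cleaner. The $\ell_1$ half is where you genuinely diverge. The paper splits over the $2k$-element set $H$, applies the same optimality step, and then uses a single H\"older (``Cauchy--Schwarz'' in their wording) comparison on $H$, picking up the constant $2\cdot(2k)^{1-1/p}\cdot k^{1/p-1}=2^{2-1/p}$, which is below $3$ only because $p=1+1/\log n$ is close to $1$ (it would fail for $p>2.41$ or so). You instead use the exact identity $\|\widetilde x_k-x\|_1=\|(\widetilde x-x)_T\|_1+\|x_{T^c}\|_1$ on the $k$-element set $T=\supp(\widetilde x_k)$ and control the ``swapping penalty'' $\|x_\Lambda\|_1-\|x_T\|_1$ by interpolating through $\widetilde x$, where the middle term is nonpositive by the top-$k$ property of $T$; this costs three H\"older applications on $k$-sets, each contributing exactly $\epsilon$, so you get the constant $3$ uniformly for all $p\ge 1$. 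The trade-off: the paper's proof is shorter and treats both norms with one template; yours is slightly longer but tighter in its dependence on $p$ and makes explicit where the error comes from (the mismatch between the top-$k$ sets of $x$ and $\widetilde x$). Both are valid; your edge-case bookkeeping (all H\"older steps on supports of size at most $k$, reverse triangle inequality termwise for the outer brackets) checks out.
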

\begin{proof}
Let $H$ denote $\supp{x_k}\cup\supp{\widetilde x_k}$, so that
$|H|\le 2k$.  Let $H^{{\rm c}}$ denote the complement of $H$.
\begin{eqnarray*}
\pnorm{p}{\widetilde x_k-x}
& \le & \pnorm{p}{(\widetilde x_k-x)|_H}
       +\pnorm{p}{(\widetilde x_k-x)|_{H^{{\rm c}}}}\\
& \le & \pnorm{p}{(\widetilde x_k-\widetilde x)|_H} + \pnorm{p}{(\widetilde x-x)|_H}
       +\pnorm{p}{(\widetilde x_k-x)|_{H^{{\rm c}}}}\\
& \le & \pnorm{p}{(x_k-\widetilde x)|_H} + \pnorm{p}{(\widetilde x-x)|_H}
       +\pnorm{p}{(\widetilde x_k-x)|_{H^{{\rm c}}}}\\
& \le & \pnorm{p}{x_k-x} + 2\pnorm{p}{(\widetilde x-x)|_H}\\
& \le & \pnorm{p}{x_k-x} + 2\pnorm{p}{\widetilde x-x}\\
& \le & \pnorm{p}{x_k-x} + 2\epsilon k^{1/p-1}\pnorm{1}{x_k-x}.
\end{eqnarray*}

Similarly,

\begin{eqnarray*}
\pnorm{1}{\widetilde x_k-x}
&  =  & \pnorm{1}{(\widetilde x_k-x)|_H}
       +\pnorm{1}{(\widetilde x_k-x)|_{H^{{\rm c}}}}\\
& \le & \pnorm{1}{(\widetilde x_k-\widetilde x)|_H} + \pnorm{1}{(\widetilde x-x)|_H}
       +\pnorm{1}{(\widetilde x_k-x)|_{H^{{\rm c}}}}\\
& \le & \pnorm{1}{(x_k-\widetilde x)|_H} + \pnorm{1}{(\widetilde x-x)|_H}
       +\pnorm{1}{(\widetilde x_k-x)|_{H^{{\rm c}}}}\\
& \le & \pnorm{1}{x_k-x} + 2\pnorm{1}{(\widetilde x-x)|_H}\\
& \le & \pnorm{1}{x_k-x} + 2(2k)^{1-1/p}\pnorm{p}{(\widetilde x-x)|_H}\\
& \le & \pnorm{1}{x_k-x} + 2(2k)^{1-1/p}\pnorm{p}{\widetilde x-x}\\
& \le & \pnorm{1}{x_k-x} + 2^{2-1/p}\epsilon\pnorm{1}{x-x_k}\\
& \le & (1+3\epsilon)\pnorm{1}{x_k-x}.
\end{eqnarray*}

\end{proof}

\end{document}